\documentclass[ASNA,onecolumn]{USGpersonal} 
\usepackage{anyfontsize} %

\usepackage{colortbl}   % For coloring table lines
\usepackage{xcolor}     % For color definitions

\usepackage{steinmetz}
\startpage{1}
\begin{document}
\title{Inferential applications of the moments of the logit-normal distribution}
%\subtitle{Subtitle}
\transtitle{Inferential applications of the moments of the logit-normal distribution}
\subtranstitle{trans-subtitle}
\author[1]{John B. Holmes} 
\author[1]{Ness Arps}
\author[1]{Marco Reale}

\authormark{Holmes, Arps and Reale}
\titlemark{Inferential applications of the moments of the logit-normal distribution5}

\address[1]{\orgdiv{Department of Mathematics and Statistics, }\orgname{University of Canterbury, }%
	\orgaddress{\state{Canterbury, }\country{New Zealand}}}

\corres{John Holmes  (email: \email{john.holmes@canterbury.ac.nz}, orcid: \url{https://orcid.org/0009-0001-6687-7505})}

%\editor{\textbf{Academic Editor:}  ~|~ \textbf{Guest Editor:} }

\presentaddress{}

%\fundingInfo{None}

\keywords{Logit-normal distributions | Moment estimation |  Expectation propagation }

\transkeywords{Logit-normal distributions | Moment estimation |  Expectation propagation}

\abstract[ABSTRACT]{Despite the implicit appearance of logit-normal random variables in many inferential problems, the logit-normal distribution is poorly studied. Most frustratingly, no default method exists for finding logit-normal moments, which are often assumed analytically unknown. In this paper, we introduce a method for estimating logit-normal moments of any positive integer order, based on approximating the logistic function. We will show our method is highly accurate up to the $8^\text{th}$ moment, avoids the numerical instability observed with Mordell integral based approximations of the first moment \cite{Johnson1949, Holmes2022a}, and is faster than numerical integration in R. Focusing on two inferential applications, we will show our approximation methods are sufficiently accurate to enable faster implementation of Expectation Propagation \cite{Minka2001, Minka2001b} for logistic regression, but is not general enough to directly evaluate the logistic normal integral that appears in some logistic mixed models \cite{Crouch1990}.  }

 \transabstract[transABSTRACT]{Despite the implicit appearance of logit-normal random variables in many inferential problems, the logit-normal distribution is poorly studied. Most frustratingly, no default method exists for finding logit-normal moments, which are often assumed analytically unknown. In this paper, we introduce a method for estimating logit-normal moments of any positive integer order, based on approximating the logistic function. We will show our method is highly accurate up to the $8^\text{th}$ moment, avoids the numerical instability observed with Mordell integral based approximations of the first moment \cite{Johnson1949, Holmes2022a}, and is faster than numerical integration in R. Focusing on two inferential applications, we will show our approximation methods are sufficiently accurate to enable faster implementation of Expectation Propagation \cite{Minka2001, Minka2001b} for logistic regression, but is not general enough to directly evaluate the logistic normal integral that appears in some logistic mixed models \cite{Crouch1990}.}

%\abbr{5-FU, 5-fluorouracil; CFD, computational fluid dynamics; CH, channel; EFS, event-free survival; GBM, glioblastoma multiforme; OS, overall survival; PFS, progression-freesurvival; SD, standard deviation.}

\contributed{John Holmes proposed the study and potential applications. Ness Arps was involved with parameter testing. Marco Reale was involved in manuscript refinement.}

%\dedicated{Dedicated to Srivivasa Ramanujano on the occasion of his 125th birth anniversary.}

\copyright{This is an open access article under the terms of the \href{Creative Commons Attribution-NonCommercial}{Creative Commons Attribution-NonCommercial} License, which permits use, distribution and reproduction in any medium, provided the
original~work~is~properly cited and is not used for commercial purposes.
\\[5pt]
  ©  2024 The Author(s) \textit{AIChE Journal} published by Wiley Periodicals LLC on behalf of American Institute of Chemical Engineers.}

%\openaccessstatement

\maketitle
%\afterpage{\aftergroup\restoregeometry}

%\onecolumn
\section{Introduction}\label{Intro}

In Statistics, the logistic transformation is one of the most common used, foremost in the context of logistic regression for binary, binomial and multinomial data, but also implicitly in Bayesian versions of negative binomial regression. In this paper, we focus on problems where the inverse logit function has been applied to a normal random variable $X$, with realisation $x \sim \mathcal{N}(\mu,\sigma^2)$ and hence the random variable $P := (1+e^{-X})^{-1}$ is distributed logit-normal with parameters $\mu, \sigma^2$. In frequentist inference, such random variables, including powers of, appear in logistic random intercept models when the link function $\boldsymbol{\theta}_i$ in cluster $i$ has the form $\boldsymbol{\theta}_i ={\bf X}_i\boldsymbol{\beta} + {\bf u}_i, {\bf u} \sim \mathcal{N}(0,\sigma^2_u{\bf I})$. In this setting, logit-normal random variables appear in the marginal likelihood of $\boldsymbol{\beta}$ \cite{Stiratelli1984, Crouch1990}. Additionally, the (marginal) response probability in a frequentist treatment of any logistic mixed model is a mean of a logit-normal random variable. In Bayesian inference, logit-normal random variables, including powers of, appear implicitly in Variational Bayes \cite{Jaakkola1997a, Jaakkola1997, Jordan1999} and Expectation Propagation \cite{Minka2001, Minka2001b} algorithms used to determine parameters of approximate posteriors in logistic and negative binomial regression.  

Despite the need to find a collection of logit-normal moments in many inferential problems, such moments are poorly understood, and assumed analytically unknown. As a result, literature on the logit-normal distribution that discusses moments splits into disconnected strands. One strand gives up on attempting to find analytic formulae for any logit-normal moments, instead promoting the use of standard numerical integration techniques \cite{Lad2008, Wutzler2017}. The second, usually buried in textbooks, are approximations of the first moment based on similarity of the logit and probit link or between the normal and logistic distributions \cite{Demidenko2013, Nelder1989book}. The last, pioneered in \citet{Johnson1949, Johnson1949b} and updated in \citet{Holmes2022a}, starts from the expected value of a logit-normal random variable, $E(P)$, being a special case of the Mordell integral \cite{Mordell1933} from complex analysis, but restricted to the real numbers. While the Mordell integral based representation of $E(P)$ given in \citet{Holmes2022a} can be combined with other properties of the logit-normal distribution to determine higher order moments, in practice Mordell integral based estimates of $E(P)$, and thus higher moments, become numerically unstable when $|\mu/\sigma^2| \rightarrow \infty$. Moreover, the repeated application of the quotient rule from Calculus needed to find higher order moments from the Mordell integral representation of $E(P)$ rapidly becomes algebraically very complicated.   

In this paper, we change tack from \citet{Holmes2022a}, and no longer attempt to estimate moments starting from the Mordell integral representation of $E(P)$. Instead, we focus on constructing highly accurate approximations of the logistic function and its derivatives using an basis of exponential functions. This allows us to not just determine an approximation for $E(P)$ but also easily develop a proposition to find the $k^\text{th}$ non-central moment, $k \in \mathbb{Z}^+$.  We will show our approximations, unlike Mordell integral representations,  are numerically stable when $|\mu/\sigma^2| \rightarrow \infty$. We will show our method is faster to compute than standard numerical integration in the programming language R . However, we will also show the number of terms needed for accurate approximation increases with moment order. Unfortunately, in practice the number of moments our method can calculate has an upper bound  around 8. This limits the utility of our method for evaluation of the logistic-normal integral. Nevertheless, getting to the $8^\text{th}$ moment is more than sufficiently high, which we will illustrate with an example, for performing expectation propagation in logistic regression. 

\section{Inferential problems that require logit-normal moments}

To justify our interest in the logit-normal distribution, we will provide two illustrative inferential problems, one frequentist, the other Bayesian, where a collection of logit-normal moments are needed in order to estimate parameters. First though, we need to define the logit-normal distribution and some of its properties.

\subsection{The logit-normal distribution and some of its properties}\label{sec2}
A logit-normal distributed random variable $P$ with realisation $p \sim \text{logit}N(\mu,\sigma^2)$ has probability density (pdf),
\begin{equation} f(p|\mu,\sigma^2) = \frac{1}{\sigma p(1-p)\sqrt{2\pi}}e^{-\frac{(\text{logit}(p)-\mu)^2}{2\sigma^2}}, \quad p \in (0,1), \text{logit}(p) = \log\bigg(\frac{p}{1-p}\bigg) \label{eq1.o} \end{equation} 

To determine the moments $k = 1, 2, \ldots$, of $P$ we will work with the following representations of the $k^\text{th}$ moment,
\begin{eqnarray} E(P^k) &=& 
	\frac{1}{\sigma\sqrt{2\pi}}\int_{-\infty}^{\infty} (1+e^{-x})^{-r}e^{-\frac{z^2}{2}} e^{-\frac{(x-\mu)^2}{2\sigma^2}} {\rm d} x 
	= \frac{1}{\sqrt{2\pi}}\int_{-\infty}^{\infty} (1+e^{-\mu-\sigma z})^{-r}e^{-\frac{z^2}{2}} {\rm d}z, \label{eq4} \end{eqnarray}
where $x \sim N(\mu,\sigma^2), z \sim N(0,1)$. However in practice, to find higher order moments of $P$, $E(P^2), E(P^3), \ldots$, it is easier to use the recurrence relation given in \citet{Holmes2022a}. This is found by taking the derivative of $E(P^k)$ with respect to $\mu$, where $E(P^k)$ is in the form of (\ref{eq4}),  
\begin{eqnarray}
	\frac{{\rm d}E(P^k)}{{\rm d}\mu} &=& \frac{1}{\sqrt{2\pi}}\int_{-\infty}^\infty \frac{ {\rm d} (1+e^{-\mu - \sigma z})^{-k} }{{\rm d}\mu} e^{-\frac{z^2}{2}} {\rm d}z \nonumber \\
	&=& \frac{k}{\sqrt{2\pi}}\int_{-\infty}^\infty \{(1+e^{-\mu - \sigma z})^{-k}- (1+e^{-\mu - \sigma z})^{-(k+1)}\}e^{-\frac{z^2}{2}}  {\rm d}z \label{eq.der} \nonumber \\
	&=& k\{E(P^k) - E(P^{k+1})\}, \label{eq.der1}  \\
	E(P^{k+1})  &=& E(P^k) - \frac{1}{k}	\frac{{\rm d}E(P^k)}{{\rm d}\mu}. \label{eq.der2}		
\end{eqnarray}
Thus the main challenge for determining logit-normal moments is evaluating the first moment or expected value in a form such that applying the recurrence relation in (\ref{eq.der2}) is straightforward for any choice of $k$.

\subsection{A frequentist problem: Some cases of the logistic mixed model}\label{GLMM}

In order to perform inference in frequentist treatments of a logistic mixed model, the marginal likelihood, $L(\boldsymbol{\beta}, \boldsymbol{\Sigma})$ is usually needed. This likelihood, where $\boldsymbol{\beta}$ is a length $p$ vector of fixed effects, and $\boldsymbol{\Sigma}$ is a $q \times q$ positive definite matrix of (co)-variance components is,
\begin{equation} L(\boldsymbol{\beta}, \boldsymbol{\Sigma}) \propto  \int \prod_{i=1}^n {\bf p}_i^{{\bf y}_i}(1-{\bf p}_i)^{1-{\bf y}_i} |{\boldsymbol{\Sigma}}|^{-1/2}e^{-{\bf u}'{\boldsymbol{\Sigma}}^{-1}{\bf u}/2}{\rm d}{\bf u}\label{eqGLMM} \end{equation}  
In (\ref{eqGLMM}), ${\bf p}_i = (1+e^{-{\bf x}_i'\boldsymbol{\beta} - {\bf z}_i'{\bf u}})^{-1}$ \cite{Searle1992, Stiratelli1984} are the elements of the $n$ length vector of success probabilities, where ${\bf x}_i, {\bf z}_i$ correspond to the covariate values for observation $i$ while ${\bf u}$ is a $q$ length vector of random effects. In the special case where ${\bf u}$ corresponds to a random intercept, and all observations; $j =1, \ldots {\bf n}_i$, within each random effect level share the same fixed effect covariate vector, \cite{Pierce1975, Williams1982} then the marginal likelihood simplifies to,
\begin{eqnarray} L(\boldsymbol{\beta}, \boldsymbol{\Sigma}) &=& \prod_{i=1}^q L_i(\boldsymbol{\beta}, \boldsymbol{\Sigma}), \quad \text{where }
	L_i(\boldsymbol{\beta}, \boldsymbol{\Sigma}) = \int {{{\bf n}_i}\choose{ {\bf y}_i}} {\bf p}_i^{{\bf y}_i}(1-{\bf p}_i)^{n_i-{\bf y}_i} \frac{1}{\sqrt{2\pi\sigma^2_u}}e^{-\frac{ {\bf u}_i^2}{2\sigma^2} }{\rm d}{\bf u}. \label{eqGLMM2} \end{eqnarray}  
By the binomial theorem, ${\bf p}_i^{{\bf y}_i}(1-{\bf p}_i)^{n_i-{\bf y}_i}$ can be rewritten as \begin{equation}{\bf p}_i^{{\bf y}_i} \sum_{k=0}^{ {\bf n}_i-{\bf y}_i} {{{\bf n}_i-{\bf y}_i}\choose{k}} (-{\bf p}_i)^k = \sum_{k={\bf y}_i}^{{\bf n}_i} {{{\bf n}_i-{\bf y}_i}\choose{k -{\bf y}_i}} (-1)^{k-{\bf y}_i}{\bf p}_i^k. \label{pbinom}\end{equation}
Substituting (\ref{pbinom}) into (\ref{eqGLMM2}) implies that $L_i(\boldsymbol{\beta}, \boldsymbol{\Sigma})$ as defined in (\ref{eqGLMM2}) corresponds to a linear combination of logit-normal moments from order ${\bf y}_i$ to ${\bf n}_i$ with location parameter ${\bf x}_i\boldsymbol{\beta}$ and scale parameter $\sigma_u$. As the value of ${\bf n}_i$ is arbitrary, this example provides our motivation to find a formula for a logit-normal moment of any positive integer order. 

Note the model we are describing is also known as an observation level random effect model, often seen in ecology \cite{Harrison2015}.

\subsection{An approximate Bayesian problem: Expectation Propagation for logistic regression}\label{EP}

Expectation Propagation (EP) is a method of approximate Bayesian inference introduced by Minka \cite{Minka2001, Minka2001b}. In EP, it is assumed the posterior distribution $p(\boldsymbol{\theta}|{\bf y})$ can be factorised as,
\[ p(\boldsymbol{\theta}|y_1, \ldots y_n) = \prod_{i=0}^n f_i(\boldsymbol{\theta}),  \]

where $i$ represents a data-point. As the posterior is proportional, up to a constant, to likelihood $\times$ prior, the natural choice for $f_i(\boldsymbol{\theta})$ is that when $i=0$, $f_i(\boldsymbol{\theta})$ will be proportional to the prior distribution $p(\boldsymbol{\theta})$, while for $i = 1, 2, \ldots n$, $f_i(\boldsymbol{\theta})$ will be proportional to the likelihood of data-point $i$, $p(y_i|\boldsymbol{\theta})$. The true posterior $ p(\boldsymbol{\theta}|y_1, \ldots y_n)$ is then approximated with the density $g(\boldsymbol{\theta})$ such that  $g(\boldsymbol{\theta}) = \prod_{i=0}^ng_i(\boldsymbol{\theta})$. The parameters of the approximate posterior $g(\boldsymbol{\theta})$ are found by minimising the following Kullback-Leibler divergence $D_{KL}\{cg_{-i}(\boldsymbol{\theta})f_i(\boldsymbol{\theta})||g(\boldsymbol{\theta})\} =$
\begin{equation} \int cg_{-i}(\boldsymbol{\theta})f_i(\boldsymbol{\theta}) \{\log(cg_{-i}(\boldsymbol{\theta})f_i(\boldsymbol{\theta}))-\log(g(\boldsymbol{\theta}))\} {\rm d}\boldsymbol{\theta}, \end{equation}
where $c$ is a normalising constant. 

In a generalised linear model (GLM) with fixed dispersion parameter, such as logistic regression, the parameters $\boldsymbol{\theta}$ are simply the vector of regression coefficients $\boldsymbol{\beta}$. In this case, the obvious choice is to assume the approximate posterior is normal $g(\boldsymbol{\beta}) := \mathcal{N}(\boldsymbol{\beta}|\hat{\boldsymbol{\beta}}, \hat{\boldsymbol{\Sigma}})$. In the resulting moment matching algorithm used to determine $\hat{\boldsymbol{\beta}}, \hat{\boldsymbol{\Sigma}}$ (see \citet{Gelman2014} for details), the following univariate  integrals must be evaluated,
\begin{equation}
	E_k = \int \boldsymbol{\eta}_i^kg_{-i}(\boldsymbol{\eta}_i)f_i(\boldsymbol{\eta}_i) {\rm d}\boldsymbol{\eta}_i, \quad k = 0, 1, 2, \label{eq.EP}
\end{equation}  
where if $i = 1, \ldots, n$, $\boldsymbol{\eta}_i = {\bf x}_i'\boldsymbol{\beta}$, $g_{-i}(\boldsymbol{\eta}_i)$ is a normal density with mean $\mu_i={\bf x}_i'\hat{\boldsymbol{\beta}}$ and variance $\sigma^2_i={\bf x}_i'\hat{\boldsymbol{\Sigma}}{\bf x}_i$ and $f_i(\boldsymbol{\eta}_i)$ is the likelihood term associated with observation $i$. In most cases, including logistic regression, $E_k$ is evaluated using numerical integration methods. This is often emphasised for logistic regression,  for example in \citet{Chopin2017, Kim2018} and \citet{Hall2019}, because the alternative of probit regression is one case where $E_k$ is analytically known. 

To demonstrate that in binary logistic regression, (\ref{eq.EP}) is equivalent to a collection of logit-normal moments, we need to re-write (\ref{eq.EP}) using repeated application of Stein's Lemma \cite{Stein1972}, which states:  

\begin{result}\label{hxExpect0}
	If $X \sim N(\mu,\sigma^2)$, and $h$ a differentiable function applied to $X$, then 
	\begin{eqnarray} \sigma^2E(h'(X)) &=& E((X-\mu)h(X)),
	\end{eqnarray}	
	where $h'(x)$ is the derivative of $h$ with respect to $x$.
\end{result}	

This leads to an alternative set of univariate integrals for solve in the EP moment matching, as stated below.
\begin{proposition}\label{prop1}
	The moment matching step in EP for a GLM problem, when $g(\boldsymbol{\beta})$ is assumed normal, and $\boldsymbol{\eta}_i = {\bf x}_i'\boldsymbol{\beta}$ can be alternatively solved by finding the integrals
	\begin{eqnarray}
		I_k &=& \int g_{-i}(\boldsymbol{\eta}_i)f_i^{(k)}(\boldsymbol{\eta}_i) {\rm d}\boldsymbol{\eta}_i  \quad k = 0, 1, 2 \label{EPeqnew}
	\end{eqnarray}
	where $f_i^{(k)}(\boldsymbol{\eta}_i)$ is the $k^\text{th}$ derivative of $f_i(\boldsymbol{\eta}_i)$. The equations needed to back-transform from (\ref{EPeqnew}) to (\ref{eq.EP}) are $E_0 = I_0, E_1 = \mu_iI_0 + \sigma^2_iI_1, E_2 = (\mu_i^2+\sigma^2_i)I_0 + 2\mu_i\sigma^2_iI_1 + \sigma^4_iI_2$. 
	
	A derivation can be found in the appendix. 
\end{proposition}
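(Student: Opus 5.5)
The plan is to treat each $E_k$ as a Gaussian expectation and strip off the powers of $\boldsymbol{\eta}_i$ by repeated use of Stein's Lemma (Result~\ref{hxExpect0}). Since $g_{-i}$ is the $\mathcal{N}(\mu_i,\sigma^2_i)$ density, we have $E_k = E\{\boldsymbol{\eta}_i^k f_i(\boldsymbol{\eta}_i)\}$ and $I_k = E\{f_i^{(k)}(\boldsymbol{\eta}_i)\}$, with $\boldsymbol{\eta}_i \sim \mathcal{N}(\mu_i,\sigma^2_i)$. The case $k=0$ is immediate: $E_0=I_0$ by definition.

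For $k=1$, write $\boldsymbol{\eta}_i = \mu_i + (\boldsymbol{\eta}_i-\mu_i)$. Then
\[ E_1 = \mu_i E\{f_i(\boldsymbol{\eta}_i)\} + E\{(\boldsymbol{\eta}_i-\mu_i)f_i(\boldsymbol{\eta}_i)\}. \]
Applying Result~\ref{hxExpect0} with $h=f_i$ gives $E\{(\boldsymbol{\eta}_i-\mu_i)f_i\} = \sigma^2_i E\{f_i'\} = \sigma^2_i I_1$, so $E_1=\mu_iI_0+\sigma^2_iI_1$.

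For $k=2$, decompose $\boldsymbol{\eta}_i^2 = \mu_i\boldsymbol{\eta}_i + (\boldsymbol{\eta}_i-\mu_i)\boldsymbol{\eta}_i$ and apply Stein's Lemma with $h(\eta)=\eta f_i(\eta)$, whose derivative is $f_i+\eta f_i'$. This gives
\[ E_2 = \mu_iE_1 + \sigma^2_i\bigl(I_0 + E\{\boldsymbol{\eta}_i f_i'(\boldsymbol{\eta}_i)\}\bigr). \]
The remaining term $E\{\boldsymbol{\eta}_i f_i'\}$ has exactly the form of $E_1$ with $f_i$ replaced by $f_i'$. The $k=1$ argument applied to $h=f_i'$ therefore yields $E\{\boldsymbol{\eta}_i f_i'\} = \mu_iI_1+\sigma^2_iI_2$. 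Substituting $E_1$ and collecting terms:
\[ E_2 = \mu_i^2I_0+\mu_i\sigma^2_iI_1+\sigma^2_iI_0+\mu_i\sigma^2_iI_1+\sigma^4_iI_2 = (\mu_i^2+\sigma^2_i)I_0+2\mu_i\sigma^2_iI_1+\sigma^4_iI_2, \]
which is the stated back-transform.

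Finally, I would note why these identities suffice for moment matching. The EP update needs only the tilted normaliser and the tilted mean and variance, namely $E_0$, $E_1/E_0$ and $E_2/E_0-(E_1/E_0)^2$. These are functions of $E_0,E_1,E_2$, and hence of $I_0,I_1,I_2$.

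The main obstacle is justifying Stein's Lemma for $h=f_i$, $h=f_i'$ and $h=\eta f_i$. Each must be differentiable with $E|h'(\boldsymbol{\eta}_i)|<\infty$, so that the boundary term $h(\eta)g_{-i}(\eta)$ in the integration by parts vanishes as $|\eta|\to\infty$. For the GLM likelihoods of interest, including the logistic case where $f_i(\eta)=(1+e^{-\eta})^{-1}$ or $1-(1+e^{-\eta})^{-1}$, $f_i$ is smooth and it and its derivatives are bounded. The Gaussian tails of $g_{-i}$ then make every boundary term vanish. I would state this smoothness and growth condition explicitly as the assumption on $f_i$ under which the proposition holds.
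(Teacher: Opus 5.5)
Your proof is correct and follows the paper's appendix argument essentially step for step. You obtain $E_1$ from Stein's Lemma with $h=f_i$, then apply Stein's Lemma to $\eta f_i(\eta)$ and reuse the $k=1$ identity for $f_i'$ to obtain $E_2$. Your explicit integrability conditions on $f_i$, $f_i'$ and $\eta f_i$ are a welcome tightening of the paper's bare assumption that $f_i$ is twice differentiable.
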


For a binary logistic regression, $f_i(\boldsymbol{\eta}_i)$ corresponds to ${\bf p}_i = (1+ e^{-{\bf x}_i'\boldsymbol{\beta}})^{-1}$, a logit-normal random variable with location parameter ${\bf x}_i'\hat{\boldsymbol{\beta}}$ and scale parameter $\sqrt{{\bf x}_i'\hat{\boldsymbol{\Sigma}}{\bf x}_i}$ when $y_i = 1$. When $y_i = 0$,  $f_i(\boldsymbol{\eta}_i)= 1 - {\bf p}_i$, i.e. 1 minus a logit-normal variable. Lastly, we need an additional result, used to justify the reparametrisation trick used in machine learning \cite{Kingma2013}, 

\begin{result}\label{hxExpect}
	If $X \sim N(\mu,\sigma^2)$, and $h$ a differentiable function applied to $X$, then 
	\begin{eqnarray} \frac{{\rm d}E(h(X))}{{\rm d}\mu} &=& E(h'(X)),
	\end{eqnarray}	
	where $h'(x)$ is the derivative of $h$ with respect to $x$.
\end{result}	

Using Result \ref{hxExpect}, we can re-write $I_k$ as $\frac{\rm d}{{\rm d} \mu^k}\big\{ \int g_{-i}(\boldsymbol{\eta}_i)f_i(\boldsymbol{\eta}_i) {\rm d}\boldsymbol{\eta}_i \big\}; k = 1, 2$. For a binary logistic regression, this means determining the approximate EP posterior requires finding a set of $n$ logit-normal first moments, $E({\bf p}_i)$ along with its two derivatives w.r.t $\mu _i = {\bf x}_i'\hat{\boldsymbol{\beta}}$. By the recurrence relation given in (\ref{eq.der1}), this is equivalent to finding the first three moments of $n$ logit normal random variables. Hence, even if analytic evaluation of logit-normal moments of arbitrary order proves impossible, the case of expectation propagation demonstrates determining formulae for a collection of low order logit-normal moments would remain valuable for inferential problems.

\subsection{Numerical implementation issues with previous analytic attempts at finding $E(P)$}

In theory, \citet{Holmes2022a} provided a somewhat user friendly analytic expression for $E(P)$. This states the first moment, or expected value is:
\begin{equation} E(P) = \frac{1}{2} + \frac{\sum_{n=1}^\infty e^{-{\sigma^2n^2}/{2}}\sinh(n\mu)\tanh(\frac{n\sigma^2}{2}) + {2\pi}/{\sigma^2}\sum_{n=1}^{\infty}\frac{e^{-{(2n-1)^2\pi^2}/{2\sigma^2}}\sin\{(2n-1)\pi\mu/{\sigma^2}\}}{\sinh\{{(2n-1)\pi^2}/{\sigma^2}\}}  }{1+2\sum_{n=1}^\infty e^{-{\sigma^2n^2}/{2}}\cosh(n\mu)}. \label{EP.new} \end{equation}

In practice though, the inclusion of $\cosh(n\mu)$ and $\sinh(n\mu)$ terms in (\ref{EP.new}) makes this expression for $E(P)$, and implicitly for all higher order moments unusable in some cases. As $n \rightarrow \infty$, these terms, unless $\mu =0$, will blow out to $\pm \infty$ as  $\cosh(n\mu) \rightarrow \infty \forall \mu \in \mathbb{R}\backslash \{0\}$ and $1$ if $\mu = 0$ and $\sinh(n\mu) \rightarrow \infty \text{ if } \mu > 0, -\infty \text{ if } \mu < 0$ and $0$ if $\mu = 0$. While the $\cosh(n\mu), \sinh(n\mu)$ terms in (\ref{EP.new}) are multiplied by $e^{-{\sigma^2n^2}/{2}}$, which goes to zero as $n \rightarrow \infty$ as $\sigma^2 > 0$, if $|\mu/\sigma^2|$ is large $|\cosh(n\mu)|,|\sinh(n\mu)| \rightarrow \infty$ at a lower value of $n$ than $e^{-{\sigma^2n^2}/{2}} \rightarrow 0$, causing numerical instability. 

For the calculation of moments, \citet{Holmes2022a} developed a truncated version of (\ref{EP.new}) based on the decay of the component series which defined the maximum index $N$ in the truncation as $N = \max(\text{ceiling}(|\mu/\sigma^2 -\sqrt{2\log(1/\epsilon)}/\sigma|),\text{ceiling}(|\mu/\sigma^2 +\sqrt{2\log(1/\epsilon)}/\sigma|), \text{ceiling}(1/2 + \sigma\sqrt{\log(2/\epsilon)/2}/\pi))$, with $\epsilon = 1 \times 10^{-4}$. To illustrate why this truncation method will break down because of the inclusion of $\cosh(n\mu), \sinh(n\mu)$ terms, consider the case where $\mu =1 \Leftrightarrow \text{Median}(P) = 0.7311$ to 4 d.p., $ \forall \sigma^2 \in \mathbb{R}^+$. We wish to calculate $E(P)$ for $\sigma^2= (1, 0.1, 0.01, 0.001) \Leftrightarrow \sigma = (1, 0.3162, 0.1, 0.0316)$ and will use both numerical integration and the truncated sum version of (\ref{EP.new}). Results of the calculation in R , including the largest index used in the truncated sum are given in Table \ref{Table1.EP}. 
\begin{center}
	\begin{table}[ht!]
		\caption{Performance of truncated sum approach of \citet{Holmes2022a} at calculating $E(P)$ when $\mu = 1$}\label{Table1.EP}
		\begin{tabular*}{\textwidth}{@{\extracolsep\fill}lccccccr@{\extracolsep\fill}}%
			\toprule
			$\sigma$	& \multicolumn{3}{c}{Numerical Integration} & \multicolumn{3}{c}{Truncated sum} & $N$ \\ % Table header row
			\midrule
			1 & \multicolumn{3}{c}{0.6967} & \multicolumn{3}{c}{0.6967} & 6 \\ 
			0.3162 & \multicolumn{3}{c}{0.7267} & \multicolumn{3}{c}{0.7267} & 24 \\ 
			0.1 & \multicolumn{3}{c}{0.7306} & \multicolumn{3}{c}{0.7306} & 143 \\
			0.0316 & \multicolumn{3}{c}{0.7310} & \multicolumn{3}{c}{NaN} & 1136 \\ 
			\midrule
		\multicolumn{8}{c}{Performance of Truncated sum at calculating $E(P)$ for varying $N$ when $\mu =1, \sigma^2 =0.001 \Rightarrow E(P) = 0.7310$.} \\
\midrule
$N = 400$ & $N = 450$ & $N = 500$ & $N =550$ & $N = 600$ & $N = 650$ & $N = 700$ & $N =750$ \\ 	
\multicolumn{1}{c}{0.5984} & \multicolumn{1}{c}{0.6103} & \multicolumn{1}{c}{0.6221} & \multicolumn{1}{c}{0.6337} & \multicolumn{1}{c}{0.6452} & \multicolumn{1}{c}{0.6565} & \multicolumn{1}{c}{0.6676} &    \multicolumn{1}{r}{NaN} \\
			\bottomrule
		\end{tabular*}
	\end{table}
\end{center}
Importantly when $\mu=1, \sigma = 0.0316$, where the highest index in the truncated sum is $N =1136$, the representation of $E(P)$ given in (\ref{EP.new}) could not give a meaningful evaluation. Changing the truncation value $N$ used in the calculation of $E(P)$, as shown in the later rows of Table \ref{Table1.EP}, we found the  breakdown value of $N$ occurred before an accurate estimate of $E(P)$ was found.

We could try to get around this by working, when $\mu > 0$, with 
\begin{eqnarray}
	e^{-\frac{\sigma^2n^2}{2}+\log(\sinh(n\mu))+\log(\tanh(\frac{n\sigma^2}{2}))} &=& e^{-\frac{\sigma^2n^2}{2}-\log(2) +n\mu+\log(1-e^{-2n\mu})+\log(\tanh(\frac{n\sigma^2}{2}))}	\nonumber \\
	e^{-\frac{\sigma^2n^2}{2}+\log(\cosh(n\mu))} &=& e^{-\frac{\sigma^2n^2}{2}-\log(2)+n\mu+\log(1+e^{-2n\mu})}\nonumber 
\end{eqnarray}  
and $e^{-\frac{\sigma^2n^2}{2}-\log(2) -n\mu+\log(1-e^{2n\mu})+\log(\tanh(\frac{n\sigma^2}{2}))}, e^{-\frac{\sigma^2n^2}{2}-\log(2)-n\mu+\log(1+e^{2n\mu})}$ when $\mu < 0$. However, the high value of $N$ needed when $\sigma \rightarrow 0$ indicates (\ref{EP.new}) would no longer always out-perform numerical integration on speed. Hence we must conclude the Mordell integral based representation of $E(P)$ in (\ref{EP.new}) is unsuitable for use in inferential problems, such as those discussed in sections \ref{GLMM} and especially \ref{EP}. Hence we need to develop an alternative method for evaluating logit-normal moments.  

\section{A new approach to finding moments of a logit-normal random variable}

Our new approach for determining the moments of the logit-normal distribution is motivated by \citet{Komodromos2024}. This work included an approximation the evidence lower bound (ELBO), the expected value of the joint distribution with respect to the approximate posterior for a variational Bayes implementation of logistic regression. For Bernoulli/binomial likelihoods with a normal distribution for the approximate posterior this bound contains 
\begin{eqnarray} E(S) = E(\log(1+e^X)),\label{softplus.new} \end{eqnarray}
where $X\sim \mathcal{N}(\mu,\sigma^2)$, and $S = \log(1+e^X)$ is also known as the softplus function \cite{Dugas2000}.

Our interest in the softplus function, $s(x)= \log(1+e^x)$, is that its first derivative is the logistic function,
\begin{equation} s'(x) = {e^x}/{(1+e^x)} = (1+e^{-x})^{-1} = p(x). \end{equation}
If $x$ is a realisation of a normal random variable with mean $\mu$ and variance $\sigma^2$, then $p(x)$ is a realisation of a logit-normal random variable. This means that $E(p(X)) = E(P)$, the expected value of a logit normal random variable with location and scale parameters  $\mu$ and $\sigma$ respectively.

\subsection{Re-writing $s(x), p(x)$ into an expectation taking friendly form} 

The properties of the softplus function we need are those used to approximate $E[s(X)]$ in \citet{Komodromos2024}. First, we split $s(x) = \log(1 +e^x)$ into even and odd functions. For this, start by rewriting $s(x)$ as
\begin{eqnarray}
	s(x) = \log(1+e^x) = \log[(e^{-x} + 1)e^x] = \log(1+e^{-x}) + \log(e^x) = x + \log(1 + e^{-x}), \label{otherSPdef}
\end{eqnarray}
then express $s(x)$ as a piece-wise function using $s(x) = \log(1+e^x)$ when $x < 0$ and (\ref{otherSPdef}) when $x \geq 0$, i.e. 
\begin{eqnarray}
	s(x) = \log(1+e^x) &=& \begin{cases} \log(1+e^x) \quad \text{\hspace{7.5mm}If $x < 0$} \\
		x+ \log(1+e^{-x}) \quad \text{If $x \geq 0$} \\
	\end{cases}
\end{eqnarray}
or more compactly 
\begin{eqnarray}
	s(x) = x\mathbb{1}_{x > 0} + \log(1+e^{-|x|}), \label{rampSP}
\end{eqnarray}
where $x$ is an odd, and $\log(1+e^{-|x|})$ is an even function. Taking the derivative of $s(x)$ as written in (\ref{rampSP}), we get an alternative representation of the logistic function, 
\begin{eqnarray}
	p(x) = (1+e^{-x})^{-1} = \mathbb{1}_{x>0} + (-1)^{\mathbb{1}_{x>0}}(1+e^{-|x|})^{-1}. \label{rampimpliedL}
\end{eqnarray}

To complete the transformation of $s(x), p(x)$ into a form where we can take approximate expectations analytically, we used the Maclaurin series of $\log(1+y)$ \cite{Komodromos2024}, 
\begin{eqnarray} \log(1+y) = \sum_{i=1}^\infty \frac{(-1)^{i-1}y^i}{i}. \label{MSlog1y} \end{eqnarray}

The Maclaurin series for $\log(1+y)$ is convergent when $|y| \leq 1$, and as we assume $y = e^{-|x|}, x \in \mathbb{R}$, (\ref{MSlog1y}) is a convergent series  $\forall x$. Hence we will work with the softplus and logistic functions in the following form,
\begin{eqnarray}
	s(x) = \log(1+e^x) &\approx& x\mathbb{1}_{x > 0} + \sum_{i=1}^N \frac{(-1)^{i-1}e^{-i|x|}}{i}, \label{rampSPKomo} \\
	p(x) = (1+e^{-x})^{-1} &\approx& \mathbb{1}_{x > 0} + (-1)^{\mathbb{1}_{x>0}}\sum_{i=1}^N (-1)^{i-1}e^{-i|x|}.\label{ramplogisticKomo}
\end{eqnarray}

This will be our starting point for determining the moments of the logit-normal distribution. 

\subsubsection{Improving the approximation of $p(x)$}

We now will establish a naive copying of \citet{Komodromos2024}, i.e. using (\ref{ramplogisticKomo}), will not lead to a good approximation of $p(x)$ for all $x \in \mathbb{R}$. Consider $p(x)$ when $x = 0$, which we know is equal to 0.5. In this case the approximation of $p(x)$ given in (\ref{ramplogisticKomo}) can be simplified as 
\begin{eqnarray}
	p(x) \approx  (-1)^{\mathbb{1}_{x>0}}\sum_{i=1}^N (-1)^{i-1},\label{ramplogisticKomo2}
\end{eqnarray}
where $\sum_{i=1}^\infty (-1)^{i-1}$ is a divergent series whose partial sums oscillate between $1$ if $N$ is odd, and 0 if $N$ is even. More generally, (\ref{ramplogisticKomo}) contains a modified finite geometric series truncated at $N-1$ with $r = -e^{-|x|}$,
\begin{eqnarray}
	\sum_{i=1}^N (-1)^{i-1}e^{-i|x|} = e^{-|x|}\sum_{i=0}^{N-1} (-e^{-|x|})^i = -r \sum_{i=0}^{N-1}r^i.\label{ramplogisticKomo3}
\end{eqnarray}
To determine $E(P)$, we want to estimate ${-r}/{(1-r)}$, but $-r \sum_{i=0}^{N-1}r^i$ estimates ${-r(1-r^N)}/{(1-r)}$. The resulting error in the approximation of $p(x)$ in (\ref{ramplogisticKomo}) is ${(-1)^{N}e^{-(N+1)|x|}}/{(1+e^{-|x|})}$. The absolute error function of $p(x)$, ${e^{-(N+1)|x|}}/{(1+e^{-|x|})}$, when approximated by (\ref{ramplogisticKomo}), is a product of two functions, $e^{-N|x|}$, ${e^{-|x|}}/{(1+e^{-|x|})}=(1+e^{|x|})^{-1}$, that are monotonely decreasing with respect to $|x|$. This means the absolute error in the approximation of $p(x)$ accumulates in the region near $x \rightarrow 0$ or equivalently $p(x) \rightarrow 0.5$. In turn, estimation of $E(P)$ with (\ref{ramplogisticKomo}) will be particularly hard when $\mu \approx 0$ and $\sigma^2 \rightarrow 0$. This makes the implied estimate of $E(P)$ we could construct by naive following of \citet{Komodromos2024}, like the Mordell integral representation of $E(P)$ in \citet{Holmes2022a}, a particular risk if used in parameter estimation algorithms for large datasets.     

To reduce the error in the approximation of $p(x)$ in the region $(-L,L)$, we propose using the piece-wise function, 
\begin{eqnarray}
	p(x) \approx \mathbb{1}_{x > 0} + (-1)^{\mathbb{1}_{x>0}}\sum_{i=1}^{N} ((-1)^{i-1}\mathbb{1}_{|x| \geq L}+ic_i\mathbb{1}_{|x| < L})e^{-i|x|}, \label{ramplogistic.H}
\end{eqnarray}
where if $|x| \geq L$, we use a Maclaurin series and if $|x| < L$, we use a Chebyshev polynomial interpolation \cite{Lanczos1952}. To calculate the coefficients of the Chebyshev interpolating polynomial, $c_i$, we will use the R package \texttt{pracma} \cite{Pracma2023}, but applied to the function $\log(1+y)$ on the interval $(e^{-L},1)$ where $y =e^x$, then take the derivative. By determining the interpolating polynomial for $\log(1+y)$, rather than $\frac{y}{1+y}$, we ensure the constant term $c_0$ disappears in the approximation of $p(x)$, leaving only $c_i$ associated with $e^{-i|x|}$. 

Details on how to choose optimal $L$ conditional on $N$ will be presented in section \ref{FindingL}. First though, we present formulae for approximating $E(P)$ as well as higher order moments.

\section{(Approximate) moments of a logit-normal random variable}

\subsection{The expected value of a logit-normal random variable}\label{FINDEP}

\begin{proposition}\label{EP.LN2new}
	The expectation of $P$, $E(P)$, where $P$ is distributed $\text{logit}N(\mu, \sigma^2)$ is
	\begin{eqnarray} E(P) &\approx&  \Phi(\mu/\sigma) + \sum_{i=1}^{N} (-1)^{i-1}\left\{e^{\mu i+\sigma^2i^2/2}\Phi\left(\frac{-L-\mu-\sigma^2 i}{\sigma}\right) -e^{-\mu i+\sigma^2i^2/2}\Phi\left(\frac{-L+\mu-\sigma^2 i}{\sigma}\right)  \right\} \nonumber \\
		&&+\sum_{i=1}^{N} ic_ie^{\mu i+\sigma^2i^2/2}\left\{\Phi\left(\frac{-\mu-\sigma^2 i}{\sigma}\right) - \Phi\left(\frac{-L-\mu-\sigma^2 i}{\sigma}\right)\right\}\nonumber \\
		&&-\sum_{i=1}^{N} ic_ie^{-\mu i+\sigma^2i^2/2}\left\{\Phi\left(\frac{\mu-\sigma^2 i}{\sigma}\right)-\Phi\left(\frac{-L+\mu-\sigma^2 i}{\sigma}\right) \right\}  \label{eq.EO}
	\end{eqnarray}	
	
	where $\Phi(\cdot)$ is the standard normal cumulative distribution function. A proof can be found in the appendix.
\end{proposition}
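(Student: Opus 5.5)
The plan is to take the expectation of the piecewise approximation (\ref{ramplogistic.H}) term by term under $X\sim\mathcal{N}(\mu,\sigma^2)$. Everything reduces to one elementary Gaussian integral. For $a<b$ (possibly infinite) and real $t$,
\begin{equation*}
\int_a^b e^{tx}\frac{1}{\sigma\sqrt{2\pi}}e^{-\frac{(x-\mu)^2}{2\sigma^2}}{\rm d}x = e^{\mu t+\sigma^2t^2/2}\left\{\Phi\left(\frac{b-\mu-\sigma^2 t}{\sigma}\right)-\Phi\left(\frac{a-\mu-\sigma^2 t}{\sigma}\right)\right\},
\end{equation*}
which follows by completing the square, $tx-(x-\mu)^2/(2\sigma^2) = \mu t+\sigma^2t^2/2-(x-\mu-\sigma^2t)^2/(2\sigma^2)$. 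I will state this as a lemma first and then apply it with $t=\pm i$ on the four regions $(-\infty,-L]$, $(-L,0]$, $(0,L)$ and $[L,\infty)$. On these regions $|x|$ equals $-x$ or $x$, and the indicators in (\ref{ramplogistic.H}) are constant.

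The indicator term gives $E(\mathbb{1}_{X>0}) = \Phi(\mu/\sigma)$ directly. For the remaining terms, note that $(-1)^{\mathbb{1}_{x>0}}$ is $+1$ for $x\le 0$ and $-1$ for $x>0$.

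\textbf{Tails.} On $x\le -L$ we have $e^{-i|x|}=e^{ix}$, so the lemma with $t=i$, $a=-\infty$, $b=-L$ gives $(-1)^{i-1}e^{\mu i+\sigma^2i^2/2}\Phi\bigl((-L-\mu-\sigma^2 i)/\sigma\bigr)$. On $x\ge L$ we have $e^{-i|x|}=e^{-ix}$ with sign $-1$, so $t=-i$, $a=L$, $b=\infty$. This gives $e^{-\mu i+\sigma^2i^2/2}\{1-\Phi((L-\mu+\sigma^2 i)/\sigma)\}$, which equals $e^{-\mu i+\sigma^2 i^2/2}\Phi\bigl((-L+\mu-\sigma^2 i)/\sigma\bigr)$ by the symmetry $1-\Phi(z)=\Phi(-z)$. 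Together these reproduce the first sum in (\ref{eq.EO}).

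\textbf{Central region.} On $(-L,0]$ we use $t=i$, $a=-L$, $b=0$, which gives the second line of (\ref{eq.EO}) with coefficient $ic_i$. On $(0,L)$ we use $t=-i$, $a=0$, $b=L$ with the overall minus sign. This gives $-ic_ie^{-\mu i+\sigma^2i^2/2}\{\Phi((L-\mu+\sigma^2 i)/\sigma)-\Phi((-\mu+\sigma^2i)/\sigma)\}$. Applying $\Phi(z)=1-\Phi(-z)$ to both terms rewrites the bracket as $\Phi((\mu-\sigma^2i)/\sigma)-\Phi((-L+\mu-\sigma^2i)/\sigma)$, which is the third line.

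Finally, I sum the pieces and swap the finite sums with the expectation, which is harmless since $N$ is finite. The boundary points $x=0,\pm L$ have probability zero, so the choice of strict versus weak inequalities does not matter. The "$\approx$" sign in the statement comes only from replacing $p(x)$ by (\ref{ramplogistic.H}); every step after that is exact. The main obstacle is bookkeeping rather than analysis. The places most likely to go wrong are keeping the sign $(-1)^{\mathbb{1}_{x>0}}$ and the switch between $e^{ix}$ and $e^{-ix}$ consistent across the four regions, and applying the $\Phi$ reflection so that the arguments match the stated form exactly. I would check the result at $\mu=0$, where symmetry forces $E(P)=1/2$.
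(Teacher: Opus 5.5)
Your proposal is correct and is essentially the paper's own proof. You substitute (\ref{ramplogistic.H}) into the Gaussian integral, split it into the indicator term plus the four regional integrals, evaluate each with the completed-square truncated-normal moment generating function identity, and use $1-\Phi(z)=\Phi(-z)$ to match the stated arguments, with sign and $e^{\pm ix}$ bookkeeping that agrees term by term with the paper's derivation.
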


Unlike the Mordell integral based representation of $E(P)$ in (\ref{EP.new}), the expression for $E(P)$ given in Proposition \ref{EP.LN2new} will not run into numerical calculation issues. While $E(P)$ in (\ref{eq.EO}) contains terms, $e^{\pm \mu i+\sigma^2i^2/2}$, which $\rightarrow \infty$ as $i \uparrow$, these are multiplied by terms of the form $\Phi(a-i\sigma)$, with $a$ constant, which $\rightarrow 0$ as $i \uparrow$ at a faster rate than $e^{\pm \mu i+\sigma^2i^2/2}$ $\rightarrow \infty$. This means any numerical $\infty$ that appears in calculation is multiplied by an effective zero.

\subsection{Higher order moments}\label{FINDEPhigh}

Higher order moments of $P$ can be found as a function of $E(P)$ and its derivatives with respect to $\mu$, by applying the recursion formula given in (\ref{eq.der2}) repeatedly. 

By Result \ref{hxExpect}, we just need the derivatives of $p(x)$ from (\ref{ramplogistic.H}) and take expectations to approximate higher order moments. As $p(x)$ is a linear combination of exponentials, the derivatives of $p(x)$ are highly repetitive, with
\begin{eqnarray}
	\frac{{\rm d}^k p(x)}{{\rm d}x^k} \approx \begin{cases} \sum_{i=1}^{N} i^k((-1)^{i-1}\mathbb{1}_{|x| \geq L}+ic_i\mathbb{1}_{|x| < L})e^{-i|x|} \quad \text{\hspace{13 mm} If  $k$ is odd,} \\
		(-1)^{\mathbb{1}_{x>0}}\sum_{i=1}^{N} i^k((-1)^{i-1}\mathbb{1}_{|x| \geq L}+ic_i\mathbb{1}_{|x| < L})e^{-i|x|} \quad \text{If  $k$ is even.} \\
	\end{cases} \label{ramplogistic.diff}
\end{eqnarray}

To find $\frac{{\rm d}^kE(P)}{{\rm d}\mu^k}$, we just change the leading coefficients of all components except $\Phi(\mu/\sigma)$ in the approximation of $E(P)$ from (\ref{eq.EO}),
\begin{eqnarray} \frac{{\rm d}^kE(P)}{{\rm d}\mu^k} &=&  \sum_{i=1}^{N} i^k(-1)^{i-1}\left\{e^{\mu i+\sigma^2i^2/2}\Phi\left(\frac{-L-\mu-\sigma^2 i}{\sigma}\right) +(-1)^{k-1}e^{-\mu i+\sigma^2i^2/2}\Phi\left(\frac{-L+\mu-\sigma^2 i}{\sigma}\right)  \right\} \nonumber \\
	&&+\sum_{i=1}^{N} i^{k+1}c_ie^{\mu i+\sigma^2i^2/2}\left\{\Phi\left(\frac{-\mu-\sigma^2 i}{\sigma}\right) - \Phi\left(\frac{-L-\mu-\sigma^2 i}{\sigma}\right)\right\}\nonumber \\
	&&+(-1)^{k-1}\sum_{i=1}^{N} i^{k+1}c_ie^{-\mu i+\sigma^2i^2/2}\left\{\Phi\left(\frac{\mu-\sigma^2 i}{\sigma}\right)-\Phi\left(\frac{-L+\mu-\sigma^2 i}{\sigma}\right) \right\}.  \label{eq.EOdiff}
\end{eqnarray}

Substituting (\ref{eq.EOdiff}) sequentially into (\ref{eq.der2}), starting with $k=1$ to find the second moment, leads to the following proposition for an approximation of positive integer moments, 	

\begin{proposition}\label{EP.LN2high}
	The $k^\text{th}$, $k \geq 2$, integer moment of $P$, $E(P^k)$, where $P$ is distributed $\text{logit}N(\mu, \sigma^2)$ is	approximately 
	\begin{eqnarray} E(P^k) &\approx&  \Phi(\mu/\sigma) + \frac{1}{(k-1)!}\bigg(\sum_{i=1}^{N} (-1)^{i-1}\prod_{j=1}^{k-1}(j-i) e^{\mu i+\sigma^2i^2/2}\Phi\left(\frac{-L-\mu-\sigma^2 i}{\sigma}\right) \nonumber \\ 
		&&-\sum_{i=1}^{N} (-1)^{i-1}\prod_{j=1}^{k-1}(j+i)e^{-\mu i+\sigma^2i^2/2}\Phi\left(\frac{-L+\mu-\sigma^2 i}{\sigma}\right)  \nonumber \\
		&&+\sum_{i=1}^{N} ic_i\prod_{j=1}^{k-1}(j-i)e^{\mu i+\sigma^2i^2/2}\left\{\Phi\left(\frac{-\mu-\sigma^2 i}{\sigma}\right) - \Phi\left(\frac{-L-\mu-\sigma^2 i}{\sigma}\right)\right\}\nonumber \\
		&&-\sum_{i=1}^{N} ic_i\prod_{j=1}^{k-1}(j+i)e^{-\mu i+\sigma^2i^2/2}\left\{\Phi\left(\frac{\mu-\sigma^2 i}{\sigma}\right)-\Phi\left(\frac{-L+\mu-\sigma^2 i}{\sigma}\right) \right\}\bigg)  \label{eq.EOhigh}
	\end{eqnarray}	
	
	where $\Phi(\cdot)$ is the standard normal cumulative distribution function.
\end{proposition}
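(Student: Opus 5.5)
The plan is to treat the approximation of $E(P)$ in Proposition \ref{EP.LN2new} as exact and build $E(P^k)$ from it. We iterate the recurrence (\ref{eq.der2}), using (\ref{eq.EOdiff}) for the $\mu$-derivatives. The structure of (\ref{eq.EO}) makes this easy. Write $E(P)\approx \Phi(\mu/\sigma)+\sum_i \{A_i(\mu)+B_i(\mu)\}$, where $A_i$ collects every term carrying $e^{\mu i+\sigma^2 i^2/2}$ and $B_i$ every term carrying $e^{-\mu i+\sigma^2 i^2/2}$. Here $A_i$ comes from the region $x<0$, where the basis function is $e^{ix}$, and $B_i$ comes from $x>0$, where it is $e^{-ix}$.

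By Result \ref{hxExpect}, differentiating in $\mu$ is the same as differentiating the integrand in $x$. Away from the breakpoints $0,\pm L$, this multiplies the $x<0$ pieces by $i$ and the $x>0$ pieces by $-i$. Differentiating the indicator-weighted integrand would also produce delta terms at the breakpoints. These are exactly the terms that (\ref{eq.EOdiff}) discards: the derivative of $\Phi(\mu/\sigma)$ and the jumps of the piecewise approximation. We adopt the same convention as (\ref{eq.EOdiff}), that $\Phi(\mu/\sigma)$ is fixed and only the $A_i,B_i$ pieces are differentiated. This is what makes the result an approximation rather than an identity. With that convention, the derivative operator acts diagonally on the pieces:
\[\frac{\rm d}{{\rm d}\mu}A_i = iA_i,\qquad \frac{\rm d}{{\rm d}\mu}B_i=-iB_i.\]

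Next we prove by induction on $k\ge 1$ that
\[E(P^{k})\approx \Phi(\mu/\sigma)+\frac{1}{(k-1)!}\sum_{i=1}^N\Big\{\prod_{j=1}^{k-1}(j-i)\,A_i+\prod_{j=1}^{k-1}(j+i)\,B_i\Big\},\]
with empty products equal to $1$, so that $k=1$ is (\ref{eq.EO}).

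For the inductive step, apply (\ref{eq.der2}) in the form $E(P^{k+1})=E(P^k)-\frac1k\frac{\rm d}{{\rm d}\mu}E(P^k)$. By the diagonal action, the coefficient of $A_i$ becomes
\[\frac{1}{(k-1)!}\prod_{j=1}^{k-1}(j-i)\Big(1-\frac{i}{k}\Big)=\frac{1}{k!}\prod_{j=1}^{k}(j-i).\]
The coefficient of $B_i$ becomes
\[\frac{1}{(k-1)!}\prod_{j=1}^{k-1}(j+i)\Big(1+\frac{i}{k}\Big)=\frac{1}{k!}\prod_{j=1}^{k}(j+i).\]
The constant $\Phi(\mu/\sigma)$ has zero derivative under the convention, so it survives unchanged. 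This closes the induction.

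The last step is to substitute back the explicit forms of $A_i$ and $B_i$ from (\ref{eq.EO}), namely the Maclaurin tail terms with $(-1)^{i-1}$ and the Chebyshev terms with $ic_i$ and the differences of $\Phi$'s. The signs then reproduce the four lines of (\ref{eq.EOhigh}). As a consistency check, the single-step derivative this gives agrees with (\ref{eq.EOdiff}), including its $(-1)^{k-1}$ factor on the $B$ terms.

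The main obstacle is the bookkeeping of signs. In (\ref{eq.EO}), the $B$ terms appear with a leading minus sign, and the differentiation factor $(-i)^m$ then mixes with that sign. One has to check that the product $\prod(j+i)$ really carries the overall minus shown in (\ref{eq.EOhigh}) for every $k$, not only for $k=2$. A second point to handle carefully is justifying the convention that drops the boundary and $\Phi(\mu/\sigma)$ derivative terms, which we must state explicitly since it is what makes (\ref{eq.EOhigh}) approximate.
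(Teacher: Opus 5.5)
Your proposal is correct and follows essentially the same route as the paper: induction on $k$ through the recurrence (\ref{eq.der2}), with the piecewise $\mu$-derivative (breakpoint terms dropped) multiplying the $e^{\mu i}$ terms by $i$ and the $e^{-\mu i}$ terms by $-i$, closed by the identity $\frac{1}{(k-1)!}\prod_{j=1}^{k-1}(j\mp i)\bigl(1\mp \frac{i}{k}\bigr)=\frac{1}{k!}\prod_{j=1}^{k}(j\mp i)$. The differences are cosmetic. You start the induction at $k=1$ and read the diagonal action directly off the expectation pieces, with the minus sign absorbed into $B_i$; the paper starts at $k=2$ and reconstructs the underlying function $p(x)^k$ in order to differentiate in $x$. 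Your explicit statement of which jump terms are discarded is, if anything, clearer than the paper's brief remark about the piecewise design.
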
	

A proof of Proposition \ref{EP.LN2high} can be found in the appendix.

\section{Determining optimal parameters for $N$ and $L$}\label{FindingL}

We have not yet established how accurate our approximations for $E(P), E(P^2), \ldots$, will be, or what optimal choices for $N, L$ are. While increasing $N$ should result in improved accuracy of approximation, we still need to find the optimal $L$ given $N$. To determine optimal $L$, we will work with the underlying functions $p(x), p'(x), \ldots$, as if we can guarantee accurate approximation of the underlying functions for all $x$, we can guarantee accuracy of moment estimates. 

To choose $N, L$, we consider accuracy of approximation up to the third derivative of $p(x)$. This implies we are aiming to achieve accurate approximation up to at least the fourth moment. As the approximate component of $p(x)$ in (\ref{ramplogistic.H}) is symmetric around zero, to determine optimal $L$, we only need to look at $x \in \mathbb{R}^+$ or $x \in \mathbb{R}^-$. As at $|x| = 5, N=5$, the absolute error of the Maclaurin series component of $p(x)$ is $e^{-30}/(1+e^{-5}) = 9.3 \times 10^{-14}$, we will focus approximation on $x \in (-5,0)$ and $N \geq 5$. We then carried optimisation of $L$, using Brent's method \cite{Brent1973} as implemented using the \texttt{optimize} function in R. The cost function we choose to minimise is $\sum_j |p^{(k)}(x_j)-\tilde{p}^{(k)}(x_j)|$, where $p^{(k)}(x_j),\tilde{p}^{(k)}(x_j)$ is the $k^\text{th}; k =0, 1, 2, 3$ derivative of the logistic function $p(x)$, and the approximation of the logistic function given in (\ref{ramplogistic.H}) respectively, with $x_{j+1} - x_{j} = 1 \times 10^{-4}$.

\begin{center}
\begin{table}[h]
	\caption{Optimal $L$ given $N$ with approximation errors.}\label{Tab.error}
	\begin{tabular*}{\textwidth}{@{\extracolsep\fill}lcccccc@{\extracolsep\fill}}
		\toprule
		$N$ & $L, p(x)$ & $\sum |Error|, p(x)$ & $\max|Error|, p(x)$ & $L, p'(x)$ & $\sum |Error|, p'(x)$ & $\max|Error|, p'(x)$\\ 
		\midrule
		5 & 1.685101 & 0.251862 & 0.000113 & 1.586104 & 3.028510 & 0.002872 \\ 
		6 & 1.711008 & 0.034265 & 0.000018 & 1.617302 & 0.497280 & 0.000607 \\ 
		7 & 1.731515 & 0.004680 & 0.000003 & 1.644000 & 0.079812 & 0.000122 \\ 
		8 & 1.749221 & 0.000641 & 0.000000 & 1.668535 & 0.012593 & 0.000023 \\ 
		9 & 1.763723 & 0.000088 & 0.000000 & 1.689421 & 0.001961 & 0.000004 \\ 
		10 & 1.776430 & 0.000012 & 0.000000 & 1.705207 & 0.000302 & 0.000001 \\ 
		11 & 1.787302 & 0.000002 & 0.000000 & 1.721333 & 0.000046 & 0.000000 \\  
		\midrule
		$N$ & $L, p''(x)$ & $\sum |Error|, p''(x)$ & $\max|Error|, p''(x)$&$L, p'''(x)$ &$\sum |Error|, p'''(x)$ & $\max|Error|, p'''(x)$ \\
		\midrule
		5 & 1.439925 & 40.288286 & 0.042179 & 1.300801 & 498.249325 & 0.466538 \\ 
		6 & 1.478410 & 8.452863 & 0.011748 & 1.348403 & 135.405588 & 0.170961 \\ 
		7 & 1.511205 & 1.677197 & 0.002984 & 1.384707 & 33.740484 & 0.054576 \\ 
		8 & 1.541406 & 0.319176 & 0.000710 & 1.419602 & 7.860548 & 0.015913 \\ 
		9 & 1.569604 & 0.058792 & 0.000161 & 1.449400 & 1.736764 & 0.004291 \\ 
		10 & 1.592037 & 0.010550 & 0.000035 & 1.478222 & 0.367689 & 0.001096 \\ 
		11 & 1.611702 & 0.001853 & 0.000007 & 1.501922 & 0.075145 & 0.000265 \\ 
		\bottomrule
	\end{tabular*}
\end{table}
\end{center}

\begin{figure*}[ht]
	\centering
\includegraphics[width=0.8\linewidth]{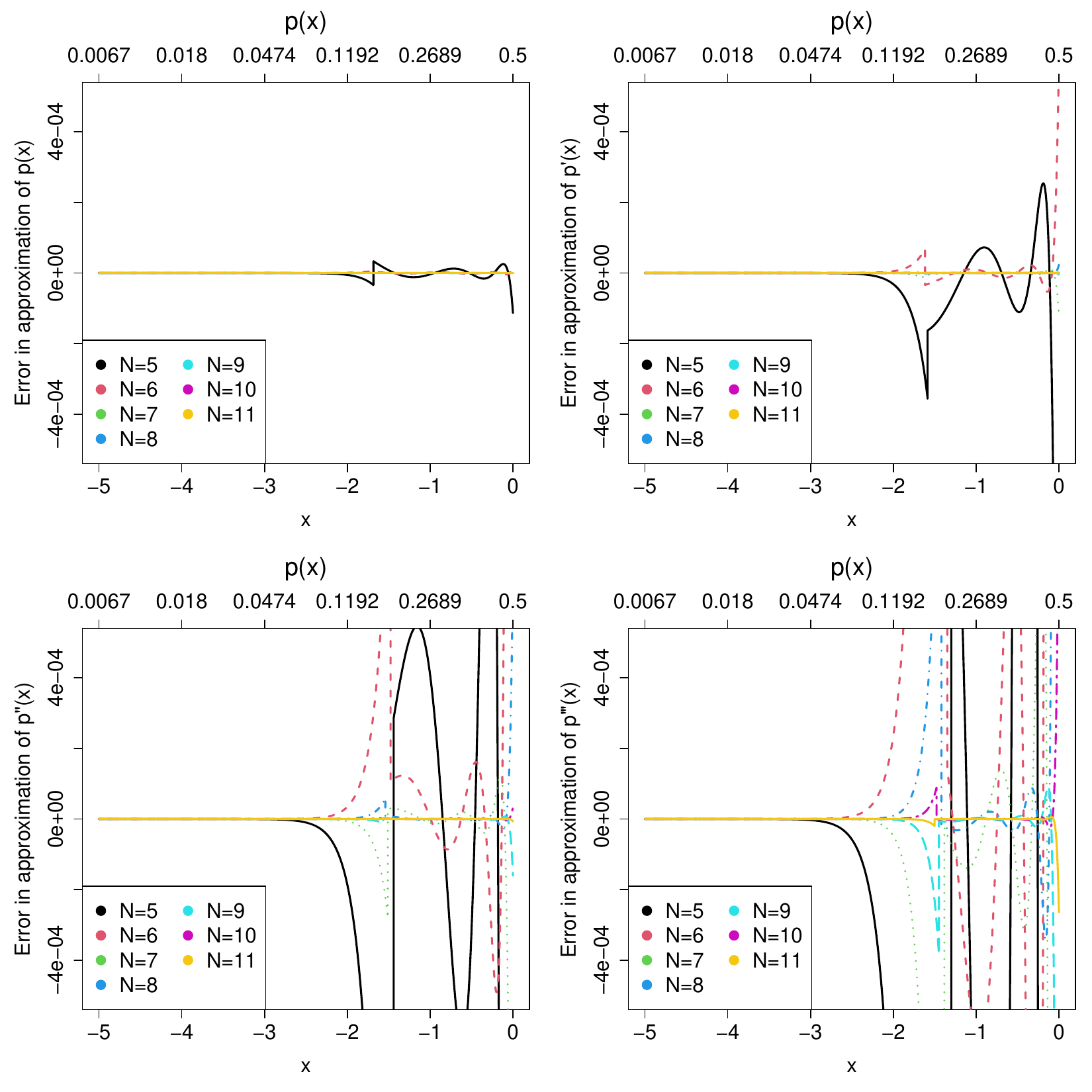}
\caption{Approximation error of the logistic function and its first three derivatives. Note: Error in approximation for $p''(x)$ and $p'''(x)$ is greater than bounds of $y$-axis for lower $N$ for some values of $x$.}\label{Fig1a}
\end{figure*}

In Figure \ref{Fig1a}, we present approximation error for $p(x)$ and its first three derivatives, while in Table \ref{Tab.error}, we present the optimal choice of $L$ for $N=5, \ldots, 11$ along minimised sum of absolute errors and maximum error for the same functions. Note allowing $L$ to vary for each derivative of $p(x)$ reduces the efficiency of any algorithm to calculate moments. This is because, 1. we would need to calculate a new set of evaluations of $\Phi(\cdot)$ for the expected value of each derivative, and 2. we would not be able to use the result of proposition \ref{EP.LN2high} for higher order moments. Hence, to optimise $L$, we fix $N=11$, as we can guarantee the maximum error in the approximation of the four underlying functions needed to evaluate the first four moments of the logit-normal distribution over all $x$ is $<2.7 \times 10^{-4}$, and find $L$ working with the sum of $p(x)$ and its first three derivatives. This resulted in our final choice of $L$ being $1.507306$, with the values of the Chebyshev polynomial coefficients $c_i$ given in Table \ref{tab.optim}. 

\begin{center}
	\begin{table}[h]
		\caption{Chebyshev polynomial coefficients used to approximate $p(x)$ for $x \in (-L,L)$.}\label{tab.optim}
		\begin{tabular*}{\textwidth}{@{\extracolsep\fill}lcccccc ccccc@{\extracolsep\fill}}
			\toprule
		$i$   & 1 & 2 & 3 & 4 & 5 & 6 & 7 & 8 & 9 & 10 & 11 \\ 
$c_i$ & 0.999992 & -0.499904 &  0.332651 & -0.246763 &  0.189081 & -0.139468 &  0.091337 & -0.048596 &  0.018998&  -0.004740 &  0.000559 \\ \midrule
			\bottomrule
		\end{tabular*}
	\end{table}
\end{center}

\section{Testing accuracy and speed of moment calculation}

\subsection{Accuracy}\label{testacc}

Figure \ref{Fig1a}, by showing we have accurately approximated $p(x)$ and its first three derivatives indicates our approach should be accurate for estimating $E(P^k); k = 1, 2, 3, 4$ for any value of $\mu, \sigma$. Nevertheless, to confirm the accuracy of our method, we will estimate $E(P), E(P^2), E(P^3), E(P^4)$ using equations (\ref{eq.EO}), (\ref{eq.EOhigh}) from section \ref{FINDEP} and \ref{FINDEPhigh}, with  values of $L, c_i$ fixed at those given in section \ref{FindingL} and compare to numerical integration estimates. As the logit-normal distribution has the symmetry $P \sim \text{logit}N(\mu,\sigma^2)$ then $1-P \sim \text{logit}N(-\mu,\sigma^2)$,  we will only consider values $\mu \in [0,6]$ in increments of 0.25, which corresponds to $\text{Median}(P) \in [0.5,0.9975]$. For $\sigma$ we will consider the values $(0.001, 0.01, 0.1, 0.25, 0.5, 1, 1.5, 2.5)$. This means we cover, when $\mu$ is near zero, scenarios from near normality for $P$ (low $\sigma$), to near uniform all the way to bi-modality (high $\sigma$) with modes at 0 and 1, as shown in Figure \ref{Fig1c}. 

\begin{figure*}[ht]
	\centering
	\includegraphics[width=0.37\linewidth]{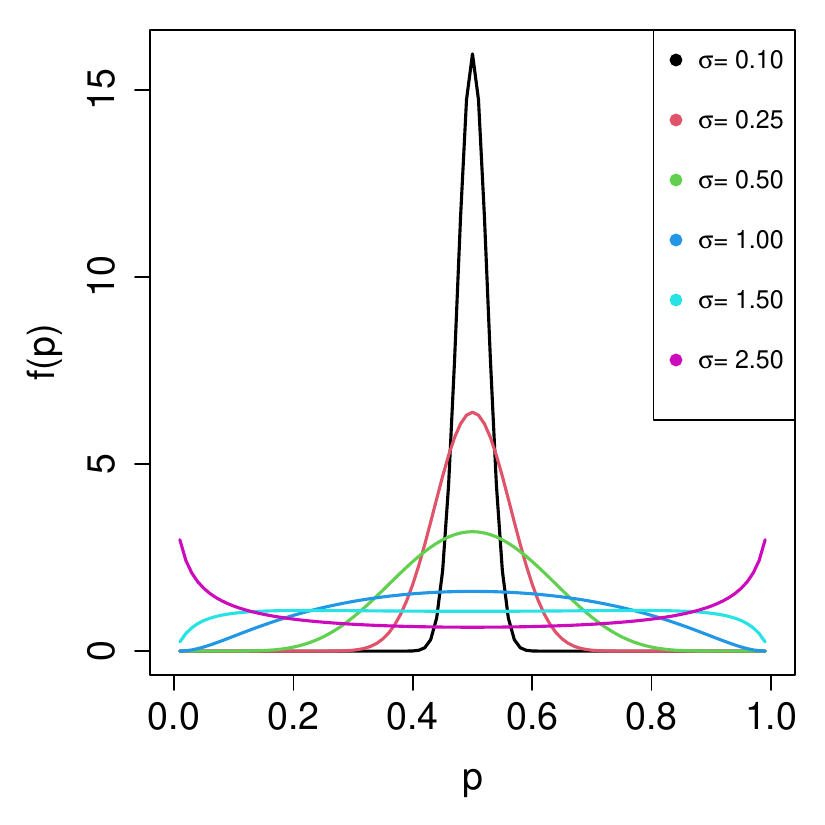}
	\caption{Examples of logit-normal densities with $\mu =0$.}\label{Fig1c}
	\includegraphics[width=0.37\linewidth]{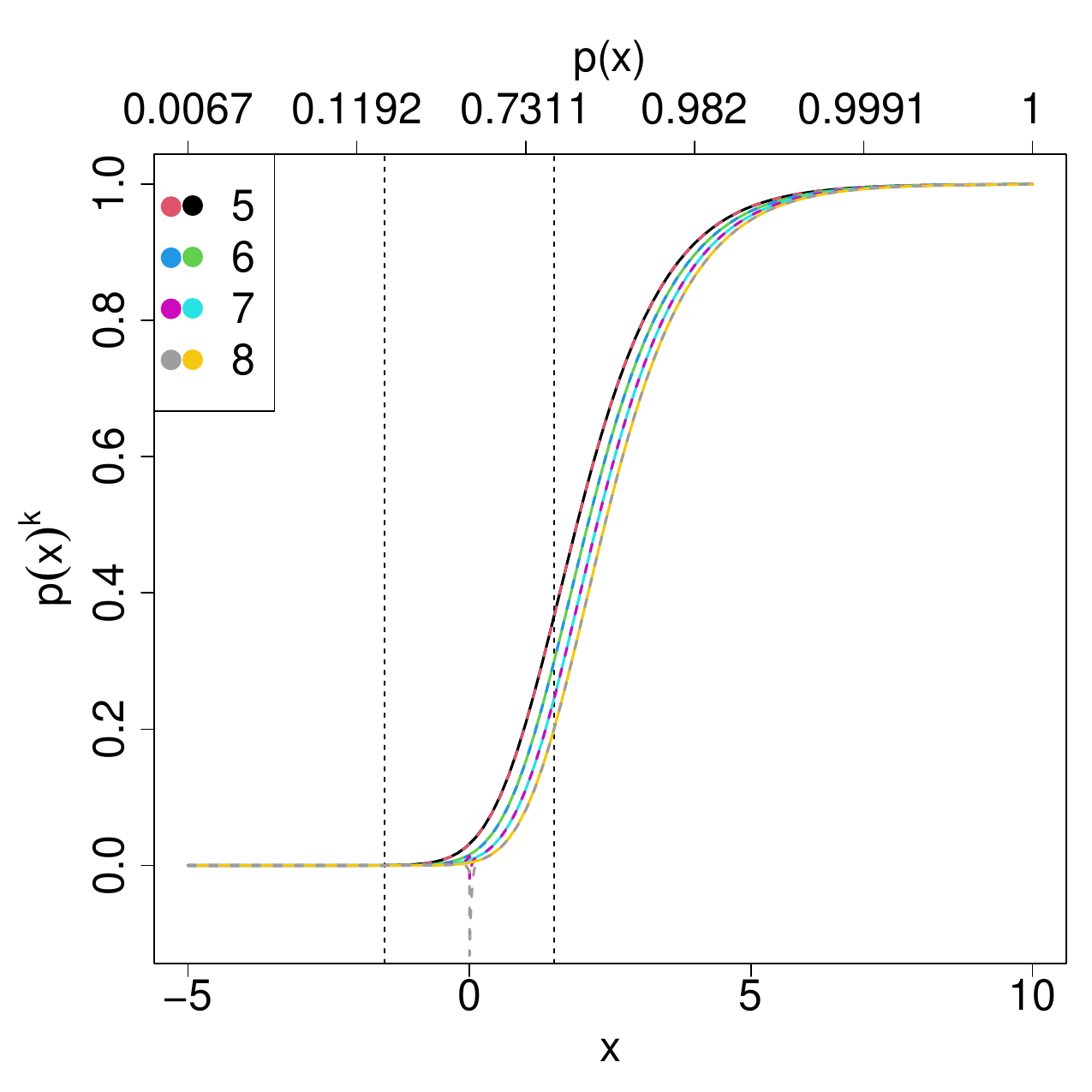}
	\caption{Our implied approximation of $p(x)^k$ for $k = 5, 6, 7, 8$. The vertical dashed lines correspond to $\pm L$, the transition between Maclaurin series and Chebyshev interpolating approximation. Solid lines correspond to the true function,  $p(x)^k$, dashed lines to the approximation $\tilde{p}(x)^k$.}\label{Fig1d}
\end{figure*}

In Tables \ref{tab.accEP}-\ref{tab.accVarP}, comparisons between estimates from our piece-wise approximations of order $N=11$ and numerical integration are provided for the expected value and variance of a logit-normal random variable. Tables showing accuracy for the second to fourth moment, skewness and kurtosis are included in the appendix. For the first four moments and variance, we did obtain accurate estimates for all combinations of $\mu, \sigma$ considered, with the average and worst case error being $6.1 \times 10^{-9}$ and $1.2 \times 10^{-6}$ for $E(P)$, $1.4 \times 10^{-8}$ and $2.2 \times 10^{-6}$ for $\text{Var}(P)$, $2.6 \times 10^{-9}$ and $7.1 \times 10^{-8}$ for $E(P^2)$, $1.2 \times 10^{-8}$ and $2.2 \times 10^{-7}$ for $E(P^3)$ and $5.8 \times 10^{-7}$ and $4.4 \times 10^{-5}$ for $E(P^4)$. 
For skewness and kurtosis, the division by a power of $\text{Var}(P)$ means when $\sigma \leq 0.25$,  estimates obtained are unreliable.

\begin{center}
	\begin{table}[h]\caption{Estimates of $E(P)$ obtained from piecewise approximation of $p(x)$ with error compared to estimate of $E(P)$ obtained from numerical integration.}\label{tab.accEP}
	\scriptsize
	\begin{tabular*}{\textwidth}{@{\extracolsep\fill}lcrcrcrcrcr@{\extracolsep\fill}} \toprule
		$\sigma$	&	$E(P)$	&		Error 			&	$E(P)$	&		Error 	&	$E(P)$	&		Error 		&	$E(P)$	&		Error 		&	$E(P)$	&		Error 		\\ 
		\midrule
		& \multicolumn{2}{c}{$\mu = 0$} & \multicolumn{2}{c}{$\mu = 1.25$} & \multicolumn{2}{c}{$\mu = 2.5$} & \multicolumn{2}{c}{$\mu = 3.75$} & \multicolumn{2}{c}{$\mu = 5.00$}  \\
		0.001	&			0.500000	&	$	3.7\times 10^{-12	}$		&	0.777300	&	$	4.0\times 10^{-11	}$ &			0.924142	&	$	6.8\times 10^{-12	}$			&	0.977023	&	$	7.3\times 10^{-12	}$	&			0.993307	&	$	7.5\times 10^{-12	}$	\\
		0.010		&		0.500000	&	$	3.7\times 10^{-12	}$		&	0.777295	&	$	3.9\times 10^{-11	}$&			0.924139	&	$	6.7\times 10^{-12	}$		&	0.977022	&	$	7.2\times 10^{-12	}$	&			0.993307	&	$	7.3\times 10^{-12	}$	\\
		0.100	&			0.500000	&	$	3.7\times 10^{-12	}$		&	0.776821	&	$	-2.1\times 10^{-11	}$		&	0.923844	&	$	6.7\times 10^{-12	}$			&	0.976915	&	$	7.2\times 10^{-12	}$	&			0.993274	&	$	7.4\times 10^{-12	}$	\\
		0.250 		&	0.500000	&	$	-1.8\times 10^{-13	}$		&	0.774350	&	$	-6.2\times 10^{-10	}$	&	0.922279	&	$	7.8\times 10^{-13	}$	&	0.976346	&	$	7.2\times 10^{-12	}$	&	0.993099	&	$	7.4\times 10^{-12	}$	\\
		0.500		&	0.500000	&	$	-1.1\times 10^{-16	}$		&	0.766062	&	$	-6.0\times 10^{-10	}$&				0.916661	&	$	-1.5\times 10^{-10	}$		&	0.974225	&	$	7.3\times 10^{-12	}$	&		0.992439	&	$	7.4\times 10^{-12	}$	\\
		1.000		&	0.500000	&	$	-1.1\times 10^{-16	}$			&	0.739493	&	$	-3.5\times 10^{-10	}$&				0.894638	&	$	-2.5\times 10^{-10	}$			&	0.964551	&	$	-2.6\times 10^{-11	}$			&	0.989203	&	$	6.5\times 10^{-12	}$	\\
		1.500	&	  0.500000	&	$	0.0 \times 10^{-16}$	&			0.709428	&	$	-2.0\times 10^{-10	}$&				0.862434	&	$	-2.0\times 10^{-10	}$	&			0.945982	&	$	-8.7\times 10^{-11	}$	&			0.981647	&	$	-1.6\times 10^{-11	}$	\\
		2.500		&	0.500000	&	$	-1.1\times 10^{-16	}$			&	0.659245	&	$	-9.3\times 10^{-11	}$&				0.793618	&	$	-3.0\times 10^{-10	}$			&	0.889532	&	$	-5.9\times 10^{-10	}$			&	0.947744	&	$	-3.3\times 10^{-10	}$	\\ 
		\midrule
		& \multicolumn{2}{c}{$\mu = 0.25$} & \multicolumn{2}{c}{$\mu = 1.5$} & \multicolumn{2}{c}{$\mu = 2.75$} & \multicolumn{2}{c}{$\mu = 4.0$} & \multicolumn{2}{c}{$\mu = 5.25$}  \\
		0.001		&	0.562177	&	$	-8.4\times 10^{-11	}$		&	0.817574	&	$	1.8\times 10^{-10	}$&				0.939913	&	$	7.0\times 10^{-12	}$	&			0.982014	&	$	7.4\times 10^{-12	}$	&		0.994780	&	$	7.5\times 10^{-12	}$	\\
		0.010		&	0.562175	&	$	-8.3\times 10^{-11	}$			&	0.817570	&	$	-2.4\times 10^{-9	}$			&	0.939911	&	$	6.9\times 10^{-12	}$			&	0.982013	&	$	7.3\times 10^{-12	}$		&	0.994780	&	$	7.4\times 10^{-12	}$	\\
		0.100		&	0.562024	&	$	-7.0\times 10^{-11	}$			&	0.817102	&	$	-2.6\times 10^{-9	}$		&	0.939665	&	$	7.0\times 10^{-12	}$	&			0.981929	&	$	7.3\times 10^{-12	}$			&	0.994754	&	$	7.4\times 10^{-12	}$	\\
		0.250 		&	0.561248	&	$	-3.4\times 10^{-11	}$		&	0.814651	&	$	-1.4\times 10^{-9	}$ 		&	0.938353	&	$	6.6\times 10^{-12	}$		&	0.981475	&	$	7.3\times 10^{-12	}$		&	0.994617	&	$	7.4\times 10^{-12	}$	\\
		0.500		&	0.558752	&	$	-2.7\times 10^{-11	}$		&	0.806309	&	$	-7.4\times 10^{-10	}$ 			&	0.933597	&	$	-4.9\times 10^{-11	}$			&	0.979781	&	$	7.4\times 10^{-12	}$	&			0.994099	&	$	7.4\times 10^{-12	}$	\\
		1.000			&	0.551493	&	$	-8.6\times 10^{-11	}$			&	0.778527	&	$	-3.7\times 10^{-10	}$ 		&	0.914250	&	$	-1.7\times 10^{-10	}$			&	0.971896	&	$	-1.3\times 10^{-11	}$			&	0.991540	&	$	7.1\times 10^{-12	}$	\\
		1.500			&	0.544085	&	$	-5.1\times 10^{-11	}$		&	0.745691	&	$	-2.2\times 10^{-10	}$ 			&	0.884295	&	$	-1.8\times 10^{-10	}$			&	0.956037	&	$	-6.7\times 10^{-11	}$			&	0.985398	&	$	-9.5\times 10^{-12	}$	\\
		2.500			&	0.532731	&	$	-1.5\times 10^{-11	}$		&	0.688757	&	$	-1.2\times 10^{-10	}$ 			&	0.816045	&	$	-1.0\times 10^{-10	}$			&	0.903957	&	$	-8.0\times 10^{-11	}$		&	0.955654	&	$	-4.5\times 10^{-11	}$	\\ 
		\midrule
		& \multicolumn{2}{c}{$\mu = 0.5$} & \multicolumn{2}{c}{$\mu = 1.75$} & \multicolumn{2}{c}{$\mu = 3.0$} & \multicolumn{2}{c}{$\mu = 4.25$} & \multicolumn{2}{c}{$\mu = 5.5$}  \\
		0.001			&	0.622459	&	$	3.3\times 10^{-13	}$	&	0.851953	&	$	-6.4\times 10^{-10	}$		&	0.952574	&	$	7.1\times 10^{-12	}$	&			0.985936	&	$	7.5\times 10^{-12	}$	&			0.995930	&	$	7.5\times 10^{-12	}$	\\
		0.010	&	0.622457	&	$	4.4\times 10^{-13	}$	&			0.851948	&	$	-6.4\times 10^{-10	}$ 		&	0.952572	&	$	7.0\times 10^{-12	}$	&			0.985936	&	$	7.3\times 10^{-12	}$	&		0.995930	&	$	7.4\times 10^{-12	}$	\\
		0.100		&	0.622173	&	$	-6.5\times 10^{-14	}$			&	0.851510	&	$	-1.2\times 10^{-9	}$ 			&	0.952369	&	$	7.1\times 10^{-12	}$	&		0.985869	&	$	7.3\times 10^{-12	}$	&		0.995910	&	$	7.4\times 10^{-12	}$	\\
		0.250		&		0.620710	&	$	-1.7\times 10^{-11	}$		&	0.849201	&	$	-1.2\times 10^{-9	}$	&	0.951287	&	$	7.0\times 10^{-12	}$		&	0.985510	&	$	7.3\times 10^{-12	}$	&		0.995802	&	$	7.4\times 10^{-12	}$	\\
		0.500		&	0.615976	&	$	-8.1\times 10^{-11	}$		&	0.841228	&	$	-7.1\times 10^{-10	}$ &				0.947330	&	$	-9.1\times 10^{-12	}$		&	0.984163	&	$	7.4\times 10^{-12	}$	&			0.995396	&	$	7.4\times 10^{-12	}$	\\
		1.000			&	0.602027	&	$	-1.7\times 10^{-10	}$		&	0.813571	&	$	-3.7\times 10^{-10	}$ &				0.930676	&	$	-1.4\times 10^{-10	}$		&	0.977793	&	$	-4.2\times 10^{-12	}$			&	0.993380	&	$	7.4\times 10^{-12	}$	\\
		1.500		&	0.587596	&	$	-9.5\times 10^{-11	}$		&	0.779256	&	$	-2.3\times 10^{-10	}$	&			0.903389	&	$	-1.6\times 10^{-10	}$	&			0.964413	&	$	-5.0\times 10^{-11	}$			&	0.988420	&	$	-5.2\times 10^{-10	}$	\\
		2.500		&	0.565237	&	$	-2.8\times 10^{-11	}$		&	0.717074	&	$	-9.0\times 10^{-11	}$ &				0.836848	&	$	-9.9\times 10^{-11	}$		&	0.916909	&	$	-7.4\times 10^{-11	}$			&	0.962543	&	$	-3.9\times 10^{-11	}$	\\ 
		\midrule
		& \multicolumn{2}{c}{$\mu = 0.75$} & \multicolumn{2}{c}{$\mu = 2.0$} & \multicolumn{2}{c}{$\mu = 3.25$} & \multicolumn{2}{c}{$\mu = 4.5$} & \multicolumn{2}{c}{$\mu = 5.75$}  \\
		0.001		&	0.679179	&	$	4.3\times 10^{-11	}$		&	0.880797	&	$	-2.7\times 10^{-11	}$			&	0.962673	&	$	7.2\times 10^{-12	}$	&			0.989013	&	$	7.5\times 10^{-12	}$			&	0.996827	&	$	7.5\times 10^{-12	}$	\\
		0.010		&	0.679175	&	$	4.3\times 10^{-11	}$		&	0.880793	&	$	-2.7\times 10^{-11	}$ 			&	0.962672	&	$	7.1\times 10^{-12	}$	&			0.989013	&	$	7.3\times 10^{-12	}$			&	0.996827	&	$	7.4\times 10^{-12	}$	\\
		0.100		&	0.678790	&	$	2.2\times 10^{-11	}$		&	0.880398	&	$	-6.1\times 10^{-11	}$			&	0.962507	&	$	7.1\times 10^{-12	}$	&			0.988960	&	$	7.3\times 10^{-12	}$			&	0.996812	&	$	7.4\times 10^{-12	}$	\\
		0.250		&		0.676798	&	$	-3.7\times 10^{-13	}$	&	0.878309	&	$	-4.2\times 10^{-10	}$			&	0.961625	&	$	7.1\times 10^{-12	}$		&	0.988676	&	$	7.3\times 10^{-12	}$		&	0.996728	&	$	7.4\times 10^{-12	}$	\\
		0.500		&	0.670292	&	$	-2.0\times 10^{-10	}$		&	0.870994	&	$	-5.4\times 10^{-10	}$		&	0.958377	&	$	3.5\times 10^{-12	}$			&	0.987611	&	$	7.4\times 10^{-12	}$	&			0.996410	&	$	7.4\times 10^{-12	}$	\\
		1.000		&	0.650703	&	$	-2.4\times 10^{-10	}$		&	0.844538	&	$	-3.5\times 10^{-10	}$ 			&	0.944287	&	$	-9.4\times 10^{-11	}$			&	0.982500	&	$	1.5\times 10^{-12	}$			&	0.994825	&	$	7.5\times 10^{-12	}$	\\
		1.500		&	0.629982	&	$	-1.4\times 10^{-10	}$		&	0.809947	&	$	-2.3\times 10^{-10	}$ 			&	0.919892	&	$	8.4\times 10^{-9	}$			&	0.971336	&	$	-3.5\times 10^{-11	}$			&	0.990843	&	$	-4.4\times 10^{-10	}$	\\
		2.500		&	0.597296	&	$	-4.1\times 10^{-11	}$		&	0.744060	&	$	-9.8\times 10^{-11	}$ 		&	0.856019	&	$	-9.6\times 10^{-11	}$			&	0.928465	&	$	-6.6\times 10^{-11	}$			&	0.968506	&	$	-3.3\times 10^{-11	}$	\\ 
		\midrule
		& \multicolumn{2}{c}{$\mu = 1.0$} & \multicolumn{2}{c}{$\mu = 2.25$} & \multicolumn{2}{c}{$\mu = 3.5$} & \multicolumn{2}{c}{$\mu = 4.75$} & \multicolumn{2}{c}{$\mu = 6.0$}  \\
		0.001		&	0.731059	&	$	-1.6\times 10^{-11	}$		&	0.904651	&	$	5.0\times 10^{-12	}$ 		&	0.970688	&	$	7.2\times 10^{-12	}$			&	0.991423	&	$	7.5\times 10^{-12	}$	&			0.997527	&	$	7.6\times 10^{-12	}$	\\
		0.010		&	0.731054	&	$	-1.6\times 10^{-11	}$		&	0.904647	&	$	5.0\times 10^{-12	}$ 		&	0.970686	&	$	7.2\times 10^{-12	}$			&	0.991422	&	$	7.3\times 10^{-12	}$	&		0.997527	&	$	7.4\times 10^{-12	}$	\\
		0.100		&	0.730606	&	$	4.0\times 10^{-12	}$		&	0.904302	&	$	3.3\times 10^{-12	}$ 		&	0.970554	&	$	7.2\times 10^{-12	}$			&	0.991381	&	$	7.4\times 10^{-12	}$	&			0.997515	&	$	7.4\times 10^{-12	}$	\\
		0.250		&		0.728278	&	$	-9.9\times 10^{-11	}$		&	0.902471	&	$	-6.3\times 10^{-11	}$ 	&	0.969842	&	$	7.2\times 10^{-12	}$		&	0.991158	&	$	7.4\times 10^{-12	}$		&	0.997450	&	$	7.4\times 10^{-12	}$	\\
		0.500		&	0.720581	&	$	-3.8\times 10^{-10	}$		&	0.895975	&	$	-3.2\times 10^{-10	}$ 		&	0.967206	&	$	6.7\times 10^{-12	}$		&	0.990318	&	$	7.4\times 10^{-12	}$	&		0.997201	&	$	7.4\times 10^{-12	}$	\\
		1.000		&	0.696735	&	$	-3.1\times 10^{-10	}$		&	0.871495	&	$	-3.0\times 10^{-10	}$		&	0.955459	&	$	-5.1\times 10^{-11	}$			&	0.986241	&	$	4.9\times 10^{-12	}$		&	0.995957	&	$	7.5\times 10^{-12	}$	\\
		1.500			&	0.670739	&	$	-3.1\times 10^{-10	}$	&	0.837674	&	$	-2.2\times 10^{-10	}$ 		&	0.934013	&	$	-1.1\times 10^{-10	}$		&	0.977017	&	$	-2.4\times 10^{-11	}$		&	0.992777	&	$	-9.9\times 10^{-11	}$	\\
		2.500		&	0.628698	&	$	-5.9\times 10^{-11	}$		&	0.769604	&	$	-1.0\times 10^{-10	}$		&	0.873570	&	$	-9.2\times 10^{-11	}$		&	0.938713	&	$	-1.2\times 10^{-6	}$			&	0.973639	&	$	-4.2\times 10^{-10	}$	\\ 
		\bottomrule
	\end{tabular*}
\end{table}
\end{center}

Our second test looked at the applicability of Proposition \ref{EP.LN2high}. In practice, this is a test of how good the underlying approximating functions implied in Proposition \ref{EP.LN2high} are. In Figure \ref{Fig1d}, we provide evidence of our ability to approximate $p(x)^k$ for $k = 5, 6, 7, 8$. For $k=5, 6$, our choice of $c_i, L$ remains good enough to enable accurate estimation of the corresponding logit-normal moment. For $k = 7$ and more so for $k=8$, we would start seeing systematic error appear when estimating the $k^\text{th}$ moment for a logit-normal random variable with non-trivial probability in the region $0.5 \pm \epsilon$ for some $\epsilon > 0$. If we push $k$ higher, approximations will get worse. To show this, we calculated $E(P^{k}); k = 1, 2, \ldots, 10$ using both the proposition and numerical integration when $\mu =1, \sigma = 0.8$. As shown in Table \ref{Tabhigh}, the Proposition result does hold for $k > 4$. However, the increasing error in approximation for higher order derivatives of $p(x)$, shown in Figure \ref{Fig1a}, means approximation accuracy of $E(P^k)$ declines noticeably when $\mu =1, \sigma =0.8$ for $k \geq 8$, if $N =11, L = 1.507306$. This implies our method for finding logit-normal moments is not generally applicable to evaluating the integral needed to find the marginal likelihood of $\boldsymbol{\beta}$, $L(\boldsymbol{\beta})$ that appears in binomial logistic mixed models with only an observation level random effect. 

\begin{center}
\begin{table}[h]\caption{Estimates of $\text{Var}(P)$ obtained from piecewise approximation of $p(x)$ with error compared to estimate of $\text{Var}(P)$ obtained from numerical integration.}\label{tab.accVarP}
	\scriptsize
	\begin{tabular*}{\textwidth}{@{}lcrcrcrcrcr@{}} \toprule
		$\sigma$	&	$\text{Var}(P)$	&		Error 			&	$\text{Var}(P)$	&		Error 	&	$\text{Var}(P)$	&		Error 		&	$\text{Var}(P)$	&		Error 		&	$\text{Var}(P)$	&		Error 		\\ \midrule
		&	\multicolumn{2}{c}{$\mu	=		0$}		&	\multicolumn{2}{c}{$\mu	=		1.25$}		&	\multicolumn{2}{c}{$\mu	=		2.5$}		&	\multicolumn{2}{c}{$\mu	=		3.75$}		&	\multicolumn{2}{c}{$\mu		=		5.00$}	\\
		0.001	&	0.000000	&	$	-7.1\times 10^{-8	}$	&	0.000000	&	$	3.8\times 10^{-10	}$	&	0.000000	&	$	-7.3\times 10^{-12	}$	&	0.000000	&	$	-7.1\times 10^{-12	}$	&	0.000000 &	$	-7.5\times 10^{-12	}$	\\ 	
		0.010	&	0.000006	&	$	-4.5\times 10^{-8	}$	&	0.000003	&	$	3.7\times 10^{-10	}$	&	0.000000	&	$	-7.3\times 10^{-12	}$	&	0.000000	&	$	-7.1\times 10^{-12	}$	&	0.000000 &	$	-7.3\times 10^{-12	}$	\\ 	
		0.100	&	0.000622	&	$	-7.2\times 10^{-9	}$	&	0.000300	&	$	-3.9\times 10^{-10	}$	&	0.000050	&	$	-8.3\times 10^{-12	}$	&	0.000005	&	$	-7.1\times 10^{-12	}$	&	0.000000 &	$	-7.3\times 10^{-12	}$	\\ 	
		0.250	&	0.003789	&	$	-3.3\times 10^{-9	}$	&	0.001884	&	$	-7.4\times 10^{-9	}$	&	0.000325	&	$	-7.3\times 10^{-11	}$	&	0.000034	&	$	-7.1\times 10^{-12	}$	&	0.000003 &	$	-7.3\times 10^{-12	}$	\\ 	
		0.500	&	0.013956	&	$	-1.9\times 10^{-9	}$	&	0.007587	&	$	-7.1\times 10^{-9	}$	&	0.001517	&	$	-1.7\times 10^{-9	}$	&	0.000174	&	$	-8.2\times 10^{-12	}$	&	0.000016&	$	-7.3\times 10^{-12	}$	\\ 	
		1.000	&	0.043379	&	$	-3.5\times 10^{-9	}$	&	0.028821	&	$	-4.7\times 10^{-9	}$	&	0.008965	&	$	-2.9\times 10^{-9	}$	&	0.001529	&	$	-4.2\times 10^{-10	}$	&	0.000176 &	$	-2.1\times 10^{-11	}$	\\ 	
		1.500	&	0.073272	&	$	-4.1\times 10^{-9	}$	&	0.056026	&	$	-3.9\times 10^{-9	}$	&	0.025467	&	$	-2.6\times 10^{-9	}$	&	0.007210	&	$	-1.1\times 10^{-9	}$	&	0.001361 &	$	-2.2\times 10^{-10	}$	\\ 	
		2.500	&	0.118925	&	$	-3.4\times 10^{-9	}$	&	0.104409	&	$	-3.1\times 10^{-9	}$	&	0.070875	&	$	-2.1\times 10^{-9	}$	&	0.037530	&	$	-6.1\times 10^{-10	}$	&	0.015707 & 	$	-2.6\times 10^{-10	}$	\\ \midrule
		&	\multicolumn{2}{c}{$\mu	=		0.25$}		&	\multicolumn{2}{c}{$\mu	=		1.5$}		&	\multicolumn{2}{c}{$\mu	=		2.75$}		&	\multicolumn{2}{c}{$\mu	=		4.0$}		&	\multicolumn{2}{c}{$\mu		=		5.25$}	\\
		0.001	&	0.000000	&	$	2.1\times 10^{-10	}$	&	0.000000	&	$	-4.8\times 10^{-9	}$	&	0.000000	&	$	-6.6\times 10^{-12	}$	&	0.000000	&	$	-7.2\times 10^{-12	}$	&	0.000000 &	$	-7.5\times 10^{-12	}$	\\ 	
		0.010	&	0.000006	&	$	2.0\times 10^{-10	}$	&	0.000002	&	$	-3.1\times 10^{-8	}$	&	0.000000	&	$	-6.6\times 10^{-12	}$	&	0.000000	&	$	-7.1\times 10^{-12	}$	&	0.000000	& $	-7.3\times 10^{-12	}$	\\ 	
		0.100	&	0.000603	&	$	-9.0\times 10^{-10	}$	&	0.000223	&	$	-3.0\times 10^{-8	}$	&	0.000032	&	$	-6.6\times 10^{-12	}$	&	0.000003	&	$	-7.1\times 10^{-12	}$	&	0.000000	& $	-7.3\times 10^{-12	}$	\\ 	
		0.250	&	0.003680	&	$	-2.1\times 10^{-9	}$	&	0.001415	&	$	-1.6\times 10^{-8	}$	&	0.000213	&	$	-1.1\times 10^{-11	}$	&	0.000021	&	$	-7.1\times 10^{-12	}$	&	0.000002	& $	-7.3\times 10^{-12	}$	\\ 	
		0.500	&	0.013608	&	$	-1.8\times 10^{-9	}$	&	0.005877	&	$	-8.6\times 10^{-9	}$	&	0.001015	&	$	-6.3\times 10^{-10	}$	&	0.000109	&	$	-7.3\times 10^{-12	}$	&	0.000010	& $	-7.3\times 10^{-12	}$	\\ 	
		1.000	&	0.042667	&	$	-3.6\times 10^{-9	}$	&	0.024153	&	$	-4.7\times 10^{-9	}$	&	0.006562	&	$	-2.3\times 10^{-9	}$	&	0.001017	&	$	-2.4\times 10^{-10	}$	&	0.000111	& $	-1.3\times 10^{-11	}$	\\ 	
		1.500	&	0.072486	&	$	-4.1\times 10^{-9	}$	&	0.049831	&	$	-3.7\times 10^{-9	}$	&	0.020510	&	$	-2.3\times 10^{-9	}$	&	0.005322	&	$	-8.1\times 10^{-10	}$	&	0.000935	& $	-1.5\times 10^{-10	}$	\\ 	
		2.500	&	0.118306	&	$	-3.4\times 10^{-9	}$	&	0.098614	&	$	-2.9\times 10^{-9	}$	&	0.063641	&	$	-2.2\times 10^{-9	}$	&	0.032114	&	$	-1.4\times 10^{-9	}$	&	0.012846	& $	-6.8\times 10^{-10	}$	\\  \midrule
		&	\multicolumn{2}{c}{$\mu	=		0.5$}		&	\multicolumn{2}{c}{$\mu	=		1.75$}		&	\multicolumn{2}{c}{$\mu	=		3.0$}		&	\multicolumn{2}{c}{$\mu	=		4.25$}		&	\multicolumn{2}{c}{$\mu		=		5.5$}	\\
		0.001	&	0.000000	&	$	5.8\times 10^{-10	}$	&	0.000000	&	$	-7.2\times 10^{-9	}$	&	0.000000	&	$	-6.8\times 10^{-12	}$	&	0.000000	&	$	-7.4\times 10^{-12	}$	&	0.000000	& $	-7.5\times 10^{-12	}$	\\ 	
		0.010	&	0.000006	&	$	5.7\times 10^{-10	}$	&	0.000002	&	$	-7.3\times 10^{-9	}$	&	0.000000	&	$	-6.7\times 10^{-12	}$	&	0.000000	&	$	-7.2\times 10^{-12	}$	&	0.000000	& $	-7.3\times 10^{-12	}$	\\ 	
		0.100	&	0.000550	&	$	1.6\times 10^{-11	}$	&	0.000160	&	$	-1.3\times 10^{-8	}$	&	0.000021	&	$	-6.7\times 10^{-12	}$	&	0.000002	&	$	-7.2\times 10^{-12	}$	&	0.000000 &	$	-7.4\times 10^{-12	}$	\\ 	
		0.250	&	0.003372	&	$	-5.8\times 10^{-10	}$	&	0.001024	&	$	-1.4\times 10^{-8	}$	&	0.000137	&	$	-6.9\times 10^{-12	}$	&	0.000013	&	$	-7.2\times 10^{-12	}$	&	0.000001	& $	-7.4\times 10^{-12	}$	\\ 	
		0.500	&	0.012623	&	$	-2.0\times 10^{-9	}$	&	0.004387	&	$	-8.2\times 10^{-9	}$	&	0.000666	&	$	-1.9\times 10^{-10	}$	&	0.000068	&	$	-7.2\times 10^{-12	}$	&	0.000006	& $	-7.3\times 10^{-12	}$	\\ 	
		1.000	&	0.040604	&	$	-3.8\times 10^{-9	}$	&	0.019649	&	$	-4.5\times 10^{-9	}$	&	0.004696	&	$	-1.6\times 10^{-9	}$	&	0.000667	&	$	-1.3\times 10^{-10	}$	&	0.000069	& $	-9.5\times 10^{-12	}$	\\ 	
		1.500	&	0.070179	&	$	-4.1\times 10^{-9	}$	&	0.043416	&	$	-3.5\times 10^{-9	}$	&	0.016216	&	$	-2.0\times 10^{-9	}$	&	0.003868	&	$	-6.1\times 10^{-10	}$	&	0.000635	& $	9.2\times 10^{-10	}$	\\ 	
		2.500	&	0.116470	&	$	-3.3\times 10^{-9	}$	&	0.092190	&	$	-2.8\times 10^{-9	}$	&	0.056585	&	$	-2.0\times 10^{-9	}$	&	0.027225	&	$	-1.2\times 10^{-9	}$	&	0.010415	& $	-5.8\times 10^{-10	}$	\\ \midrule
		&	\multicolumn{2}{c}{$\mu	=		0.75$}		&	\multicolumn{2}{c}{$\mu	=		2.0$}		&	\multicolumn{2}{c}{$\mu	=		3.25$}		&	\multicolumn{2}{c}{$\mu	=		4.5$}		&	\multicolumn{2}{c}{$\mu		=		5.75$}	\\
		0.001	&	0.000000	&	$	-2.0\times 10^{-10	}$	&	0.000000	&	$	-3.8\times 10^{-10	}$	&	0.000000	&	$	-6.9\times 10^{-12	}$	&	0.000000	&	$	-7.4\times 10^{-12	}$	&	0.000000	& $	-7.5\times 10^{-12	}$	\\ 	
		0.010	&	0.000005	&	$	-2.0\times 10^{-10	}$	&	0.000001	&	$	-3.8\times 10^{-10	}$	&	0.000000	&	$	-6.8\times 10^{-12	}$	&	0.000000	&	$	-7.2\times 10^{-12	}$	&	0.000000 &	$	-7.3\times 10^{-12	}$	\\ 	
		0.100	&	0.000474	&	$	-8.2\times 10^{-11	}$	&	0.000111	&	$	-7.5\times 10^{-10	}$	&	0.000013	&	$	-6.9\times 10^{-12	}$	&	0.000001	&	$	-7.2\times 10^{-12	}$	&	0.000000	& $	-7.4\times 10^{-12	}$	\\ 	
		0.250	&	0.002922	&	$	-1.8\times 10^{-10	}$	&	0.000717	&	$	-4.8\times 10^{-9	}$	&	0.000087	&	$	-6.9\times 10^{-12	}$	&	0.000008	&	$	-7.2\times 10^{-12	}$	&	0.000001	& $	-7.4\times 10^{-12	}$	\\ 	
		0.500	&	0.011152	&	$	-2.8\times 10^{-9	}$	&	0.003167	&	$	-6.2\times 10^{-9	}$	&	0.000431	&	$	-4.9\times 10^{-11	}$	&	0.000042	&	$	-7.2\times 10^{-12	}$	&	0.000004	& $	-7.4\times 10^{-12	}$	\\ 	
		1.000	&	0.037397	&	$	-4.2\times 10^{-9	}$	&	0.015536	&	$	-4.1\times 10^{-9	}$	&	0.003291	&	$	-1.1\times 10^{-9	}$	&	0.000432	&	$	-7.2\times 10^{-11	}$	&	0.000043	& $	-8.1\times 10^{-12	}$	\\ 	
		1.500	&	0.066503	&	$	-4.0\times 10^{-9	}$	&	0.037068	&	$	-3.3\times 10^{-9	}$	&	0.012592	&	$	-1.7\times 10^{-8	}$	&	0.002770	&	$	-4.5\times 10^{-10	}$	&	0.000426	& $	8.0\times 10^{-10	}$	\\ 	
		2.500	&	0.113474	&	$	-3.3\times 10^{-9	}$	&	0.085310	&	$	-2.7\times 10^{-9	}$	&	0.049823	&	$	-1.9\times 10^{-9	}$	&	0.022869	&	$	-1.1\times 10^{-9	}$	&	0.008373	& $	-4.8\times 10^{-10	}$	\\ \midrule
		&	\multicolumn{2}{c}{$\mu	=		1.0$}		&	\multicolumn{2}{c}{$\mu	=		2.25$}		&	\multicolumn{2}{c}{$\mu	=		3.5$}		&	\multicolumn{2}{c}{$\mu	=		4.75$}		&	\multicolumn{2}{c}{$\mu		=		6.0$}	\\
		%		0.001	&	0.000000	&	$	-8.4\times 10^{-11	}$	&	0.000000	&	$	-2.5\times 10^{-11	}$	&	0.000000	&	$	-7.0\times 10^{-12	}$	&	0.000000	&	$	-7.4\times 10^{-12	}$	&	0.000000	& $	-7.5\times 10^{-12	}$	\\ 	
		0.010	&	0.000004	&	$	-8.4\times 10^{-11	}$	&	0.000001	&	$	-2.5\times 10^{-11	}$	&	0.000000	&	$	-7.0\times 10^{-12	}$	&	0.000000	&	$	-7.3\times 10^{-12	}$	&	0.000000	& $	-7.4\times 10^{-12	}$	\\ 	
		0.100	&	0.000386	&	$	-3.9\times 10^{-11	}$	&	0.000075	&	$	-4.4\times 10^{-11	}$	&	0.000008	&	$	-7.0\times 10^{-12	}$	&	0.000001	&	$	-7.3\times 10^{-12	}$	&	0.000000	& $	-7.4\times 10^{-12	}$	\\ 	
		0.250	&	0.002403	&	$	-1.3\times 10^{-9	}$	&	0.000488	&	$	-7.8\times 10^{-10	}$	&	0.000055	&	$	-7.0\times 10^{-12	}$	&	0.000005	&	$	-7.3\times 10^{-12	}$	&	0.000000 &	$	-7.4\times 10^{-12	}$	\\ 	
		0.500	&	0.009403	&	$	-4.7\times 10^{-9	}$	&	0.002220	&	$	-3.7\times 10^{-9	}$	&	0.000275	&	$	-1.5\times 10^{-11	}$	&	0.000026	&	$	-7.3\times 10^{-12	}$	&	0.000002 &	$	-7.4\times 10^{-12	}$	\\ 	
		1.000	&	0.033352	&	$	-4.5\times 10^{-9	}$	&	0.011955	&	$	-3.6\times 10^{-9	}$	&	0.002263	&	$	-7.3\times 10^{-10	}$	&	0.000277	&	$	-3.8\times 10^{-11	}$	&	0.000027	& $	-7.6\times 10^{-12	}$	\\ 	
		1.500	&	0.061689	&	$	-3.8\times 10^{-9	}$	&	0.031025	&	$	-3.0\times 10^{-9	}$	&	0.009608	&	$	-1.3\times 10^{-9	}$	&	0.001955	&	$	-3.2\times 10^{-10	}$	&	0.000283	& $	-1.1\times 10^{-9	}$	\\ 	
		2.500	&	0.109412	&	$	-3.2\times 10^{-9	}$	&	0.078150	&	$	-2.5\times 10^{-9	}$	&	0.043448	&	$	-1.7\times 10^{-9	}$	&	0.019038	&	$	2.2\times 10^{-6	}$	&	0.006674	& $3.6\times 10^{-10	}$	\\ 
		\bottomrule
	\end{tabular*}
\end{table}

\begin{table}[h]
	\caption{Estimates of $E(P^k)$ up to 10, using the result of Proposition \ref{EP.LN2high} compared to Numerical Integration when $\mu = 1, \sigma =0.8$.}\label{Tabhigh}
	\centering
	\begin{tabular*}{\textwidth}{@{}lccc ccc cccc@{}} \toprule 
		Estimation method & $E(P)$ & $E(P^2)$ & $E(P^3)$ & $E(P^4)$ & $E(P^5)$ & $E(P^6)$ & $E(P^7)$ & $E(P^8)$ & $E(P^9)$ & $E(P^{10})$  \\  \midrule
		Numerical Integration & 0.707042 & 0.522576 & 0.399138 & 0.312699 & 0.249999 & 0.203222 & 0.167511 & 0.139720 & 0.117736 & 0.100102  \\ 
		Proposition 2 & 0.707042 & 0.522576 & 0.399138 & 0.312699 & 0.249998 & 0.203176 & 0.167343 & 0.138362 & 0.112458 & 0.078556 \\ 
		\bottomrule
	\end{tabular*}
\end{table}
\end{center}

\subsection{Speed}

To evaluate speed of calculation, we have chosen to focus on $E(P), E(P^2), E(P^3), E(P^4), \text{Var}(P)$. Because of the highly sequential nature of the component functions needed to calculate our estimates of logit-normal moments, the marginal cost of calculating multiple moments simultaneously drops as the number of moments needed increases. This means we will provide two speed comparisons. The first comparison consists of calculating a single moment, $E(P)$, using our method and numerical integration. The second will be a comparison of calculating $E(P), E(P^2), E(P^3), E(P^4), \text{Var}(P)$ using our method and numerical integration, with results presented in Table \ref{TabspeedEP}. For speed testing, we implemented our approach in R but written in C++ code using the package \texttt{Rcpp} \cite{Eddelbuettel2011}, while numerical integration was done using the \texttt{integrate} function in R. To determine the time distribution, the functions were evaluated 10,000 times and timed using the R package \texttt{microbenchmark} \cite{Mersmann2025}. The values for $\mu, \sigma$ were $\mu = 0.2, 1.2, \sigma = 0.1, 1, 2.5$. This was done to see if speed of evaluation was affected by the choice of parameter values.

\begin{center}
\begin{table}[h]
	\centering\caption{Quantiles of time taken to execute a single evaluation of logit-normal moments based on piecewise approximation and numerical integration. Times are reported in $10^{-6}$ seconds.}\label{TabspeedEP}
	\begin{tabular*}{\textwidth}{@{}l ccc ccc ccc ccc@{}} \toprule
		& \multicolumn{6}{c}{Calculating $E(P)$} &  \multicolumn{6}{c}{Calculating $E(P), \ldots, E(P^4), \text{Var}(P)$} \\
		& \multicolumn{3}{c}{$\mu	=0.2$} & \multicolumn{3}{c}{$\mu	=1.2$} &  \multicolumn{3}{c}{$\mu	=0.2$} & \multicolumn{3}{c}{$\mu	=1.2$} \\
		$\sigma$ & 0.1 & 1 & 2.5 &  0.1 & 1 & 2.5 &  0.1 & 1 & 2.5 & 0.1 & 1 & 2.5 \\
		\midrule 
		Quantile & \multicolumn{12}{c}{Piecewise approximation} \\ 
		\midrule
		10\% & 5.00 & 5.30 & 5.40 & 4.50 & 5.20 & 5.40 & 5.10 & 5.40 & 5.50 & 4.60 & 5.30 & 5.60 \\ 
		25\% & 5.20 & 5.50 & 5.60 & 4.70 & 5.40 & 5.60 & 5.30 & 5.60 & 5.70 & 4.80 & 5.50 & 5.80 \\ 
		50\% & 5.50 & 5.70 & 5.80 & 4.90 & 5.70 & 5.90 & 5.60 & 5.80 & 5.90 & 5.00 & 5.80 & 6.00 \\ 
		75\% & 5.80 & 6.10 & 6.20 & 5.30 & 6.10 & 6.30 & 6.00 & 6.20 & 6.30 & 5.40 & 6.20 & 6.40 \\ 
		90\% & 7.10 & 7.40 & 7.60 & 6.30 & 7.30 & 7.60 & 7.30 & 7.50 & 7.70 & 6.60 & 7.50 & 7.80 \\ 
		\midrule
		Quantile & \multicolumn{12}{c}{Numerical integration} \\ 
		\midrule 
		10\% & 28.00 & 47.79 & 47.60 & 27.60 & 37.50 & 47.40 & 118.10 & 190.60 & 218.70 & 117.10 & 168.70 & 218.60 \\ 
		25\% & 29.00 & 49.40 & 49.10 & 28.58 & 38.80 & 49.00 & 122.10 & 197.30 & 226.30 & 120.90 & 174.40 & 225.90 \\ 
		50\% & 30.20 & 51.40 & 51.20 & 29.80 & 40.40 & 51.00 & 127.10 & 205.70 & 236.10 & 126.10 & 182.30 & 235.80 \\ 
		75\% & 32.10 & 54.70 & 54.30 & 31.90 & 43.10 & 54.12 & 137.00 & 223.50 & 257.70 & 136.10 & 198.80 & 258.10 \\ 
		90\% & 39.10 & 69.70 & 69.60 & 39.50 & 53.50 & 68.71 & 170.80 & 266.60 & 307.81 & 171.00 & 237.11 & 306.80 \\ 
		\bottomrule
	\end{tabular*}
\end{table}
\end{center}

In Table \ref{TabspeedEP}, we see evaluation time for $E(P)$ using numerical integration is somewhat sensitive to choice of $\sigma$, with evaluation being faster for low $\sigma$. This was less so for our approach based on piecewise approximation of $p(x)$. Our approach for finding $E(P)$ was roughly 5.5-6 times faster when $\sigma = 0.1$ and was roughly 7-9 times faster  when $\sigma =1, 2.5$. When calculating the first four moments and variance simultaneously, the low marginal cost of obtaining higher order moments and the variance using our approach becomes apparent. On average, obtaining  $E(P^2), E(P^3), E(P^4), \text{Var}(P)$ in addition to $E(P)$ increased run time by 2 to 3 \%, or $0.1-0.2 \times 10^{-6}$ s. Using standard numerical integration to obtain $E(P^2), E(P^3), E(P^4), \text{Var}(P)$ in addition to $E(P)$ caused a linear increase in coast, as numerical solving of four rather than one integral was needed. Our approach to finding $E(P), E(P^2), E(P^3), E(P^4), \text{Var}(P)$ was on average between 23-25 times faster when $\sigma =0.1$, 31-35 times faster when $\sigma =1$ and 39-40 times faster when $\sigma=2.5$. The low cost of finding higher order moments/expectation with respect to $\mu$ of $E(P)$ using our piecewise approximation to $p(x)$ indicates our approach would be particularly useful in algorithms, such as logistic regression models where parameters are estimated using expectation propagation, where we need moments of multiple order simultaneously.

\section{An inferential application that will work: Expectation propagation for logistic regression}

Our method for approximating logit-normal moments has been shown to have a practical upper limit of 7/8 before losing accuracy. However there are many inferential problems where logit-normal moments under order 7/8 are needed. As an example, we will now demonstrate our results are sufficient to remove the need to use numerical integration when fitting logistic regression in using expectation propagation for many choices of prior. As the illustrative dataset, we chose the {\it lbw} dataset from the {\it COUNT} package in R, originally given in \cite{Hosmer2000}.

\subsection{Dataset and variables}

The {\it lbw} dataset consists of 189 observations with 10 variables. For the response variable, we have chosen to use \texttt{low}, an indicator (1 = yes, 0 = no) whether a birth was classed as low-weight or not . For our illustrative example, we have chosen to consider only 3 of the possible predictor variables. These are \texttt{Smoke}, a binary variable indicating if the mother had a history of smoking (1 = yes, 0 = no), \texttt{lwt}, a continuous variable giving maternal weight at last menstrual period in pounds, and \texttt{Hypertension}, a binary variable indicating if the mother had a history of hypertension (1 = yes, 0 = no). For the continuous variable \texttt{lwt}, we will standardise the covariate values to create \texttt{Standardised Maternal Weight} before running the algorithm,. 

\subsection{Estimation methods}

Our primary interest in implementing expectation propagation (EP) is to demonstrate that for logistic regression, rewriting the moment matching step into the form of Proposition \ref{prop1} leads to faster implementation, as we know the required logit-normal moments. This means we will implement two versions of EP, written in R code. The first, EP-NI, will implement sequential-EP, as described in \cite{Gelman2014} using numerical integration to solve the moment matching step. The second, EP-LN, will implement sequential-EP using our formula for $E(P), \frac{{\rm d}E(P)}{{\rm d} \mu}, \frac{{\rm d}^2E(P)}{{\rm d} \mu^2}$ given in (\ref{eq.EO}), (\ref{eq.EOdiff}) to avoid numerical integration. For more details on the particulars of the algorithm, particularly for accounting that $f_i(\boldsymbol{\eta}_i)$ switches between ${\bf p}_i$ and $1-{\bf p}_i$ when  ${\bf y}_i$ changes from 1 to 0,  see the appendix. While not of main interest, we will include a MCMC comparison to demonstrate the accuracy of expectation propagation at posterior approximation. This comparison will be with a Gibbs sampler using P{\'o}lya-Gamma data augmentation \cite{Polson2013}. For all MCMC analyses, 3 chains will be run for 30,000 iterations with the first 15,000 iterations removed as burn-in. Convergence will be checked using Gelman-Rubin statistics \cite{Gelman1992}.

For the prior on the regression coefficients, we choose $p(\boldsymbol{\beta}) := \mathcal{N}(\boldsymbol{\beta}|\tilde{\boldsymbol{\beta}}, \tilde{\boldsymbol{\Sigma}})$, where $\tilde{\boldsymbol{\beta}} = (0, -0.1, 0.1, 0.2)$, and $\tilde{\boldsymbol{\Sigma}}_{ij} = 1.3\times 0.5^{|i-j|}$. There are two motivations for this. The first is with a normal prior, the moment matching step required in the EP algorithm can skip over the prior \cite{Chopin2017}. Second, in applications of EP to determine the marginal likelihood in frequentist probit mixed models \cite{Hall2019}, the prior equivalent is also normal. Moreover \cite{Hall2019} state the extension of their algorithm to a logistic mixed model requires solving univariate integrals equivalent to (\ref{eq.EP}) with $f_i(\boldsymbol{\eta}_i) = (1-{\bf p}_i)$. Hence showing our approach works for logistic regression with a normal prior implies our work could be extended to the logistic examples mentioned in \cite{Hall2019}.

\subsection{Results}

In Figure \ref{Fig1e}, we present posterior density plots estimated using MCMC and EP-NI, EP-LN. As expected, very high accuracy of approximation was obtained. This was confirmed by calculating the accuracy diagnostic from \cite{Luts2015}, 
\begin{equation} \text{accuracy}(g(\boldsymbol{\beta}_i)) = 100 \times \bigg( 1 - 0.5 \int_{-\infty}^\infty |g(\boldsymbol{\beta}_i)-p(\boldsymbol{\beta}_i|{\bf y})| d\boldsymbol{\beta}_i \bigg),\end{equation} 
where $g(\boldsymbol{\beta})$ is the approximate posterior and $p(\boldsymbol{\beta}|{\bf y})$ is the \lq true\rq\hspace{0.5mm} posterior as estimated with MCMC samples. For the 4 parameters in our test model, accuracy of EP approximation was $0.9882$ for the \texttt{Intercept}, $0.9926$ for \texttt{Smoke}, $0.9788$ for \texttt{Standardised Maternal Weight} and $0.9934$ for \texttt{Hypertension}.

\begin{center}
\begin{table}[h]
	\centering\caption{Estimates of $\hat{\boldsymbol{\beta}}, \hat{\boldsymbol{\Sigma}}$ obtained from EP-LN, alongside relative error between EP-NI and EP-LN. Relative error was calculated the difference between NI and LN estimates divided by the LN estimate. Note \texttt{Maternal} has been used to denote \texttt{Standardised Maternal Weight} and in the row labels, the first letter denotes each coefficient.}\label{TabEPNILN}
	\begin{tabular*}{\textwidth}{@{}l rr rr rr rr rr@{}} \toprule
		& \multicolumn{2}{c}{} & \multicolumn{8}{c}{Coefficient $j$} \\
		\multicolumn{2}{c}{Coefficient} & & \multicolumn{2}{c}{\texttt{I} =\texttt{Intercept}} &  \multicolumn{2}{c}{\texttt{S} =\texttt{Smoke}} & \multicolumn{2}{c}{\texttt{M} = \texttt{Maternal}} & \multicolumn{2}{c}{\texttt{H} =\texttt{Hypertension}}  \\
		$i$ & $\hat{\boldsymbol{\beta}}_i$,  & relative error & $\hat{\boldsymbol{\Sigma}}_{i,j}$,   & relative error & $\hat{\boldsymbol{\Sigma}}_{i,j}$,   & relative error & $\hat{\boldsymbol{\Sigma}}_{i,j}$,   & relative error & $\hat{\boldsymbol{\Sigma}}_{i,j}$,   & relative error \\
		\midrule
		\texttt{I} & -1.071865 & $-7.1 \times 10^{-9}$ & 0.041916 & $1.1 \times 10^{-7}$ &  &  &  & &  &  \\ 
		\texttt{S} & 0.456130 & $-3.7 \times 10^{-8}$ & -0.035515 & $1.1 \times 10^{-7}$ & 0.089018 & $-4.6 \times 10^{-8}$ &  &  &  &  \\ 
		\texttt{M} & -0.456085 & $-6.5 \times 10^{-9}$ & 0.004408 & $1.1 \times 10^{-6}$ & 0.003743 & $-1.6 \times 10^{-6}$ & 0.033009 & $8.2 \times 10^{-8}$ &  &   \\
		\texttt{H} & 1.166203 & $2.0 \times 10^{-8}$ & -0.021236 & $2.1 \times 10^{-7}$ & -0.002845 & $-2.8 \times 10^{-6}$ & -0.018205 & $-1.3 \times 10^{-7}$ & 0.300135 & $6.0 \times 10^{-8}$  \\ 
		\bottomrule
	\end{tabular*}
\end{table}
\end{center}

Focusing on the comparison between EP-NI and EP-LN, we see, as expected, that the two algorithms are effectively equivalent, as shown in Table \ref{TabEPNILN}. The relative error in estimates of $\hat{\boldsymbol{\beta}}, \hat{\boldsymbol{\Sigma}}$ between EP-NI and EP-LN are in the range $10^{-8} - 10^{-9}$ for elements of $\hat{\boldsymbol{\beta}}$,  and $10^{-6} - 10^{-8}$ for elements of $\hat{\boldsymbol{\Sigma}}$. 

\begin{figure}[h]
	\centering
	\includegraphics[width=0.8\linewidth]{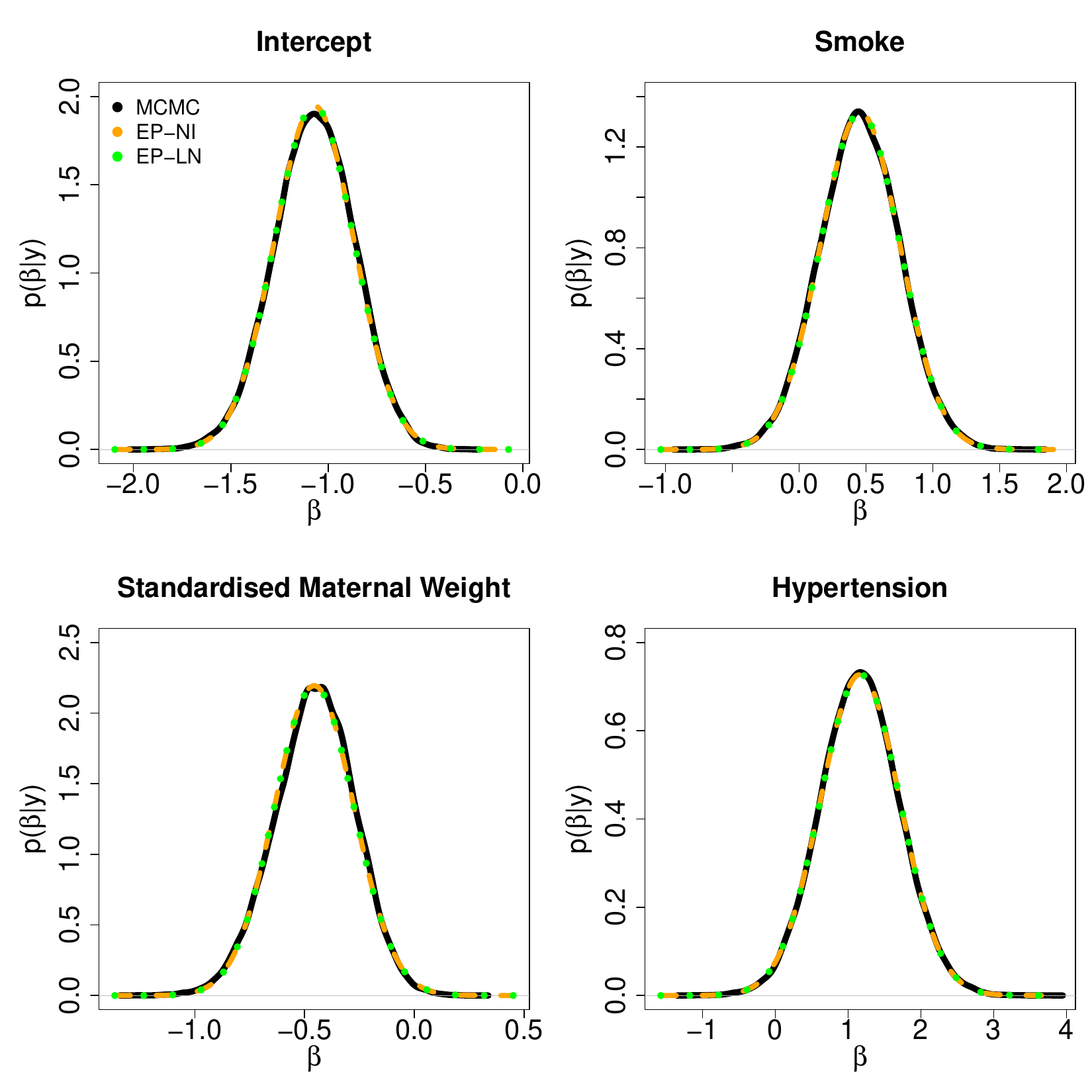}
	\caption{(Approximate) posterior densities estimated using MCMC, EP-NI, EP-LN for a logistic regression fitted to low birth weight data.}\label{Fig1e}
\end{figure}

Lastly, to benchmark the computational speed gain obtained by using EP-LN over EP-NI, we ran both versions of EP 1,000 times, with timings determined using the R package \texttt{microbenchmark} \cite{Mersmann2025}. Results are provided in Table \ref{TabspeedEPNILN}. This shows that for a sample of size 189, replacing numerical integration with our approximations of logit-normal moments reduced computation time by about 58 \%.

\begin{center}
\begin{table}[h]
	\centering\caption{Quantiles of time taken to fit logistic regression to the low birth weight data, assuming a normal prior using expectation propagation. EP-NI implements the sequential EP algorithm as described in \citet{Gelman2014}. EP-LN implements a modified version where the moment matching step is solved using functions of logit-normal moments rather than numerical integration. Times are reported in $10^{-3}$ seconds.}\label{TabspeedEPNILN}
	\begin{tabular*}{\textwidth}{@{\extracolsep\fill}lcc cc cc cc cc@{\extracolsep\fill}}\toprule
		Quantile & \multicolumn{2}{c}{10 \%} & \multicolumn{2}{c}{25 \%} & \multicolumn{2}{c}{50 \%} &  \multicolumn{2}{c}{75 \%}  & \multicolumn{2}{c}{90 \%}   \\
		& EP-NI & EP-LN 	& EP-NI & EP-LN  	& EP-NI & EP-LN  	& EP-NI & EP-LN 	& EP-NI & EP-LN \\
		\midrule 
 times $10^{-3} s$ & 612 & 256  & 631 & 264  & 663 & 277  &  714 & 301  & 790 & 335   \\
		\bottomrule
	\end{tabular*}
\end{table}
\end{center}

\newpage

\section{Conclusions}

We would argue  this paper fulfils the conclusion of \citet{Holmes2022a} that \lq the possibility of either analytic positive integer moments, or at least a highly accurate and simple approximation could be determined, through further study that utilises the similarity between the log-normal and logit-normal distributions\rq \hspace{0.5 mm}. Through examples, we have shown our estimates of moments up to order 4 do not suffer the numerical issues that the Mordell integral based representation of $E(P)$ from \citet{Holmes2022a} has when $\sigma$ is small, and thus maintain a high level of accuracy in all settings considered. 

We have shown our method for estimating low order moments, up to the fourth, of logit-normal random variables is faster than standard numerical integration in R, especially if multiple moments are needed simultaneously. As error in the underlying approximating function accumulates with increasing moment order, a finite limit exists in practice on the highest moment we can calculate accurately. Referring to the implied approximating function of the $k^\text{th}$ power of the logistic function, this limit is around 7 or 8. However, the highest moment accurately approximated will also depend on $\mu, \sigma$, as error in the underlying function $p(x)^k$ first appears for $x \approx 0$. 

The inability of our approach to estimate moments of arbitrarily high order means our work has limited application as an alternative to quadrature to finding the marginal likelihood in frequentist treatments of the logistic mixed model. However, while not initially obvious, we have demonstrated our approach to finding logit-normal moments does enable faster, numerical integration free implementation of expectation propagation for logistic regression. This is because only knowledge of the equivalent to logit-normal moments of order $1, 2, 3$ are required to solve the expectation propagation moment matching steps as described in \cite{Gelman2014} related to the likelihood in logistic regression. We do note expectation propagation algorithms for regression type problems with normal priors have been considered for finding the marginal likelihood for fairly general frequentist treatments of generalised linear mixed models \cite{Hall2019}. This indicates our illustrative application may have indirect frequentist as well as Bayesian algorithm uses.  Lastly, we suspect our results, apart from enabling faster implementation of expectation propagation, can be utilised in other versions of approximate Bayesian inference, such as variational Bayes methods for logistic and negative binomial regression. These applications are the focus of future work.   

\appendix

%\section{Supplementary material}

%In the separate supplementary materials document, the following can be found:
%\begin{itemize}
%	\item A proof of Proposition \ref{prop1}.
%	\item More detail on Proposition \ref{EP.LN2new}. \vspace{-2mm}
%	\item A proof of  Proposition  \ref{EP.LN2high}. \vspace{-2mm}
%	\item Tables demonstrating the accuracy of our approach for calculating logit-normal moments up to order 4. This includes skewness and kurtosis. \vspace{-2mm}
%	\item A description of the implementation of expectation propagation for logistic regression using logit-normal moments.
%	\item Code for calculating logit-normal moments.  
%\end{itemize} 

%For further code, see \url{https://github.com/johnbholmes/Logit-normal-moments-with-EP-applications}. This  contains R code for implementing expectation propagation, and a Gibbs sampler for MCMC based comparison.

%\backmatter
\bmsubsection*{Author Contributions}

John Holmes proposed the study and potential illustrative applications. Ness Arps was involved with parameter testing. Marco Reale was involved in manuscript refinement.

%\bmsubsection*{Acknowledgments}
%This is acknowledgment text. \cite{Kenamond2013} Provide text here. This is acknowledgment text. Provide text here. This is acknowledgment text. Provide text here. This is acknowledgment text. Provide text here. This is acknowledgment text. Provide text here. This is acknowledgment text. Provide text here. This is acknowledgment text. Provide text here. This is acknowledgment text. Provide text here. This is acknowledgment text. Provide text here.

\bmsubsection*{Financial Disclosure}

None reported.

\bmsubsection*{Conflicts of Interest}

The authors declare no conflicts of interest.

%\bibliography{wileyNJD-Chicago}
%\bibliography{../../Annals_of_Applied_statistics_template/References.bib}
%\bibliography{LNDistreference.bbl}
\bibliography{Inferential applications of moments of the logit-normal distribution - arxiv.bbl}
%\bmsubsection*{Supporting Information}

%Additional supporting information can be found online in the Supporting Information
%section.

\newpage

\appendix

\section{Expectation Propagation as described in \textbf{\textit{Bayesian Data Analysis}}}

One of the inferential applications we considered as motivational for researching logit-normal moments was Expectation Propagation (EP) \citep{Minka2001, Minka2001b} for logistic regression. Specifically, our motivation comes from the EP algorithm for a regression problem with fixed dispersion parameter described in \cite{Gelman2014}, which we outline below.

\subsection{EP for a regression problem with fixed dispersion parameter}

As we are considering a regression type problem, we assume the approximate posterior is normal $g(\boldsymbol{\beta}) := \mathcal{N}(\boldsymbol{\beta}|\hat{\boldsymbol{\beta}}, \hat{\boldsymbol{\Sigma}})$. This choice of $g$ does admit the factorisation $g(\boldsymbol{\beta}) = \prod_{i=0}^ng_i(\boldsymbol{\beta})$, which we need as the Kullback-Leibler divergence to be minimised is
\begin{equation} D_{KL}\{cg_{-i}(\boldsymbol{\beta})f_i(\boldsymbol{\beta})||g(\boldsymbol{\beta})\} = \int cg_{-i}(\boldsymbol{\beta})f_i(\boldsymbol{\beta}) \{\log(cg_{-i}(\boldsymbol{\beta})f_i(\boldsymbol{\beta}))-\log(g(\boldsymbol{\beta}))\} {\rm d}\boldsymbol{\beta}, \end{equation}
where $c$ is a normalising constant. 

The steps needed to perform this minimisation are given below. As we assume a normal prior, we let $g_0(\boldsymbol{\beta})$ be equal to the prior. Using a normal prior also means $g_0(\boldsymbol{\beta})$ will never need updating \citep{Chopin2017}, so moment matching only needs to be applied to the $n$ terms related to the likelihood. 
\begin{itemize}
	\item For $t = 1, \ldots,$
	\begin{itemize}
		\item For $i = 1, \ldots, n$
		\begin{itemize}
			\item[1] Compute the (natural) parameters of the cavity distribution $g_{-i}(\boldsymbol{\beta})$,
			
			\item[] $\hat{\boldsymbol{\Sigma}}_{-i}^{-1} = \hat{\boldsymbol{\Sigma}}^{-1} -\hat{\boldsymbol{\Sigma}}_i^{-1}$, $\hat{\boldsymbol{\Sigma}}_{-i}^{-1}\hat{\boldsymbol{\beta}}_{-i} =\hat{\boldsymbol{\Sigma}}^{-1}\hat{\boldsymbol{\beta}} - \hat{\boldsymbol{\Sigma}}_i^{-1}\hat{\boldsymbol{\beta}}_i \Rightarrow \hat{\boldsymbol{\beta}}_{-i} = \hat{\boldsymbol{\Sigma}}_{-i}(\hat{\boldsymbol{\Sigma}}^{-1}\hat{\boldsymbol{\beta}} - \hat{\boldsymbol{\Sigma}}_i^{-1}\hat{\boldsymbol{\beta}}_i)$. 
			
			\item[2]  Compute the parameters of $g_{-i}(\boldsymbol{\eta}_i)$, where $\boldsymbol{\eta}_i = {\bf x}_i'\hat{\boldsymbol{\beta}}$ and ${\bf x}_i$ are the covariate values, including intercept, for the $i^\text{th}$ observation, 
			
			\item[] $M_{-i} = {\bf x}_i'\hat{\boldsymbol{\beta}}_{-i}$, $V_{-i} = {\bf x}_i'\hat{\boldsymbol{\Sigma}}_{-i}{\bf x}_i$ 
			\item[3] Construct the un-normalised tilted distribution and calculate,
			\begin{eqnarray} E_k = \int_{-\infty}^{\infty} \boldsymbol{\eta}_i^k g_{-i}(\boldsymbol{\eta}_i)f_i(\boldsymbol{\eta}_i) d\boldsymbol{\eta}_i \quad\text{for $k = 0, 1, 2$}. \nonumber  \end{eqnarray} 
			Then moment match and set $M = \frac{E_1}{E_0}$ and $V = \frac{E_2}{E_0}-M^2$. Note in practice, we will need to specify finite bounds on the integral. The simplest choice would be $M_{-i} \pm \delta\sqrt{V_{-i}}$ for suitably large $\delta$.	 
			\item[4] Determine the natural parameters of $g_i(\boldsymbol{\eta}_i)$,
			
			\item[] $M_i/V_i = M/V - M_{-i}/V_{-i}$,\hspace{2 mm} $1/V_i = 1/V - 1/V_{-i}$.
			
			\item[5] Transform the natural parameters found in step 4 to those of $g_i(\boldsymbol{\beta})$,
			
			\item[] $\hat{\boldsymbol{\Sigma}}_{i}^{-1}\hat{\boldsymbol{\beta}}_{i} = {\bf x}_iM_i/V_i, \hat{\boldsymbol{\Sigma}}_{i}^{-1} = {\bf x}_i(1/V_i){\bf x}_i'$ 
			
			\item[6] Update the natural parameters of $g(\boldsymbol{\beta})$,
			
			\item[] $\hat{\boldsymbol{\Sigma}}^{-1} =  \hat{\boldsymbol{\Sigma}}_i^{-1}+ \hat{\boldsymbol{\Sigma}}_{-i}^{-1}$ and $\hat{\boldsymbol{\Sigma}}^{-1}\hat{\boldsymbol{\beta}}= \hat{\boldsymbol{\Sigma}}_i^{-1}\hat{\boldsymbol{\beta}}_i + \hat{\boldsymbol{\Sigma}}_{-i}^{-1}\hat{\boldsymbol{\beta}}_{-i}$.\\
			
		\end{itemize}
	\end{itemize}
	Stop once estimates have converged.
\end{itemize}

Some versions, known as parallel EP, skip step 6, updating $g(\boldsymbol{\beta})$ only after $g_i(\boldsymbol{\beta})$ has been updated for all $i$. In our example, we will implement sequential EP, which does not skip step 6.

\subsubsection{Proving Proposition 1}

Our interest for speeding up inference is focused on avoiding the numerical integrations in step 3,
\begin{eqnarray} E_k = \int_{-\infty}^{\infty} \boldsymbol{\eta}_i^k g_{-i}(\boldsymbol{\eta}_i)f_i(\boldsymbol{\eta}_i) d\boldsymbol{\eta}_i \quad\text{for $k = 0, 1, 2$}. \label{EQEK} \end{eqnarray}  
To achieve this for logistic regression, this requires re-writing (\ref{EQEK}) as shown in Proposition \ref{prop1}.

\begin{proof} As $g_{-i}(\boldsymbol{\eta}_i)$ is a normal density with mean ${\bf x}_i\hat{\boldsymbol{\beta}}$ and variance ${{\bf x}_i'\hat{\boldsymbol{\Sigma}}{\bf x}_i}$, the equations in (\ref{EQEK}) correspond to the expected value of some function with respect to a normal random variable. We will now apply Stein's Lemma \citep{Stein1972} (see Result \ref{hxExpect0} from main paper) to rewrite (\ref{EQEK}) in terms of derivatives of $f_i(\boldsymbol{\eta}_i)$. We will assume that $f_i(\boldsymbol{\eta}_i)$ is twice differentiable.

	For $E_0$, no change is required, as $E_0 = E(f_i(\boldsymbol{\eta}_i))$. To re-write $E_1$, we just re-arrange Result \ref{hxExpect0} to $E(Xh(X))= \sigma^2E(h'(X)) + \mu E(h(X))$. Hence 
	\begin{eqnarray}
		E_1 = M_{-i}E(f_i(\boldsymbol{\eta}_i)) + V_{-i}E(f_i'(\boldsymbol{\eta}_i)) 
	\end{eqnarray}
	
	Re-writing $E_2$ is more complicated. First let $\tilde{h}(X) = Xh(X)$, and apply Result \ref{hxExpect0}, noting $\tilde{h}'(X) = h(X) + Xh'(X)$,
	\begin{eqnarray} \sigma^2E((Xh(X))') &=& E((X-\mu)Xh(X)) \nonumber \\
		\sigma^2E(h(X)) + \sigma^2E(Xh'(X)) &=& E(X^2h(X)) - \mu E(Xh(X)). \label{eqp2} 	
	\end{eqnarray}	
	
	Next substitute the formula for $E_1$ into parts of both the left and right hand side of (\ref{eqp2}), but with $h'(X)$ replacing $h(X)$ in the substitution on the left hand side.
	\begin{eqnarray} 
		\sigma^2E(h(X)) + \sigma^2E(Xh'(X)) &=& \sigma^2E(h(X)) + \sigma^2(\sigma^2E(h''(X)) + \mu E(h'(X)) ) \nonumber \\
		&=& \sigma^2E(h(X)) + \sigma^4E(h''(X)) + \sigma^2\mu E(h'(X)) ) \label{eqp3} \\
		E(X^2h(X)) - \mu E(Xh(X)) &=& 	E(X^2h(X)) - \mu (\sigma^2E(h'(X)) + \mu E(h(X))) \nonumber \\
		&=& E(X^2h(X)) - \mu\sigma^2E(h'(X)) - \mu^2 E(h(X)). \label{eqp4}
	\end{eqnarray}
	As (\ref{eqp3}-\ref{eqp4}) are equal by (\ref{eqp2}), the result for $E_2$ comes by re-arranging to make $E(X^2h(X))$ the subject,
	\begin{eqnarray}
		E(X^2h(X)) &=& (\mu^2 + \sigma^2)E(h(X)) + 2\mu\sigma^2E(h'(X)) + \sigma^4E(h''(X)), 
	\end{eqnarray}
	hence \begin{eqnarray}
		E_2 = (M_{-i}^2+ V_{-i})E(f_i(\boldsymbol{\eta}_i)) + 2M_{-i}V_{-i}E(f_i'(\boldsymbol{\eta}_i)) +  V_{-i}^2E(f_i''(\boldsymbol{\eta}_i)).
	\end{eqnarray}
\end{proof} 
Lastly, we can use result \ref{hxExpect} (see main paper),

This allows us to re-write $E_1, E_2$ as 
\begin{eqnarray}
	E_1 &=& M_{-i}E(f_i(\boldsymbol{\eta}_i)) + V_{-i}\frac{ {\rm d}E(f_i(\boldsymbol{\eta}_i))}{{\rm d}M_{-i}} \\ 
	E_2 &=& (M_{-i}^2+ V_{-i})E(f_i(\boldsymbol{\eta}_i)) + 2M_{-i}V_{-i}\frac{ {\rm d}E(f_i(\boldsymbol{\eta}_i))}{{\rm d}M_{-i}} +  V_{-i}^2\frac{ {\rm d}^2E(f_i(\boldsymbol{\eta}_i))}{{\rm d}M_{-i}^2}. \label{eqlat}
\end{eqnarray}

\subsubsection{Implications in the binary logistic regression case} 	

For a binary logistic regression, with $\boldsymbol{\eta}_i$ being normal, $f_i(\boldsymbol{\eta}_i)$ corresponds to ${\bf p}_i = (1+ e^{-\boldsymbol{\eta}_i})^{-1}$, a logit-normal random variable with location ${\bf x}_i\hat{\boldsymbol{\beta}}$ and scale $({\bf x}_i'\hat{\boldsymbol{\Sigma}}{\bf x}_i)^{0.5}$ parameters when $y_i = 1$. If $y_i = 0$,  $f_i(\boldsymbol{\eta}_i)= 1 - {\bf p}_i$. In addition, the logit-normal distribution has the symmetry, if $P \sim \text{logit}\mathcal{N}(\mu,\sigma^2)$ then $1- P \sim \text{logit}\mathcal{N}(-\mu,\sigma^2)$. Exploiting this symmetry, for binary logistic regression the integrals in (\ref{EQEK}) needed for moment matching can be compactly written as
\begin{eqnarray}
	E_0 = E({\bf p}_i), \quad E_1 = (2{\bf y}_i-1)E(\boldsymbol{\eta}_i{\bf p}_i), \quad E_2 = E(\boldsymbol{\eta}_i^2{\bf p}_i) \label{bineq3}
\end{eqnarray}	
where ${\bf p}_i \sim \text{logit}\mathcal{N}((2{\bf y}_i-1)M_{-i},V_{-i}), \boldsymbol{\eta}_i \sim \mathcal{N}((2{\bf y}_i-1)M_{-i},V_{-i})$. Applying the alternative form of $E_0, E_1, E_2$ given in Proposition \ref{prop1}, specifically (\ref{eqlat}), we have shown implementing EP moment matching for logistic regression just requires a logit-normal first moment along with its first two derivatives w.r.t. $(2{\bf y}_i-1)M_{-i}$. 

% Main text

\section{The expected value (first moment) of a logit-normal random variable}\label{IntroA}

We propose that the expected value of a logit-normal random variable with location parameter $\mu$ and scale parameter $\sigma$ is very accurately approximated by the result of Proposition \ref{EP.LN2new}. This result is obtained by determining the expected value of our proposed approximation of the logistic function given in (\ref{ramplogistic.H})

\begin{proof}
	To determine the result given in Proposition \ref{EP.LN2new}, start by replacing $(1+e^{-x})^{-1}$ with the result of (\ref{ramplogistic.H}). This leads to the following representation of $E(P)$:
	\begin{eqnarray}  E(P) &\approx&\frac{1}{\sigma\sqrt{2\pi}}\int_{-\infty}^{\infty} \bigg\{ \mathbb{1}_{x > 0} + (-1)^{\mathbb{1}_{x>0}}\sum_{i=1}^{N} ((-1)^{i-1}\mathbb{1}_{|x| \geq L}+ic_i\mathbb{1}_{|x| < L})e^{-i|x|}\bigg\} e^{-\frac{(x-\mu)^2}{2\sigma^2}} {\rm d} x, \nonumber \\
		&=& \frac{1}{\sigma\sqrt{2\pi}}\bigg\{ \int_{0}^{\infty} e^{-\frac{(x-\mu)^2}{2\sigma^2}} {\rm d} x + \sum_{i=1}^{N}(-1)^{i-1} \int_{-\infty}^{-L} e^{ix}e^{-\frac{(x-\mu)^2}{2\sigma^2}} {\rm d} x + \sum_{i=1}^{N}ic_i \int_{-L}^{0} e^{ix}e^{-\frac{(x-\mu)^2}{2\sigma^2}} {\rm d} x \nonumber \\
		&&- \sum_{i=1}^{N}(-1)^{i-1} \int_{L}^{\infty} e^{-ix}e^{-\frac{(x-\mu)^2}{2\sigma^2}} {\rm d} x - \sum_{i=1}^{N}ic_i \int_{0}^{L} e^{-ix}e^{-\frac{(x-\mu)^2}{2\sigma^2}} {\rm d} x\bigg\}. \label{eq3.2} \end{eqnarray}
	
	The five component integrals in (\ref{eq3.2}) are examples of $E(h(X)), X \sim \mathcal{N}(\mu,\sigma^2)$, that are known analytically. The specific terms are:
	\begin{itemize}	
		\item[1.] The probability $\Pr(X\geq 0)$ when $X$ is normal,
		\begin{eqnarray} \frac{1}{\sigma\sqrt{2\pi}}\int_{0}^{\infty} e^{-\frac{(x-\mu)^2}{2\sigma^2}} {\rm d} x = 1- \Phi((0-\mu)/\sigma) = \Phi(\mu/\sigma). \label{eq.int1} \end{eqnarray}
	\end{itemize}
	
	The remaining component integrals, denoted 2-5, are examples of a moment generating function of a normal distribution truncated at $(a, b)$ multiplied by $\Phi((b-\mu)/\sigma)-\Phi((a-\mu)/\sigma)$. This integral is given in (\ref{eq.logNtrunc}), 
	\begin{eqnarray} \frac{1}{\sigma\sqrt{2\pi}}\int_{a}^{b}e^{tx} e^{-\frac{(x-\mu)^2}{2\sigma^2}} {\rm d} x &=& \frac{1}{\sigma\sqrt{2\pi}}\int_{a}^{b} e^{-\frac{(x^2-2(\mu+t\sigma)x+\mu^2)}{2\sigma^2}} {\rm d} x \nonumber \\
		&=& \frac{1}{\sigma\sqrt{2\pi}}\int_{a}^{b} e^{-\frac{(x^2-2(\mu+t\sigma^2)x+(\mu+t\sigma^2)^2 -2t\mu\sigma^2 - t^2(\sigma^2)^2)}{2\sigma^2}} {\rm d} x \nonumber \\
		&=& e^{t\mu+t^2\sigma^2/2}\times \frac{1}{\sigma\sqrt{2\pi}}\int_{a}^{b} e^{-\frac{(x-(\mu+t\sigma^2))^2}{2\sigma^2}} {\rm d} x \nonumber \\
		&=& e^{t\mu+t^2\sigma^2/2}\{\Phi((b-\mu-t\sigma^2)/\sigma) -\Phi((a-\mu-t\sigma^2)/\sigma) \}. \label{eq.logNtrunc}\end{eqnarray}
	
	Hence the value of the four remaining definite integrals appearing in $E(P)$ are:
	\begin{eqnarray}2. \hspace{5.6 mm} \frac{1}{\sigma\sqrt{2\pi}}\int_{-\infty}^{-L} e^{ix}e^{-\frac{(x-\mu)^2}{2\sigma^2}} {\rm d} x &=& e^{i\mu+i^2\sigma^2/2}\{\Phi((-L-\mu-i\sigma^2)/\sigma) -\Phi((-\infty-\mu-i\sigma^2)/\sigma) \} \nonumber \\ &=& e^{i\mu+i^2\sigma^2/2}\Phi((-L-\mu-i\sigma^2)/\sigma).\label{eq.int2} \\
		3. \hspace{6.1 mm} \frac{1}{\sigma\sqrt{2\pi}}\int_{-L}^{-0} e^{ix}e^{-\frac{(x-\mu)^2}{2\sigma^2}} {\rm d} x &=& e^{i\mu+i^2\sigma^2/2}\{\Phi((0-\mu-i\sigma^2)/\sigma) -\Phi((-L-\mu-i\sigma^2)/\sigma) \} \nonumber \\
		&=& e^{i\mu+i^2\sigma^2/2}\{\Phi((-\mu-i\sigma^2)/\sigma) -\Phi((-L-\mu-i\sigma^2)/\sigma) \}. \label{eq.int3}\\
		4. \hspace{5 mm} \frac{1}{\sigma\sqrt{2\pi}}\int_{L}^{\infty} e^{-ix}e^{-\frac{(x-\mu)^2}{2\sigma^2}} {\rm d} x &=& e^{-i\mu+i^2\sigma^2/2}\{\Phi((\infty-\mu+i\sigma^2)/\sigma) -\Phi((L-\mu+i\sigma^2)/\sigma) \} \nonumber \\ 
		&=& e^{-i\mu+i^2\sigma^2/2}\{1-\Phi((L-\mu+i\sigma^2)/\sigma)\} \nonumber \\
		&=& e^{-i\mu+i^2\sigma^2/2}\Phi((-L+\mu-i\sigma^2)/\sigma). \label{eq.int4}\\
		5. \hspace{5.5 mm} \frac{1}{\sigma\sqrt{2\pi}}\int_{0}^{L} e^{-ix}e^{-\frac{(x-\mu)^2}{2\sigma^2}} {\rm d} x 
		&=& e^{-i\mu+i^2\sigma^2/2}\{\Phi((L-\mu+i\sigma^2)/\sigma) -\Phi((0-\mu+i\sigma^2)/\sigma) \} \nonumber \\
		&=& e^{-i\mu+i^2\sigma^2/2}\{ 1-\Phi((-L+\mu-i\sigma^2)/\sigma) -(1-\Phi((\mu-i\sigma^2)/\sigma)) \} \nonumber \\
		&=& e^{-i\mu+i^2\sigma^2/2}\{\Phi((\mu-i\sigma^2)/\sigma) -\Phi((-L+\mu-i\sigma^2)/\sigma) \}.\label{eq.int5} \end{eqnarray}
	
	Substituting the result of (\ref{eq.int1}) and (\ref{eq.int2}-\ref{eq.int5}) into (\ref{eq3.2}) gives the result stated in Proposition \ref{EP.LN2new}. 
	
\end{proof}

\section{The $k^\text{th}$ moment of a logit-normal random variable}\label{Intro2}

We propose the $k^\text{th}$ moment of a logit-normal random variable can be found using the result of Proposition \ref{EP.LN2high} (see main paper). We now will prove this result using induction. As part of the proof, we will use the formula for the expected value of the derivatives of $E(P)$ with respect to $\mu$ given in the main text. As a reminder, this showed that derivatives of the logistic function can be approximated by (\ref{ramplogistic.diff}) and the expected value of $\frac{{\rm d}^kE(P)}{{\rm d}\mu^k}$, can be approximated using the same results for component integrals from (\ref{eq.int2}-\ref{eq.int5}) as shown in (\ref{eq.EOdiff}).

\begin{proof}
1. Proving the result holds for $k = 2$. 
		
To find $E(P^2)$, we need $E(P)$ and $\frac{{\rm d}E(P)}{{\rm d}\mu}$. $E(P)$ was given in Proposition \ref{EP.LN2new}, and $\frac{{\rm d}E(P)}{{\rm d}\mu}$ can be found by setting $k=1$ in (\ref{eq.EOdiff}), and corresponds to
		\begin{eqnarray} \frac{{\rm d}E(P)}{{\rm d}\mu} &=&  \sum_{i=1}^{N} i(-1)^{i-1}\left\{e^{\mu i+\sigma^2i^2/2}\Phi\left(\frac{-L-\mu-\sigma^2 i}{\sigma}\right) +e^{-\mu i+\sigma^2i^2/2}\Phi\left(\frac{-L+\mu-\sigma^2 i}{\sigma}\right)  \right\} \nonumber \\
			&&+\sum_{i=1}^{N} i^2c_ie^{\mu i+\sigma^2i^2/2}\left\{\Phi\left(\frac{-\mu-\sigma^2 i}{\sigma}\right) - \Phi\left(\frac{-L-\mu-\sigma^2 i}{\sigma}\right)\right\}\nonumber \\
			&&+\sum_{i=1}^{N} i^{2}c_ie^{-\mu i+\sigma^2i^2/2}\left\{\Phi\left(\frac{\mu-\sigma^2 i}{\sigma}\right)-\Phi\left(\frac{-L+\mu-\sigma^2 i}{\sigma}\right) \right\}.  \label{eq.EO1diff}
		\end{eqnarray}
		
		Putting these together, $E(P^2)$ is approximately:
		\begin{eqnarray} E(P^2) &=& E(P)- \frac{{\rm d}E(P)}{{\rm d}\mu} \nonumber \\
			&\approx&  \Phi(\mu/\sigma) + \sum_{i=1}^{N} (-1)^{i-1}(1-i)e^{\mu i+\sigma^2i^2/2}\Phi\left(\frac{-L-\mu-\sigma^2 i}{\sigma}\right) \nonumber \\
			&&- \sum_{i=1}^{N} (-1)^{i-1}(1+i)e^{-\mu i+\sigma^2i^2/2}\Phi\left(\frac{-L+\mu-\sigma^2 i}{\sigma}\right)  \nonumber \\
			&&+\sum_{i=1}^{N} i(1-i)c_ie^{\mu i+\sigma^2i^2/2}\left\{\Phi\left(\frac{-\mu-\sigma^2 i}{\sigma}\right) - \Phi\left(\frac{-L-\mu-\sigma^2 i}{\sigma}\right)\right\}\nonumber \\
			&&-\sum_{i=1}^{N} i(1+i)c_ie^{-\mu i+\sigma^2i^2/2}\left\{\Phi\left(\frac{\mu-\sigma^2 i}{\sigma}\right)-\Phi\left(\frac{-L+\mu-\sigma^2 i}{\sigma}\right) \right\}  \label{eq.EP2}
		\end{eqnarray}	
		
		If $k=2$, then 	$\prod_{j=1}^{k-1}(j-i), \prod_{j=1}^{k-1}(j+i)$ are equal to $1-i$ and $1+i$ respectively, and ${1}/{(k-1)!} =1$. Hence  (\ref{eq.EP2}) satisfies Proposition \ref{EP.LN2high}.
		
2. Assume Proposition \ref{EP.LN2high} holds for $k$, and show this means $E(P^{k+1})$ must satisfy the proposition.
		
		To find $E(P^{k+1})$, we need $E(P^k)$ and ${{\rm d}E(P^k)}/{{\rm d}\mu}$. We assume $E(P^k)$ follows the result of Proposition \ref{EP.LN2high}. This leaves us needing to find ${{\rm d}E(P^k)}/{{\rm d}\mu}$. To find this derivative, we use the re-parametrisation trick, ${{\rm d}E(P^k)}/{{\rm d}\mu}= E({{\rm d}p(x)^k}/{{\rm d}x})$. 
		
		Undoing the expectation step for $P^k$, the underlying function $p(x)^k$ corresponds to
		\begin{eqnarray} p(x)^k &\approx& \mathbb{1}_{x>0} + \frac{(-1)^{\mathbb{1}_{x>0}}}{(k-1)!}\sum_{i=1}^{N} \bigg(\prod_{j=1}^{k-1}(j-i(-1)^{\mathbb{1}_{x>0}})\bigg)\{(-1)^{i-1}\mathbb{1}_{|x| \geq L} +ic_i\mathbb{1}_{|x| < L}\} e^{-i|x|},  \label{pk}
		\end{eqnarray}	
		which follows as 
		\begin{itemize}
			\item Negative terms before the sums in the expectation of $P^k$ correspond to integrals of a function w.r.t a normal density taken over positive real numbers, and positive terms to integrals of a function w.r.t a normal density taken over negative real numbers. Hence $(-1)^{\mathbb{1}_{x>0}}$ appears before the sum in the underlying function.
			\item The product term contained $j+i$ when the integral was over positive real numbers, and $j-i$ when the integral was over negative real numbers. Hence $\prod_{j=1}^{k-1}(j-i(-1)^{\mathbb{1}_{x>0}})$ must appear in the underlying function.
			\item The term $\Phi(\mu/\sigma)$ implies an integration of a normal density over all positive real numbers. Hence $\mathbb{1}_{x>0}$ appears in the underlying function. 
		\end{itemize}
		
		Having determined $p(x)^k$, ${{\rm d}p(x)^k}/{{\rm d}x}$ is
		\begin{eqnarray} \frac{{\rm d}p(x)^k}{{\rm d}x} &\approx&  \frac{1}{(k-1)!}\sum_{i=1}^{N}i \bigg(\prod_{j=1}^{k-1}(j-i(-1)^{\mathbb{1}_{x>0}})\bigg)\{(-1)^{i-1}\mathbb{1}_{|x| \geq L} +ic_i\mathbb{1}_{|x| < L}\} e^{-i|x|},  \label{Epk}
		\end{eqnarray}	 
		where $(-1)^{\mathbb{1}_{x>0}}$ disappears as the derivative of $e^{-i|x|}$ is $ie^{-i|x|}$ if $x \leq 0$ and $-ie^{-i|x|}$ if $x \geq 0$. Note: while strictly speaking $e^{-i|x|}$ would not be differentiable at 0, our function is designed as piecewise with transition point at 0.
		
		Hence, ${{\rm d}E(P^k)}/{{\rm d}\mu}$ can be evaluated using the component integrals (\ref{eq.int2}-\ref{eq.int5}), first used to determine $E(P)$ as
		\begin{eqnarray} \frac{{\rm d}E(P^k)}{{\rm d}\mu} &=&  \frac{1}{(k-1)!}\bigg(\sum_{i=1}^{N} i(-1)^{i-1}\prod_{j=1}^{k-1}(j-i)e^{\mu i+\sigma^2i^2/2}\Phi\left(\frac{-L-\mu-\sigma^2 i}{\sigma}\right) \nonumber \\
			&& +\sum_{i=1}^{N} i(-1)^{i-1}\prod_{j=1}^{k-1}(j+i)e^{-\mu i+\sigma^2i^2/2}\Phi\left(\frac{-L+\mu-\sigma^2 i}{\sigma}\right)  \nonumber \\
			&&+\sum_{i=1}^{N} i^{2}c_i\prod_{j=1}^{k-1}(j-i)e^{\mu i+\sigma^2i^2/2}\left\{\Phi\left(\frac{-\mu-\sigma^2 i}{\sigma}\right) - \Phi\left(\frac{-L-\mu-\sigma^2 i}{\sigma}\right)\right\}\nonumber \\
			&&+\sum_{i=1}^{N} i^{2}c_i\prod_{j=1}^{k-1}(j+i)e^{-\mu i+\sigma^2i^2/2}\left\{\Phi\left(\frac{\mu-\sigma^2 i}{\sigma}\right)-\Phi\left(\frac{-L+\mu-\sigma^2 i}{\sigma}\right) \right\}\bigg).  \label{eq.EPkdiff}
		\end{eqnarray}  
		
		Lastly, we know by the recursion formula (see main text) that for a logit-normal random variable, $	E(P^{k+1})= E(P^k) - \frac{1}{k}	\frac{{\rm d}E(P^k)}{{\rm d}\mu}$. This means that $E(P^{k+1})$ is 
		\begin{eqnarray}
			E(P^{k+1}) &=& E(P^k) - \frac{1}{k}	\frac{{\rm d}E(P^k)}{{\rm d}\mu} \nonumber \\
			&\approx&  \Phi(\mu/\sigma) + \frac{1}{(k-1)!}\bigg(\sum_{i=1}^{N} (-1)^{i-1}\prod_{j=1}^{k-1}(j-i) e^{\mu i+\sigma^2i^2/2}\Phi\left(\frac{-L-\mu-\sigma^2 i}{\sigma}\right) \nonumber \\ 
			&&-\sum_{i=1}^{N} (-1)^{i-1}\prod_{j=1}^{k-1}(j+i)e^{-\mu i+\sigma^2i^2/2}\Phi\left(\frac{-L+\mu-\sigma^2 i}{\sigma}\right)  \nonumber \\
			&&+\sum_{i=1}^{N} ic_i\prod_{j=1}^{k-1}(j-i)e^{\mu i+\sigma^2i^2/2}\left\{\Phi\left(\frac{-\mu-\sigma^2 i}{\sigma}\right) - \Phi\left(\frac{-L-\mu-\sigma^2 i}{\sigma}\right)\right\}\nonumber \\
			&&-\sum_{i=1}^{N} ic_i\prod_{j=1}^{k-1}(j+i)e^{-\mu i+\sigma^2i^2/2}\left\{\Phi\left(\frac{\mu-\sigma^2 i}{\sigma}\right)-\Phi\left(\frac{-L+\mu-\sigma^2 i}{\sigma}\right) \right\}\bigg) \nonumber \\
			&& - \frac{1}{k}\frac{1}{(k-1)!}\bigg(\sum_{i=1}^{N} i(-1)^{i-1}\prod_{j=1}^{k-1}(j-i)e^{\mu i+\sigma^2i^2/2}\Phi\left(\frac{-L-\mu-\sigma^2 i}{\sigma}\right) \nonumber \\
			&& +\sum_{i=1}^{N} i(-1)^{i-1}\prod_{j=1}^{k-1}(j+i)e^{-\mu i+\sigma^2i^2/2}\Phi\left(\frac{-L+\mu-\sigma^2 i}{\sigma}\right)  \nonumber \\
			&&+\sum_{i=1}^{N} i^{2}c_i\prod_{j=1}^{k-1}(j-i)e^{\mu i+\sigma^2i^2/2}\left\{\Phi\left(\frac{-\mu-\sigma^2 i}{\sigma}\right) - \Phi\left(\frac{-L-\mu-\sigma^2 i}{\sigma}\right)\right\}\nonumber \\
			&&+\sum_{i=1}^{N} i^{2}c_i\prod_{j=1}^{k-1}(j+i)e^{-\mu i+\sigma^2i^2/2}\left\{\Phi\left(\frac{\mu-\sigma^2 i}{\sigma}\right)-\Phi\left(\frac{-L+\mu-\sigma^2 i}{\sigma}\right) \right\}\bigg). \label{eqEPK}
		\end{eqnarray}
		
		Grouping like terms in (\ref{eqEPK}), we just need to simplify,
		\begin{eqnarray}
			\frac{1}{(k-1)!}\prod_{j=1}^{k-1}(j-i) - \frac{1}{k!}i\prod_{j=1}^{k-1}(j-i) &=& \frac{k-i}{k!}\prod_{j=1}^{k-1}(j-i) = \frac{1}{k!}\prod_{j=1}^{k}(j-i)  \\
			-\frac{1}{(k-1)!}\prod_{j=1}^{k-1}(j+i) - \frac{1}{k!}i\prod_{j=1}^{k-1}(j+i) &=& \frac{-k-i}{k!}\prod_{j=1}^{k-1}(j+i) = -\frac{1}{k!}\prod_{j=1}^{k}(j+i),  
		\end{eqnarray}
		giving us  
		\begin{eqnarray}
			E(P^{k+1}) 
			&\approx&  \Phi(\mu/\sigma) + \frac{1}{k!}\bigg(\sum_{i=1}^{N} (-1)^{i-1}\prod_{j=1}^{k}(j-i) e^{\mu i+\sigma^2i^2/2}\Phi\left(\frac{-L-\mu-\sigma^2 i}{\sigma}\right) \nonumber \\ 
			&&-\sum_{i=1}^{N} (-1)^{i-1}\prod_{j=1}^{k}(j+i)e^{-\mu i+\sigma^2i^2/2}\Phi\left(\frac{-L+\mu-\sigma^2 i}{\sigma}\right)  \nonumber \\
			&&+\sum_{i=1}^{N} ic_i\prod_{j=1}^{k}(j-i)e^{\mu i+\sigma^2i^2/2}\left\{\Phi\left(\frac{-\mu-\sigma^2 i}{\sigma}\right) - \Phi\left(\frac{-L-\mu-\sigma^2 i}{\sigma}\right)\right\}\nonumber \\
			&&-\sum_{i=1}^{N} ic_i\prod_{j=1}^{k}(j+i)e^{-\mu i+\sigma^2i^2/2}\left\{\Phi\left(\frac{\mu-\sigma^2 i}{\sigma}\right)-\Phi\left(\frac{-L+\mu-\sigma^2 i}{\sigma}\right) \right\}\bigg)  \label{eqEPKfinal}
		\end{eqnarray}
		which corresponds to Proposition \ref{EP.LN2high}, if $k$ is replaced with $k+1$. Hence by induction Proposition  \ref{EP.LN2high} holds for all $k \geq 2$.

\end{proof}

\section{Testing accuracy of approximation for higher order moments}\label{CheckingResults}

In the main text, we showed that our approach provided accurate estimates of the expected value and variance of a logit-normal random variable for a wide range of values of $\mu, \sigma$. In this section, we will show our approximations maintain accuracy up to at least the fourth moment. This is unsurprising as our approximation of the underlying functions is accurate up to the third derivative of the logistic function. We also include calculations for skewness and kurtosis. This have been put in to highlight when the numerator moments are divided by a variance near zero, numerical instability will result, even if the component moments are well estimated.

\subsection{Second, third and fourth moments}  

\begin{center}
\begin{table}[h!]\caption{Estimates of $E(P^2)$ obtained from piecewise approximation of $p(x)$ with error compared to estimate of $E(P^2)$ obtained from numerical integration.}\label{tab.accEP2}
	\scriptsize
	\begin{tabular*}{\textwidth}{@{\extracolsep\fill}lcr cr cr cr cr@{\extracolsep\fill}} \toprule
		$\sigma$	&	$E(P^2)$	&		Error 			&	$E(P^2)$	&		Error 	&	$E(P^2)$	&		Error 		&	$E(P^2)$	&		Error 		&	$E(P^2)$	&		Error 		\\ 
		\midrule
		& \multicolumn{2}{c}{$\mu = 0$} & \multicolumn{2}{c}{$\mu = 1.25$} & \multicolumn{2}{c}{$\mu = 2.5$} & \multicolumn{2}{c}{$\mu = 3.75$} & \multicolumn{2}{c}{$\mu = 5.00$}  \\
		0.001	&	0.250000	&	$	-7.1\times 10^{-8}$	&	0.604195	&	$	4.4\times 10^{-10}$	&	0.854038	&	$	5.2\times 10^{-12}$	&	0.954573	&	$	7.1\times 10^{-12}$	&	0.986659	&	$	7.5\times 10^{-12}$	\\ 
		0.010	&	0.250006	&	$	-4.5\times 10^{-8}$	&	0.604191	&	$	4.4\times 10^{-10}$	&	0.854033	&	$	5.2\times 10^{-12}$	&	0.954571	&	$	7.1\times 10^{-12}$	&	0.986658	&	$	7.3\times 10^{-12}$	\\ 
		0.100	&	0.250622	&	$	-7.2\times 10^{-9}$	&	0.603751	&	$	-4.2\times 10^{-10}$	&	0.853538	&	$	4.0\times 10^{-12}$	&	0.954369	&	$	7.1\times 10^{-12}$	&	0.986594	&	$	7.3\times 10^{-12}$	\\ 
		0.250	&	0.253789	&	$	-3.3\times 10^{-9}$	&	0.601501	&	$	-8.4\times 10^{-9}$	&	0.850924	&	$	-7.2\times 10^{-11}$	&	0.953285	&	$	7.1\times 10^{-12}$	&	0.986249	&	$	7.3\times 10^{-12}$	\\ 
		0.500	&	0.263956	&	$	-1.9\times 10^{-9}$	&	0.594438	&	$	-8.0\times 10^{-9}$	&	0.841786	&	$	-2.0\times 10^{-9}$	&	0.949288	&	$	6.0\times 10^{-12}$	&	0.984951	&	$	7.4\times 10^{-12}$	\\ 
		1.000	&	0.293379	&	$	-3.5\times 10^{-9}$	&	0.575671	&	$	-5.2\times 10^{-9}$	&	0.809342	&	$	-3.4\times 10^{-9}$	&	0.931887	&	$	-4.7\times 10^{-10}$	&	0.978699	&	$	-8.0\times 10^{-12}$	\\ 
		1.500	&	0.323272	&	$	-4.1\times 10^{-9}$	&	0.559314	&	$	-4.1\times 10^{-9}$	&	0.769259	&	$	-3.0\times 10^{-9}$	&	0.902092	&	$	-1.2\times 10^{-9}$	&	0.964992	&	$	-2.5\times 10^{-10}$	\\ 
		2.500	&	0.368925	&	$	-3.4\times 10^{-9}$	&	0.539014	&	$	-3.2\times 10^{-9}$	&	0.700705	&	$	-2.5\times 10^{-9}$	&	0.828797	&	$	-1.7\times 10^{-9}$	&	0.913925	&	$	-8.9\times 10^{-10}$	\\ 
		\midrule
		&	\multicolumn{2}{c}{$\mu	=		0.25$}		&	\multicolumn{2}{c}{$\mu	=		1.5$}		&	\multicolumn{2}{c}{$\mu	=		2.75$}		&	\multicolumn{2}{c}{$\mu	=		4.0$}		&	\multicolumn{2}{c}{$\mu		=		5.25$}	\\
		0.001	&	0.316043	&	$	1.2\times 10^{-10}$	&	0.668428	&	$	-4.6\times 10^{-9	}$	&	0.883437	&	$	6.5\times 10^{-12}$	&	0.964351	&	$	7.2\times 10^{-12}$	&	0.989587	&	$	7.5\times 10^{-12}$	\\ 
		0.010	&	0.316047	&	$	1.1\times 10^{-10}$	&	0.668423	&	$	-3.5\times 10^{-8	}$	&	0.883433	&	$	6.5\times 10^{-12}$	&	0.964349	&	$	7.1\times 10^{-12}$	&	0.989587	&	$	7.3\times 10^{-12}$	\\ 
		0.100	&	0.316474	&	$	-9.8\times 10^{-10}$	&	0.667878	&	$	-3.4\times 10^{-8	}$	&	0.883002	&	$	6.4\times 10^{-12}$	&	0.964187	&	$	7.1\times 10^{-12}$	&	0.989536	&	$	7.3\times 10^{-12}$	\\ 
		0.250	&	0.318679	&	$	-2.1\times 10^{-9}$	&	0.665071	&	$	-1.8\times 10^{-8	}$	&	0.880719	&	$	1.9\times 10^{-12}$	&	0.963315	&	$	7.1\times 10^{-12}$	&	0.989265	&	$	7.3\times 10^{-12}$	\\ 
		0.500	&	0.325812	&	$	-1.8\times 10^{-9}$	&	0.656012	&	$	-9.8\times 10^{-9	}$	&	0.872619	&	$	-7.2\times 10^{-10}$	&	0.960079	&	$	7.2\times 10^{-12}$	&	0.988242	&	$	7.4\times 10^{-12}$	\\ 
		1.000	&	0.346812	&	$	-3.7\times 10^{-9}$	&	0.630256	&	$	-5.3\times 10^{-9	}$	&	0.842416	&	$	-2.6\times 10^{-9}$	&	0.945599	&	$	-2.6\times 10^{-10}$	&	0.983263	&	$	1.3\times 10^{-12}$	\\ 
		1.500	&	0.368515	&	$	-4.2\times 10^{-9}$	&	0.605885	&	$	-4.0\times 10^{-9	}$	&	0.802489	&	$	-2.6\times 10^{-9}$	&	0.919330	&	$	-9.4\times 10^{-10}$	&	0.971945	&	$	-1.7\times 10^{-10}$	\\ 
		2.500	&	0.402108	&	$	-3.4\times 10^{-9}$	&	0.573000	&	$	-3.1\times 10^{-9	}$	&	0.729570	&	$	-2.4\times 10^{-9}$	&	0.849251	&	$	-1.5\times 10^{-9}$	&	0.926121	&	$	-7.6\times 10^{-10}$	\\  \midrule
		&	\multicolumn{2}{c}{$\mu	=		0.5$}		&	\multicolumn{2}{c}{$\mu	=		1.75$}		&	\multicolumn{2}{c}{$\mu	=		3.0$}		&	\multicolumn{2}{c}{$\mu	=		4.25$}		&	\multicolumn{2}{c}{$\mu		=		5.5$}	\\
		0.001	&	0.387456	&	$	5.8\times 10^{-10}$	&	0.725824	&	$	-8.3\times 10^{-9}$	&	0.907397	&	$	6.7\times 10^{-12}$	&	0.972071	&	$	7.4\times 10^{-12}$	&	0.991876	&	$	7.5\times 10^{-12}$	\\ 
		0.010	&	0.387458	&	$	5.7\times 10^{-10}$	&	0.725818	&	$	-8.4\times 10^{-9}$	&	0.907394	&	$	6.7\times 10^{-12}$	&	0.972069	&	$	7.2\times 10^{-12}$	&	0.991876	&	$	7.3\times 10^{-12}$	\\ 
		0.100	&	0.387649	&	$	1.6\times 10^{-11}$	&	0.725228	&	$	-1.5\times 10^{-8}$	&	0.907028	&	$	6.7\times 10^{-12}$	&	0.971939	&	$	7.2\times 10^{-12}$	&	0.991836	&	$	7.4\times 10^{-12}$	\\ 
		0.250	&	0.388653	&	$	-6.0\times 10^{-10}$	&	0.722166	&	$	-1.6\times 10^{-8}$	&	0.905084	&	$	6.4\times 10^{-12}$	&	0.971243	&	$	7.2\times 10^{-12}$	&	0.991623	&	$	7.4\times 10^{-12}$	\\ 
		0.500	&	0.392049	&	$	-2.1\times 10^{-9}$	&	0.712052	&	$	-9.4\times 10^{-9}$	&	0.898100	&	$	-2.1\times 10^{-10}$	&	0.968645	&	$	7.3\times 10^{-12}$	&	0.990820	&	$	7.4\times 10^{-12}$	\\ 
		1.000	&	0.403041	&	$	-4.0\times 10^{-9}$	&	0.681546	&	$	-5.1\times 10^{-9}$	&	0.870854	&	$	-1.9\times 10^{-9}$	&	0.956745	&	$	-1.4\times 10^{-10}$	&	0.986873	&	$	5.2\times 10^{-12}$	\\ 
		1.500	&	0.415449	&	$	-4.2\times 10^{-9}$	&	0.650656	&	$	-3.9\times 10^{-9}$	&	0.832328	&	$	-2.3\times 10^{-9}$	&	0.933962	&	$	-7.1\times 10^{-10}$	&	0.977610	&	$	-1.1\times 10^{-10}$	\\ 
		2.500	&	0.435962	&	$	-3.4\times 10^{-9}$	&	0.606385	&	$	-3.0\times 10^{-9}$	&	0.756900	&	$	-2.2\times 10^{-9}$	&	0.867946	&	$	-1.3\times 10^{-9}$	&	0.936904	&	$	-6.5\times 10^{-10}$	\\ 
		\midrule
		&	\multicolumn{2}{c}{$\mu	=		0.75$}		&	\multicolumn{2}{c}{$\mu	=		2.0$}		&	\multicolumn{2}{c}{$\mu	=		3.25$}		&	\multicolumn{2}{c}{$\mu	=		4.5$}		&	\multicolumn{2}{c}{$\mu		=		5.75$}	\\
		0.001	&	0.461284	&	$	-1.4\times 10^{-10}$	&	0.775803	&	$	-4.2\times 10^{-10}$	&	0.926740	&	$	6.9\times 10^{-12}$	&	0.978147	&	$	7.4\times 10^{-12}$	&	0.993665	&	$	7.5\times 10^{-12}$	\\ 
		0.010	&	0.461283	&	$	-1.4\times 10^{-10}$	&	0.775798	&	$	-4.3\times 10^{-10}$	&	0.926736	&	$	6.8\times 10^{-12}$	&	0.978146	&	$	7.2\times 10^{-12}$	&	0.993664	&	$	7.3\times 10^{-12}$	\\ 
		0.100	&	0.461229	&	$	-5.3\times 10^{-11}$	&	0.775211	&	$	-8.6\times 10^{-10}$	&	0.926432	&	$	6.9\times 10^{-12}$	&	0.978043	&	$	7.2\times 10^{-12}$	&	0.993633	&	$	7.4\times 10^{-12}$	\\ 
		0.250	&	0.460978	&	$	-1.8\times 10^{-10}$	&	0.772143	&	$	-5.6\times 10^{-9}$	&	0.924809	&	$	6.8\times 10^{-12}$	&	0.977489	&	$	7.2\times 10^{-12}$	&	0.993467	&	$	7.4\times 10^{-12}$	\\ 
		0.500	&	0.460443	&	$	-3.1\times 10^{-9}$	&	0.761797	&	$	-7.1\times 10^{-9}$	&	0.918918	&	$	-4.2\times 10^{-11}$	&	0.975418	&	$	7.4\times 10^{-12}$	&	0.992837	&	$	7.4\times 10^{-12}$	\\ 
		1.000	&	0.460811	&	$	-4.5\times 10^{-9}$	&	0.728780	&	$	-4.7\times 10^{-9}$	&	0.894970	&	$	-1.3\times 10^{-9}$	&	0.965738	&	$	-6.9\times 10^{-11}$	&	0.989719	&	$	6.7\times 10^{-12}$	\\ 
		1.500	&	0.463380	&	$	-4.2\times 10^{-9}$	&	0.693082	&	$	-3.6\times 10^{-9}$	&	0.858792	&	$	-1.9\times 10^{-9}$	&	0.946264	&	$	-5.2\times 10^{-10}$	&	0.982197	&	$	-6.9\times 10^{-11}$	\\ 
		2.500	&	0.470236	&	$	-3.3\times 10^{-9}$	&	0.638936	&	$	-2.8\times 10^{-9}$	&	0.782592	&	$	-2.0\times 10^{-9}$	&	0.884917	&	$	-1.2\times 10^{-9}$	&	0.946376	&	$	-5.5\times 10^{-10}$	\\ \midrule
		&	\multicolumn{2}{c}{$\mu	=		1.0$}		&	\multicolumn{2}{c}{$\mu	=		2.25$}		&	\multicolumn{2}{c}{$\mu	=		3.5$}		&	\multicolumn{2}{c}{$\mu	=		4.75$}		&	\multicolumn{2}{c}{$\mu		=		6.0$}	\\
		0.001	&	0.534447	&	$	-1.1\times 10^{-10}$	&	0.818393	&	$	-1.6\times 10^{-11}$	&	0.942235	&	$	7.0\times 10^{-12}$	&	0.982919	&	$	7.4\times 10^{-12}$	&	0.995061	&	$	7.5\times 10^{-12}$	\\ 
		0.010	&	0.534444	&	$	-1.1\times 10^{-10}$	&	0.818387	&	$	-1.6\times 10^{-11}$	&	0.942232	&	$	7.0\times 10^{-12}$	&	0.982918	&	$	7.3\times 10^{-12}$	&	0.995061	&	$	7.4\times 10^{-12}$	\\ 
		0.100	&	0.534171	&	$	-3.3\times 10^{-11}$	&	0.817836	&	$	-3.8\times 10^{-11}$	&	0.941983	&	$	7.0\times 10^{-12}$	&	0.982836	&	$	7.3\times 10^{-12}$	&	0.995036	&	$	7.4\times 10^{-12}$	\\ 
		0.250	&	0.532792	&	$	-1.5\times 10^{-9}$	&	0.814942	&	$	-8.9\times 10^{-10}$	&	0.940649	&	$	7.0\times 10^{-12}$	&	0.982398	&	$	7.3\times 10^{-12}$	&	0.994906	&	$	7.4\times 10^{-12}$	\\ 
		0.500	&	0.528640	&	$	-5.3\times 10^{-9}$	&	0.804991	&	$	-4.2\times 10^{-9}$	&	0.935763	&	$	-1.8\times 10^{-12}$	&	0.980755	&	$	7.4\times 10^{-12}$	&	0.994413	&	$	7.4\times 10^{-12}$	\\ 
		1.000	&	0.518791	&	$	-4.9\times 10^{-9}$	&	0.771459	&	$	-4.1\times 10^{-9}$	&	0.915164	&	$	-8.3\times 10^{-10}$	&	0.972949	&	$	-2.9\times 10^{-11}$	&	0.991958	&	$	7.3\times 10^{-12}$	\\ 
		1.500	&	0.511579	&	$	-4.2\times 10^{-9}$	&	0.732723	&	$	-3.3\times 10^{-9}$	&	0.881989	&	$	-1.5\times 10^{-9}$	&	0.956517	&	$	-3.7\times 10^{-10}$	&	0.985890	&	$	-1.3\times 10^{-9}$	\\ 
		2.500	&	0.504674	&	$	-3.3\times 10^{-9}$	&	0.670439	&	$	-2.7\times 10^{-9}$	&	0.806572	&	$	-1.8\times 10^{-9}$	&	0.900219	&	$	-1.0\times 10^{-9}$	&	0.954646	&	$	-4.6\times 10^{-10}$ \\
		\bottomrule
	\end{tabular*}
\end{table}
\end{center}

\subsubsection{Third moment}	
\begin{center}	
\begin{table}[h!]\caption{Estimates of $E(P^3)$ obtained from piecewise approximation of $p(x)$ with error compared to estimate of $E(P^3)$ obtained from numerical integration.}\label{tab.accEP3}
	\scriptsize
	\begin{tabular*}{\textwidth}{@{\extracolsep\fill}lcr cr cr cr cr@{\extracolsep\fill}} \toprule
		$\sigma$	&	$E(P^3)$	&		Error 			&	$E(P^3)$	&		Error 	&	$E(P^3)$	&		Error 		&	$E(P^3)$	&		Error 		&	$E(P^3)$	&		Error 		\\ 
		\midrule
		& \multicolumn{2}{c}{$\mu = 0$} & \multicolumn{2}{c}{$\mu = 1.25$} & \multicolumn{2}{c}{$\mu = 2.5$} & \multicolumn{2}{c}{$\mu = 3.75$} & \multicolumn{2}{c}{$\mu = 5.00$}  \\
		0.001	&	0.125000	&	$	-1.1\times 10^{-7	}$	&	0.469641	&	$	-3.2\times 10^{-9	}$	&	0.789252	&	$	-1.9\times 10^{-12	}$	&	0.932640	&	$	6.9\times 10^{-12	}$	&	0.980056	&	$	7.4\times 10^{-12	}$	\\ 
		0.010	&	0.125009	&	$	-6.7\times 10^{-8	}$	&	0.469639	&	$	-3.2\times 10^{-9	}$	&	0.789246	&	$	-2.0\times 10^{-12	}$	&	0.932637	&	$	6.9\times 10^{-12	}$	&	0.980055	&	$	7.2\times 10^{-12	}$	\\ 
		0.100	&	0.125933	&	$	-1.1\times 10^{-8	}$	&	0.469472	&	$	-3.4\times 10^{-9	}$	&	0.788628	&	$	-1.0\times 10^{-11	}$	&	0.932347	&	$	6.9\times 10^{-12	}$	&	0.979960	&	$	7.3\times 10^{-12	}$	\\ 
		0.250	&	0.130684	&	$	-5.0\times 10^{-9	}$	&	0.468658	&	$	-5.4\times 10^{-8	}$	&	0.785386	&	$	-5.3\times 10^{-10	}$	&	0.930803	&	$	6.9\times 10^{-12	}$	&	0.979449	&	$	7.3\times 10^{-12	}$	\\ 
		0.500	&	0.145933	&	$	-2.8\times 10^{-9	}$	&	0.466537	&	$	-5.2\times 10^{-8	}$	&	0.774340	&	$	-1.3\times 10^{-8	}$	&	0.925154	&	$	-1.9\times 10^{-12	}$	&	0.977535	&	$	7.4\times 10^{-12	}$	\\ 
		1.000	&	0.190069	&	$	-5.3\times 10^{-9	}$	&	0.463836	&	$	-3.2\times 10^{-8	}$	&	0.738415	&	$	-2.2\times 10^{-8	}$	&	0.901605	&	$	-3.3\times 10^{-9	}$	&	0.968469	&	$	-9.9\times 10^{-11	}$	\\ 
		1.500	&	0.234908	&	$	-6.2\times 10^{-9	}$	&	0.464784	&	$	-2.0\times 10^{-8	}$	&	0.699445	&	$	-1.8\times 10^{-8	}$	&	0.864833	&	$	-7.9\times 10^{-9	}$	&	0.949646	&	$	-1.7\times 10^{-9	}$	\\ 
		2.500	&	0.303387	&	$	-5.1\times 10^{-9	}$	&	0.470603	&	$	-9.5\times 10^{-9	}$	&	0.641524	&	$	-1.1\times 10^{-8	}$	&	0.786117	&	$	-8.6\times 10^{-9	}$	&	0.888054	&	$	-5.1\times 10^{-9	}$	\\ 
		\midrule
		&	\multicolumn{2}{c}{$\mu	=		0.25$}		&	\multicolumn{2}{c}{$\mu	=		1.5$}		&	\multicolumn{2}{c}{$\mu	=		2.75$}		&	\multicolumn{2}{c}{$\mu	=		4.0$}		&	\multicolumn{2}{c}{$\mu		=		5.25$}	\\
		0.001	&	0.177672	&	$	1.3\times 10^{-8	}$	&	0.546490	&	$	2.7\times 10^{-08	}$	&	0.830354	&	$	5.8\times 10^{-12	}$	&	0.947006	&	$	7.1\times 10^{-12	}$	&	0.984421	&	$	7.5\times 10^{-12	}$	\\ 
		0.010	&	0.177680	&	$	1.2\times 10^{-8	}$	&	0.546486	&	$	-2.0\times 10^{-7	}$	&	0.830349	&	$	5.7\times 10^{-12	}$	&	0.947004	&	$	7.0\times 10^{-12	}$	&	0.984421	&	$	7.3\times 10^{-12	}$	\\ 
		0.100	&	0.178543	&	$	-6.5\times 10^{-9	}$	&	0.546088	&	$	-2.2\times 10^{-7	}$	&	0.829786	&	$	5.3\times 10^{-12	}$	&	0.946769	&	$	7.0\times 10^{-12	}$	&	0.984346	&	$	7.3\times 10^{-12	}$	\\ 
		0.250	&	0.182969	&	$	-6.9\times 10^{-9	}$	&	0.544081	&	$	-1.2\times 10^{-7	}$	&	0.826822	&	$	-2.6\times 10^{-11	}$	&	0.945511	&	$	7.0\times 10^{-12	}$	&	0.983943	&	$	7.3\times 10^{-12	}$	\\ 
		0.500	&	0.197026	&	$	-4.9\times 10^{-9	}$	&	0.538053	&	$	-6.5\times 10^{-8	}$	&	0.816526	&	$	-4.9\times 10^{-9	}$	&	0.940878	&	$	6.2\times 10^{-12	}$	&	0.982429	&	$	7.4\times 10^{-12	}$	\\ 
		1.000	&	0.236755	&	$	-1.1\times 10^{-8	}$	&	0.524085	&	$	-3.3\times 10^{-8	}$	&	0.780988	&	$	-1.7\times 10^{-8	}$	&	0.920884	&	$	-1.9\times 10^{-9	}$	&	0.975159	&	$	-3.6\times 10^{-11	}$	\\ 
		1.500	&	0.276117	&	$	-9.7\times 10^{-9	}$	&	0.514345	&	$	-2.1\times 10^{-8	}$	&	0.739409	&	$	-1.6\times 10^{-8	}$	&	0.887551	&	$	5.3\times 10^{-8	}$	&	0.959401	&	$	-1.1\times 10^{-9	}$	\\ 
		2.500	&	0.334990	&	$	-6.1\times 10^{-9	}$	&	0.505548	&	$	-1.0\times 10^{-8	}$	&	0.673335	&	$	-1.1\times 10^{-8	}$	&	0.810109	&	$	-7.9\times 10^{-9	}$	&	0.903176	&	$	-4.5\times 10^{-9	}$	\\  
		\midrule
		&	\multicolumn{2}{c}{$\mu	=		0.5$}		&	\multicolumn{2}{c}{$\mu	=		1.75$}		&	\multicolumn{2}{c}{$\mu	=		3.0$}		&	\multicolumn{2}{c}{$\mu	=		4.25$}		&	\multicolumn{2}{c}{$\mu		=		5.5$}	\\
		0.001	&	0.241175	&	$	2.0\times 10^{-9	}$	&	0.618367	&	$	-5.7\times 10^{-8	}$	&	0.864363	&	$	6.4\times 10^{-12	}$	&	0.958400	&	$	7.3\times 10^{-12	}$	&	0.987839	&	$	7.5\times 10^{-12	}$	\\ 
		0.010	&	0.241182	&	$	2.0\times 10^{-9	}$	&	0.618362	&	$	-5.8\times 10^{-8	}$	&	0.864358	&	$	6.4\times 10^{-12	}$	&	0.958398	&	$	7.1\times 10^{-12	}$	&	0.987839	&	$	7.3\times 10^{-12	}$	\\ 
		0.100	&	0.241869	&	$	-3.4\times 10^{-10	}$	&	0.617810	&	$	-1.0\times 10^{-7	}$	&	0.863865	&	$	6.4\times 10^{-12	}$	&	0.958209	&	$	7.1\times 10^{-12	}$	&	0.987780	&	$	7.3\times 10^{-12	}$	\\ 
		0.250	&	0.245393	&	$	-2.8\times 10^{-9	}$	&	0.614985	&	$	-1.0\times 10^{-7	}$	&	0.861255	&	$	4.7\times 10^{-12	}$	&	0.957195	&	$	7.1\times 10^{-12	}$	&	0.987463	&	$	7.3\times 10^{-12	}$	\\ 
		0.500	&	0.256636	&	$	-8.9\times 10^{-9	}$	&	0.606103	&	$	-6.3\times 10^{-8	}$	&	0.852035	&	$	-1.4\times 10^{-9	}$	&	0.953437	&	$	7.2\times 10^{-12	}$	&	0.986270	&	$	7.4\times 10^{-12	}$	\\ 
		1.000	&	0.288603	&	$	-1.7\times 10^{-8	}$	&	0.582812	&	$	-3.3\times 10^{-8	}$	&	0.818423	&	$	-1.2\times 10^{-8	}$	&	0.936736	&	$	-1.0\times 10^{-9	}$	&	0.980474	&	$	-9.3\times 10^{-12	}$	\\ 
		1.500	&	0.320344	&	$	-1.3\times 10^{-8	}$	&	0.563295	&	$	-2.1\times 10^{-8	}$	&	0.776046	&	$	-1.4\times 10^{-8	}$	&	0.907121	&	$	-4.7\times 10^{-9	}$	&	0.967424	&	$	-7.5\times 10^{-10	}$	\\ 
		2.500	&	0.367749	&	$	-7.2\times 10^{-9	}$	&	0.540375	&	$	-1.1\times 10^{-8	}$	&	0.703849	&	$	-1.0\times 10^{-8	}$	&	0.832295	&	$	-7.2\times 10^{-9	}$	&	0.916684	&	$	-3.8\times 10^{-9	}$	\\ 
		\midrule
		&	\multicolumn{2}{c}{$\mu	=		0.75$}		&	\multicolumn{2}{c}{$\mu	=		2.0$}		&	\multicolumn{2}{c}{$\mu	=		3.25$}		&	\multicolumn{2}{c}{$\mu	=		4.5$}		&	\multicolumn{2}{c}{$\mu		=		5.75$}	\\
		0.001	&	0.313294	&	$	-4.0\times 10^{-9	}$	&	0.683325	&	$	-3.0\times 10^{-9	}$	&	0.892147	&	$	6.6\times 10^{-12	}$	&	0.967400	&	$	7.3\times 10^{-12	}$	&	0.990512	&	$	7.5\times 10^{-12	}$	\\ 
		0.010	&	0.313298	&	$	-3.9\times 10^{-9	}$	&	0.683319	&	$	-3.0\times 10^{-9	}$	&	0.892143	&	$	6.6\times 10^{-12	}$	&	0.967399	&	$	7.2\times 10^{-12	}$	&	0.990512	&	$	7.3\times 10^{-12	}$	\\ 
		0.100	&	0.313720	&	$	-1.2\times 10^{-9	}$	&	0.682689	&	$	-6.0\times 10^{-9	}$	&	0.891722	&	$	6.6\times 10^{-12	}$	&	0.967247	&	$	7.2\times 10^{-12	}$	&	0.990466	&	$	7.3\times 10^{-12	}$	\\ 
		0.250	&	0.315905	&	$	-1.1\times 10^{-9	}$	&	0.679429	&	$	-3.8\times 10^{-8	}$	&	0.889487	&	$	6.5\times 10^{-12	}$	&	0.966436	&	$	7.2\times 10^{-12	}$	&	0.990217	&	$	7.3\times 10^{-12	}$	\\ 
		0.500	&	0.323076	&	$	-1.8\times 10^{-8	}$	&	0.668848	&	$	-4.8\times 10^{-8	}$	&	0.881482	&	$	-3.3\times 10^{-10	}$	&	0.963416	&	$	7.3\times 10^{-12	}$	&	0.989280	&	$	7.4\times 10^{-12	}$	\\ 
		1.000	&	0.344614	&	$	-2.3\times 10^{-8	}$	&	0.638745	&	$	-3.1\times 10^{-8	}$	&	0.850805	&	$	-8.5\times 10^{-9	}$	&	0.949650	&	$	-5.0\times 10^{-10	}$	&	0.984681	&	$	1.4\times 10^{-12	}$	\\ 
		1.500	&	0.367003	&	$	-1.6\times 10^{-8	}$	&	0.610889	&	$	-2.1\times 10^{-8	}$	&	0.809179	&	$	-1.2\times 10^{-8	}$	&	0.923801	&	$	-3.4\times 10^{-9	}$	&	0.973974	&	$	-4.8\times 10^{-10	}$	\\ 
		2.500	&	0.401438	&	$	-8.1\times 10^{-9	}$	&	0.574815	&	$	-1.1\times 10^{-8	}$	&	0.732901	&	$	-9.8\times 10^{-9	}$	&	0.852665	&	$	-3.8\times 10^{-9	}$	&	0.928670	&	$	-3.3\times 10^{-9	}$	\\ 
		\midrule
		&	\multicolumn{2}{c}{$\mu	=		1.0$}		&	\multicolumn{2}{c}{$\mu	=		2.25$}		&	\multicolumn{2}{c}{$\mu	=		3.5$}		&	\multicolumn{2}{c}{$\mu	=		4.75$}		&	\multicolumn{2}{c}{$\mu		=		6.0$}	\\
		0.001	&	0.390712	&	$	3.1\times 10^{-9	}$	&	0.740359	&	$	-1.5\times 10^{-10	}$	&	0.914616	&	$	6.8\times 10^{-12	}$	&	0.974488	&	$	7.4\times 10^{-12	}$	&	0.992601	&	$	7.5\times 10^{-12	}$	\\ 
		0.010	&	0.390713	&	$	3.1\times 10^{-9	}$	&	0.740353	&	$	-1.5\times 10^{-10	}$	&	0.914612	&	$	6.8\times 10^{-12	}$	&	0.974486	&	$	7.2\times 10^{-12	}$	&	0.992600	&	$	7.3\times 10^{-12	}$	\\ 
		0.100	&	0.390832	&	$	1.1\times 10^{-9	}$	&	0.739706	&	$	-3.0\times 10^{-10	}$	&	0.914261	&	$	6.8\times 10^{-12	}$	&	0.974366	&	$	7.2\times 10^{-12	}$	&	0.992564	&	$	7.4\times 10^{-12	}$	\\ 
		0.250	&	0.391483	&	$	-9.4\times 10^{-9	}$	&	0.736336	&	$	-6.2\times 10^{-9	}$	&	0.912387	&	$	6.7\times 10^{-12	}$	&	0.973722	&	$	7.2\times 10^{-12	}$	&	0.992369	&	$	7.4\times 10^{-12	}$	\\ 
		0.500	&	0.393962	&	$	-3.4\times 10^{-8	}$	&	0.725109	&	$	-2.8\times 10^{-8	}$	&	0.905600	&	$	-5.5\times 10^{-11	}$	&	0.971310	&	$	7.4\times 10^{-12	}$	&	0.991634	&	$	7.4\times 10^{-12	}$	\\ 
		1.000	&	0.403509	&	$	-2.8\times 10^{-8	}$	&	0.690856	&	$	-2.7\times 10^{-8	}$	&	0.878400	&	$	-5.6\times 10^{-9	}$	&	0.960088	&	$	-2.3\times 10^{-10	}$	&	0.988000	&	$	5.5\times 10^{-12	}$	\\ 
		1.500	&	0.415404	&	$	-1.8\times 10^{-8	}$	&	0.656459	&	$	-2.0\times 10^{-8	}$	&	0.838757	&	$	-9.9\times 10^{-9	}$	&	0.937877	&	$	-2.4\times 10^{-9	}$	&	0.979287	&	$	-3.0\times 10^{-10	}$	\\ 
		2.500	&	0.435809	&	$	-8.9\times 10^{-9	}$	&	0.608611	&	$	-1.1\times 10^{-8	}$	&	0.760358	&	$	-9.2\times 10^{-9	}$	&	0.871238	&	$	-5.8\times 10^{-9	}$	&	0.939237	&	$	-2.8\times 10^{-9	}$	\\
		\bottomrule
	\end{tabular*}
\end{table}
\end{center}
\newpage	

\subsubsection{Fourth moment}

\begin{center}
\begin{table}[h!]\caption{Estimates of $E(P^4)$ obtained from piecewise approximation of $p(x)$ with error compared to estimate of $E(P^4)$ obtained from numerical integration.}\label{tab.accEP4}
	\scriptsize 
	\begin{tabular*}{\textwidth}{@{\extracolsep\fill}lcr cr cr cr cr@{\extracolsep\fill}} \toprule
		$\sigma$	&	$E(P^4)$	&		Error 			&	$E(P^4)$	&		Error 	&	$E(P^4)$	&		Error 		&	$E(P^4)$	&		Error 		&	$E(P^4)$	&		Error 		\\ 
		\midrule
		& \multicolumn{2}{c}{$\mu = 0$} & \multicolumn{2}{c}{$\mu = 1.25$} & \multicolumn{2}{c}{$\mu = 2.5$} & \multicolumn{2}{c}{$\mu = 3.75$} & \multicolumn{2}{c}{$\mu = 5.00$}  \\
		0.001	&	0.062456	&	$	-4.4\times 10^{-5	}$	&	0.365052	&	$	-1.5\times 10^{-8	}$	&	0.729381	&	$	-3.3\times 10^{-11	}$	&	0.911210	&	$	6.8\times 10^{-12	}$	&	0.973496	&	$	7.4\times 10^{-12	}$	\\ 
		0.010	&	0.062476	&	$	-3.3\times 10^{-5	}$	&	0.365054	&	$	-1.5\times 10^{-8	}$	&	0.729374	&	$	-3.4\times 10^{-11	}$	&	0.911206	&	$	6.7\times 10^{-12	}$	&	0.973495	&	$	7.2\times 10^{-12	}$	\\ 
		0.100	&	0.063427	&	$	-6.8\times 10^{-6	}$	&	0.365237	&	$	-3.2\times 10^{-8	}$	&	0.728696	&	$	-7.3\times 10^{-11	}$	&	0.910839	&	$	6.7\times 10^{-12	}$	&	0.973370	&	$	7.2\times 10^{-12	}$	\\ 
		0.250	&	0.068223	&	$	-2.8\times 10^{-6	}$	&	0.366230	&	$	-2.9\times 10^{-7	}$	&	0.725165	&	$	-2.7\times 10^{-9	}$	&	0.908882	&	$	6.7\times 10^{-12	}$	&	0.972699	&	$	7.2\times 10^{-12	}$	\\ 
		0.500	&	0.083944	&	$	-1.4\times 10^{-6	}$	&	0.369890	&	$	-3.3\times 10^{-7	}$	&	0.713441	&	$	-6.7\times 10^{-8	}$	&	0.901792	&	$	-3.8\times 10^{-11	}$	&	0.970190	&	$	7.4\times 10^{-12	}$	\\ 
		1.000	&	0.131594	&	$	-7.6\times 10^{-7	}$	&	0.383118	&	$	-4.8\times 10^{-7	}$	&	0.678263	&	$	-1.4\times 10^{-7	}$	&	0.873391	&	$	-1.8\times 10^{-8	}$	&	0.958497	&	$	-5.3\times 10^{-10	}$	\\ 
		1.500	&	0.182192	&	$	-5.5\times 10^{-7	}$	&	0.398645	&	$	-4.4\times 10^{-7	}$	&	0.644062	&	$	-2.1\times 10^{-7	}$	&	0.832344	&	$	-6.1\times 10^{-8	}$	&	0.935361	&	$	-1.0\times 10^{-8	}$	\\ 
		2.500	&	0.262113	&	$	-3.5\times 10^{-7	}$	&	0.424049	&	$	-3.3\times 10^{-7	}$	&	0.598329	&	$	-2.4\times 10^{-7	}$	&	0.752938	&	$	-1.4\times 10^{-7	}$	&	0.866777	&	$	-6.5\times 10^{-8	}$	\\
		\midrule
		&	\multicolumn{2}{c}{$\mu	=		0.25$}		&	\multicolumn{2}{c}{$\mu	=		1.5$}		&	\multicolumn{2}{c}{$\mu	=		2.75$}		&	\multicolumn{2}{c}{$\mu	=		4.0$}		&	\multicolumn{2}{c}{$\mu		=		5.25$}	\\
		0.001	&	0.099883	&	$	-1.5\times 10^{-8	}$	&	0.446796	&	$	-2.4\times 10^{-8	}$	&	0.780461	&	$	3.8\times 10^{-12	}$	&	0.929973	&	$	7.0\times 10^{-12	}$	&	0.979282	&	$	7.4\times 10^{-12	}$	\\ 
		0.010	&	0.099893	&	$	-1.9\times 10^{-8	}$	&	0.446794	&	$	-1.1\times 10^{-6	}$	&	0.780455	&	$	3.8\times 10^{-12	}$	&	0.929970	&	$	6.9\times 10^{-12	}$	&	0.979281	&	$	7.2\times 10^{-12	}$	\\ 
		0.100	&	0.100917	&	$	-3.4\times 10^{-7	}$	&	0.446654	&	$	-1.2\times 10^{-6	}$	&	0.779806	&	$	1.8\times 10^{-12	}$	&	0.929668	&	$	6.9\times 10^{-12	}$	&	0.979183	&	$	7.3\times 10^{-12	}$	\\ 
		0.250	&	0.106173	&	$	-1.7\times 10^{-6	}$	&	0.446000	&	$	-6.2\times 10^{-7	}$	&	0.776407	&	$	-1.5\times 10^{-10	}$	&	0.928057	&	$	6.9\times 10^{-12	}$	&	0.978652	&	$	7.3\times 10^{-12	}$	\\ 
		0.500	&	0.122943	&	$	-1.2\times 10^{-6	}$	&	0.444530	&	$	-3.5\times 10^{-7	}$	&	0.764844	&	$	-2.5\times 10^{-8	}$	&	0.922162	&	$	1.7\times 10^{-12	}$	&	0.976660	&	$	7.4\times 10^{-12	}$	\\ 
		1.000	&	0.170878	&	$	-7.5\times 10^{-7	}$	&	0.444461	&	$	-4.0\times 10^{-7	}$	&	0.727644	&	$	-1.0\times 10^{-7	}$	&	0.897574	&	$	-9.7\times 10^{-9	}$	&	0.967219	&	$	-2.2\times 10^{-10	}$	\\ 
		1.500	&	0.219043	&	$	-5.5\times 10^{-7	}$	&	0.448498	&	$	-4.0\times 10^{-7	}$	&	0.688289	&	$	-1.7\times 10^{-7	}$	&	0.859412	&	$	-4.5\times 10^{-8	}$	&	0.947610	&	$	-6.9\times 10^{-9	}$	\\ 
		2.500	&	0.292034	&	$	-3.5\times 10^{-7	}$	&	0.458973	&	$	-3.1\times 10^{-7	}$	&	0.631745	&	$	-2.2\times 10^{-7	}$	&	0.779319	&	$	-1.2\times 10^{-7	}$	&	0.884109	&	$	-5.5\times 10^{-8	}$	\\  
		\midrule
		&	\multicolumn{2}{c}{$\mu	=		0.5$}		&	\multicolumn{2}{c}{$\mu	=		1.75$}		&	\multicolumn{2}{c}{$\mu	=		3.0$}		&	\multicolumn{2}{c}{$\mu	=		4.25$}		&	\multicolumn{2}{c}{$\mu		=		5.5$}	\\
		0.001	&	0.150122	&	$	-3.3\times 10^{-8	}$	&	0.526820	&	$	-2.8\times 10^{-7	}$	&	0.823370	&	$	6.0\times 10^{-12	}$	&	0.944921	&	$	7.2\times 10^{-12	}$	&	0.983819	&	$	7.4\times 10^{-12	}$	\\ 
		0.010	&	0.150132	&	$	-3.3\times 10^{-8	}$	&	0.526816	&	$	-2.9\times 10^{-7	}$	&	0.823364	&	$	6.0\times 10^{-12	}$	&	0.944919	&	$	7.0\times 10^{-12	}$	&	0.983818	&	$	7.3\times 10^{-12	}$	\\ 
		0.100	&	0.151122	&	$	-8.1\times 10^{-9	}$	&	0.526418	&	$	-5.1\times 10^{-7	}$	&	0.822775	&	$	5.9\times 10^{-12	}$	&	0.944674	&	$	7.0\times 10^{-12	}$	&	0.983740	&	$	7.3\times 10^{-12	}$	\\ 
		0.250	&	0.156186	&	$	-3.8\times 10^{-7	}$	&	0.524421	&	$	-5.3\times 10^{-7	}$	&	0.819669	&	$	-2.4\times 10^{-12	}$	&	0.943363	&	$	7.0\times 10^{-12	}$	&	0.983322	&	$	7.3\times 10^{-12	}$	\\ 
		0.500	&	0.172137	&	$	-8.7\times 10^{-7	}$	&	0.518570	&	$	-3.2\times 10^{-7	}$	&	0.808886	&	$	-7.2\times 10^{-9	}$	&	0.938532	&	$	6.7\times 10^{-12	}$	&	0.981748	&	$	7.4\times 10^{-12	}$	\\ 
		1.000	&	0.216440	&	$	-7.1\times 10^{-7	}$	&	0.506114	&	$	-3.2\times 10^{-7	}$	&	0.771920	&	$	-7.1\times 10^{-8	}$	&	0.917665	&	$	-5.2\times 10^{-9	}$	&	0.974180	&	$	-7.9\times 10^{-11	}$	\\ 
		1.500	&	0.259610	&	$	-5.4\times 10^{-7	}$	&	0.498773	&	$	-3.5\times 10^{-7	}$	&	0.729508	&	$	-1.4\times 10^{-7	}$	&	0.883015	&	$	4.6\times 10^{-8	}$	&	0.957764	&	$	-4.4\times 10^{-9	}$	\\ 
		2.500	&	0.323398	&	$	-3.5\times 10^{-7	}$	&	0.494138	&	$	-3.0\times 10^{-7	}$	&	0.664104	&	$	-2.0\times 10^{-7	}$	&	0.803925	&	$	-1.1\times 10^{-7	}$	&	0.899710	&	$	-4.5\times 10^{-8	}$	\\ 
		\midrule
		&	\multicolumn{2}{c}{$\mu	=		0.75$}		&	\multicolumn{2}{c}{$\mu	=		2.0$}		&	\multicolumn{2}{c}{$\mu	=		3.25$}		&	\multicolumn{2}{c}{$\mu	=		4.5$}		&	\multicolumn{2}{c}{$\mu		=		5.75$}	\\
		0.001	&	0.212783	&	$	-3.3\times 10^{-9	}$	&	0.601871	&	$	-1.5\times 10^{-8	}$	&	0.858846	&	$	6.4\times 10^{-12	}$	&	0.956771	&	$	7.2\times 10^{-12	}$	&	0.987370	&	$	7.5\times 10^{-12	}$	\\ 
		0.010	&	0.212791	&	$	-3.2\times 10^{-9	}$	&	0.601865	&	$	-1.5\times 10^{-8	}$	&	0.858841	&	$	6.3\times 10^{-12	}$	&	0.956769	&	$	7.1\times 10^{-12	}$	&	0.987369	&	$	7.3\times 10^{-12	}$	\\ 
		0.100	&	0.213603	&	$	1.9\times 10^{-10	}$	&	0.601295	&	$	-3.0\times 10^{-8	}$	&	0.858325	&	$	6.3\times 10^{-12	}$	&	0.956572	&	$	7.1\times 10^{-12	}$	&	0.987308	&	$	7.3\times 10^{-12	}$	\\ 
		0.250	&	0.217764	&	$	-3.6\times 10^{-8	}$	&	0.598380	&	$	-1.9\times 10^{-7	}$	&	0.855592	&	$	5.9\times 10^{-12	}$	&	0.955516	&	$	7.1\times 10^{-12	}$	&	0.986979	&	$	7.3\times 10^{-12	}$	\\ 
		0.500	&	0.230937	&	$	-5.4\times 10^{-7	}$	&	0.589322	&	$	-2.4\times 10^{-7	}$	&	0.845942	&	$	-1.7\times 10^{-9	}$	&	0.951601	&	$	7.3\times 10^{-12	}$	&	0.985739	&	$	7.4\times 10^{-12	}$	\\ 
		1.000	&	0.267701	&	$	-6.5\times 10^{-7	}$	&	0.566519	&	$	-2.5\times 10^{-7	}$	&	0.810896	&	$	-4.7\times 10^{-8	}$	&	0.934180	&	$	-2.6\times 10^{-9	}$	&	0.979708	&	$	-2.4\times 10^{-11	}$	\\ 
		1.500	&	0.303466	&	$	-5.2\times 10^{-7	}$	&	0.548641	&	$	-3.0\times 10^{-7	}$	&	0.767373	&	$	-1.1\times 10^{-7	}$	&	0.903362	&	$	-2.3\times 10^{-8	}$	&	0.966114	&	$	-2.8\times 10^{-9	}$	\\ 
		2.500	&	0.356008	&	$	-3.5\times 10^{-7	}$	&	0.529268	&	$	-2.8\times 10^{-7	}$	&	0.695202	&	$	-1.8\times 10^{-7	}$	&	0.826710	&	$	-9.1\times 10^{-8	}$	&	0.913659	&	$	-3.7\times 10^{-8	}$	\\ 
		\midrule
		&	\multicolumn{2}{c}{$\mu	=		1.0$}		&	\multicolumn{2}{c}{$\mu	=		2.25$}		&	\multicolumn{2}{c}{$\mu	=		3.5$}		&	\multicolumn{2}{c}{$\mu	=		4.75$}		&	\multicolumn{2}{c}{$\mu		=		6.0$}	\\
		0.001	&	0.285633	&	$	9.0\times 10^{-9	}$	&	0.669766	&	$	-7.5\times 10^{-10	}$	&	0.887806	&	$	6.6\times 10^{-12	}$	&	0.966129	&	$	7.3\times 10^{-12	}$	&	0.990146	&	$	7.5\times 10^{-12	}$	\\ 
		0.010	&	0.285639	&	$	9.0\times 10^{-9	}$	&	0.669760	&	$	-7.6\times 10^{-10	}$	&	0.887802	&	$	6.6\times 10^{-12	}$	&	0.966127	&	$	7.1\times 10^{-12	}$	&	0.990146	&	$	7.3\times 10^{-12	}$	\\ 
		0.100	&	0.286161	&	$	3.4\times 10^{-9	}$	&	0.669101	&	$	-1.5\times 10^{-9	}$	&	0.887362	&	$	6.6\times 10^{-12	}$	&	0.965970	&	$	7.2\times 10^{-12	}$	&	0.990098	&	$	7.3\times 10^{-12	}$	\\ 
		0.250	&	0.288866	&	$	-5.3\times 10^{-8	}$	&	0.665698	&	$	-3.1\times 10^{-8	}$	&	0.885025	&	$	6.5\times 10^{-12	}$	&	0.965126	&	$	7.2\times 10^{-12	}$	&	0.989840	&	$	7.3\times 10^{-12	}$	\\ 
		0.500	&	0.297692	&	$	-3.6\times 10^{-7	}$	&	0.654724	&	$	-1.4\times 10^{-7	}$	&	0.876653	&	$	-3.0\times 10^{-10	}$	&	0.961981	&	$	7.3\times 10^{-12	}$	&	0.988865	&	$	7.4\times 10^{-12	}$	\\ 
		1.000	&	0.323692	&	$	-5.7\times 10^{-7	}$	&	0.624280	&	$	-1.9\times 10^{-7	}$	&	0.844632	&	$	-2.9\times 10^{-8	}$	&	0.947630	&	$	-1.2\times 10^{-9	}$	&	0.984083	&	$	-3.4\times 10^{-12	}$	\\ 
		1.500	&	0.350041	&	$	-4.8\times 10^{-7	}$	&	0.597310	&	$	-2.5\times 10^{-7	}$	&	0.801677	&	$	-8.1\times 10^{-8	}$	&	0.920714	&	$	-1.6\times 10^{-8	}$	&	0.972932	&	$	-1.7\times 10^{-9	}$	\\ 
		2.500	&	0.389640	&	$	-3.4\times 10^{-7	}$	&	0.564087	&	$	-2.6\times 10^{-7	}$	&	0.724861	&	$	-1.6\times 10^{-7	}$	&	0.847657	&	$	-7.7\times 10^{-8	}$	&	0.926047	&	$	-3.0\times 10^{-8	}$	\\ 
		\bottomrule
	\end{tabular*}
\end{table}	
\end{center}
\newpage

	\subsection{Skewness and Kurtosis}

\subsubsection{Skewness}	

\begin{center}
\begin{table}[h!]\caption{Estimates of Skewness$(P)$ obtained from piecewise (P.W) approximation of $p(x)$ compared to estimate of Skewness$(P)$ obtained from numerical integration (N.I).}\label{tab.accskew}
	\scriptsize
	\begin{tabular*}{\textwidth}{@{\extracolsep\fill}lcr cr cr cr cr@{\extracolsep\fill}} \toprule
		$\sigma$	&	P.W.	&		N.I. 			&	P.W.	&		N.I.  	&	P.W.	&		N.I.  		&	P.W.	&		N.I.  		&	P.W.	&		N.I.  		\\ 
		\midrule
		& \multicolumn{2}{c}{$\mu = 0$} & \multicolumn{2}{c}{$\mu = 1.25$} & \multicolumn{2}{c}{$\mu = 2.5$} & \multicolumn{2}{c}{$\mu = 3.75$} & \multicolumn{2}{c}{$\mu = 5.00$}  \\
		0.001	&	NaN	&	-0.059301	&	-793.892	&	-0.667082	&	-14.2326	&	-17.0125	&	0.000000	&	-600.488	&	0.000000	&	-19977.2	\\										
		0.010	&	0.000000	&	-0.000059	&	-0.815861	&	-0.017308	&	-0.039774	&	-0.042376	&	-0.028628	&	-0.637634	&	-0.028711	&	-24.6356	\\										
		0.100	&	0.000000	&	0.000000	&	-0.165709	&	-0.165241	&	-0.254678	&	-0.254666	&	-0.287411	&	-0.288009	&	-0.297573	&	-0.321755	\\										
		0.250	&	0.000000	&	0.000000	&	-0.398809	&	-0.398371	&	-0.637851	&	-0.637794	&	-0.734077	&	-0.734111	&	-0.765221	&	-0.766599	\\										
		0.500	&	0.000000	&	0.000000	&	-0.700880	&	-0.700826	&	-1.258130	&	-1.257990	&	-1.569970	&	-1.569970	&	-1.693560	&	-1.693680	\\										
		1.000	&	0.000000	&	0.000000	&	-0.918570	&	-0.918565	&	-1.995830	&	-1.995810	&	-3.285780	&	-3.285750	&	-4.575250	&	-4.575230	\\										
		1.500	&	0.000000	&	0.000000	&	-0.867341	&	-0.867340	&	-1.947800	&	-1.947800	&	-3.545410	&	-3.545400	&	-6.125330	&	-6.125310	\\										
		2.500	&	0.000000	&	0.000000	&	-0.664027	&	-0.664027	&	-1.434290	&	-1.434290	&	-2.460820	&	-2.460820	&	-4.005940	&	-4.005940	\\	\midrule									
		&	\multicolumn{2}{c}{$\mu	=		0.25$}		&	\multicolumn{2}{c}{$\mu	=		1.5$}		&	\multicolumn{2}{c}{$\mu	=		2.75$}		&	\multicolumn{2}{c}{$\mu	=		4.0$}		&	\multicolumn{2}{c}{$\mu		=		5.25$}	\\
		0.001	&	830.004	&	-0.088667	&	16962.5	&	-1.212320	&	-1.370750	&	-34.1905	&	0.000000	&	-1246.16	&	-3.171340	&	-36844.1	\\										
		0.010	&	0.792044	&	-0.003818	&	-36.2603	&	-0.020275	&	-0.027774	&	-0.060459	&	-0.028887	&	-1.298900	&	-0.026951	&	-51.8039	\\										
		0.100	&	-0.037189	&	-0.036857	&	-0.232955	&	-0.189570	&	-0.264387	&	-0.264418	&	-0.290521	&	-0.291768	&	-0.298491	&	-0.349454	\\										
		0.250	&	-0.086747	&	-0.086732	&	-0.462361	&	-0.460887	&	-0.665807	&	-0.665802	&	-0.743540	&	-0.743611	&	-0.768076	&	-0.770979	\\										
		0.500	&	-0.144372	&	-0.144371	&	-0.828815	&	-0.828719	&	-1.340510	&	-1.340420	&	-1.605970	&	-1.605980	&	-1.705690	&	-1.705930	\\										
		1.000	&	-0.178212	&	-0.178212	&	-1.117090	&	-1.117080	&	-2.238940	&	-2.238920	&	-3.555900	&	-3.555870	&	-4.797870	&	-4.797860	\\										
		1.500	&	-0.167221	&	-0.167221	&	-1.058540	&	-1.058540	&	-2.213630	&	-2.213620	&	-3.964980	&	-3.965120	&	-6.809510	&	-6.809490	\\										
		2.500	&	-0.129654	&	-0.129654	&	-0.805732	&	-0.805732	&	-1.612900	&	-1.612900	&	-2.716830	&	-2.716830	&	-4.415180	&	-4.415180	\\	\midrule									
		&	\multicolumn{2}{c}{$\mu	=		0.5$}		&	\multicolumn{2}{c}{$\mu	=		1.75$}		&	\multicolumn{2}{c}{$\mu	=		3.0$}		&	\multicolumn{2}{c}{$\mu	=		4.25$}		&	\multicolumn{2}{c}{$\mu		=		5.5$}	\\
		0.001	&	72.334	&	-0.138492	&	-46345.8	&	-2.266410	&	-0.139677	&	-69.3929	&	0.000000	&	-2572.76	&	-3.333680	&	-63848.1	\\										
		0.010	&	0.062054	&	-0.007484	&	-19.0441	&	-0.023401	&	-0.027295	&	-0.096421	&	-0.029146	&	-2.686260	&	-0.036663	&	-108.968	\\										
		0.100	&	-0.072677	&	-0.072649	&	-0.243807	&	-0.210444	&	-0.272221	&	-0.272289	&	-0.292968	&	-0.295576	&	-0.299210	&	-0.406728	\\										
		0.250	&	-0.171576	&	-0.171567	&	-0.517992	&	-0.515918	&	-0.688702	&	-0.688706	&	-0.751028	&	-0.751177	&	-0.770311	&	-0.776431	\\										
		0.500	&	-0.287744	&	-0.287740	&	-0.949656	&	-0.949515	&	-1.412370	&	-1.412320	&	-1.635374	&	-1.635390	&	-1.715290	&	-1.715800	\\										
		1.000	&	-0.357816	&	-0.357814	&	-1.323300	&	-1.323290	&	-2.490940	&	-2.490920	&	-3.823459	&	-3.823420	&	-5.002180	&	-5.002180	\\										
		1.500	&	-0.335975	&	-0.335974	&	-1.259940	&	-1.259940	&	-2.502700	&	-2.502700	&	-4.427979	&	-4.427970	&	-7.559880	&	-7.559780	\\										
		2.500	&	-0.260084	&	-0.260084	&	-0.952517	&	-0.952517	&	-1.802980	&	-1.802980	&	-2.995759	&	-2.995760	&	-4.869750	&	-4.869740	\\	\midrule									
		&	\multicolumn{2}{c}{$\mu	=		0.75$}		&	\multicolumn{2}{c}{$\mu	=		2.0$}		&	\multicolumn{2}{c}{$\mu	=		3.25$}		&	\multicolumn{2}{c}{$\mu	=		4.5$}		&	\multicolumn{2}{c}{$\mu		=		5.75$}	\\
		0.001	&	-350.547	&	-0.225599	&	-1742.91	&	-4.364530	&	-0.014357	&	-141.899	&	-0.173062	&	-5210.67	&	-7.019330	&	-102217	\\										
		0.010	&	-0.354979	&	-0.010974	&	-1.691060	&	-0.027217	&	-0.027780	&	-0.169836	&	-0.029414	&	-5.599510	&	-0.035089	&	-229.040000	\\										
		0.100	&	-0.106529	&	-0.106428	&	-0.231351	&	-0.228045	&	-0.278486	&	-0.278625	&	-0.294889	&	-0.300357	&	-0.299776	&	-0.526737	\\										
		0.250	&	-0.252743	&	-0.252739	&	-0.564751	&	-0.563484	&	-0.707254	&	-0.707262	&	-0.756932	&	-0.757244	&	-0.772057	&	-0.784975	\\										
		0.500	&	-0.429064	&	-0.429053	&	-1.062140	&	-1.061970	&	-1.474100	&	-1.474070	&	-1.659170	&	-1.659200	&	-1.722860	&	-1.723920	\\										
		1.000	&	-0.540187	&	-0.540185	&	-1.538120	&	-1.538110	&	-2.750680	&	-2.750660	&	-4.084890	&	-4.084860	&	-5.186650	&	-5.186660	\\										
		1.500	&	-0.507820	&	-0.507819	&	-1.473680	&	-1.473680	&	-2.818470	&	-2.818480	&	-4.939170	&	-4.939160	&	-8.379630	&	-8.379480	\\										
		2.500	&	-0.392076	&	-0.392075	&	-1.105430	&	-1.105430	&	-2.006300	&	-2.006300	&	-3.300950	&	-3.300950	&	-5.376360	&	-5.376360	\\	\midrule									
		&	\multicolumn{2}{c}{$\mu	=		1.0$}		&	\multicolumn{2}{c}{$\mu	=		2.25$}		&	\multicolumn{2}{c}{$\mu	=		3.5$}		&	\multicolumn{2}{c}{$\mu	=		4.75$}		&	\multicolumn{2}{c}{$\mu		=		6.0$}	\\
		0.001	&	435.085	&	-0.380924	&	-151.884	&	-8.572380	&	0.000000	&	-291.578	&	-0.722125	&	-10357.6	&	14.7975	&	-149558	\\										
		0.010	&	0.415796	&	-0.014243	&	-0.176634	&	-0.032803	&	-0.028238	&	-0.321606	&	-0.029240	&	-11.7310	&	-0.014794	&	-480.461	\\										
		0.100	&	-0.137300	&	-0.137453	&	-0.242961	&	-0.242667	&	-0.283467	&	-0.283755	&	-0.296394	&	-0.307884	&	-0.300220	&	-0.779660	\\										
		0.250	&	-0.328785	&	-0.328731	&	-0.604286	&	-0.603922	&	-0.722166	&	-0.722183	&	-0.761576	&	-0.762231	&	-0.773422	&	-0.800708	\\										
		0.500	&	-0.567190	&	-0.567164	&	-1.165220	&	-1.165050	&	-1.526339	&	-1.526330	&	-1.678300	&	-1.678350	&	-1.728810	&	-1.731050	\\										
		1.000	&	-0.726673	&	-0.726670	&	-1.762190	&	-1.762180	&	-3.016413	&	-3.016380	&	-4.336620	&	-4.336590	&	-5.350560	&	-5.350600	\\										
		1.500	&	-0.684374	&	-0.684373	&	-1.702100	&	-1.702100	&	-3.164690	&	-3.164680	&	-5.503380	&	-5.503360	&	-9.270860	&	-9.271500	\\										
		2.500	&	-0.526441	&	-0.526441	&	-1.265600	&	-1.265600	&	-2.224833	&	-2.224830	&	-3.636210	&	-3.635680	&	-5.942780	&	-5.942770	\\
		\bottomrule
	\end{tabular*}
\end{table}	
\end{center}
	
\newpage
\subsubsection{Kurtosis}

\begin{center}
\begin{table}[h!]\caption{Estimates of Kurtosis$(P)$ obtained from piecewise (P.W.) approximation of $p(x)$ compared to estimate of Kurtosis$(P)$ obtained from numerical integration (N.I.).}\label{tab.acckurt} 
	\scriptsize 
	\begin{tabular*}{\textwidth}{@{\extracolsep\fill}lcr cr cr cr cr@{\extracolsep\fill}} \toprule
		$\sigma$	&	P.W.	&		N.I. 			&	P.W.	&		N.I.  	&	P.W.	&		N.I.  		&	P.W.	&		N.I.  		&	P.W.	&		N.I.  		\\ 
		\midrule
		0.001	&	-616668	&	121.617	&	-420787	&	2991.06	&	-636577	&	224125	&	-3496.87	&	259482	&	0.000000	&	276097	\\																																		
		0.010	&	-856080	&	3.011660	&	-442.290	&	3.301710	&	-61.0697	&	25.3095	&	3.321160	&	2653.74	&	0.000000	&	367302	\\																																		
		0.100	&	0.000000	&	0.000000	&	2.780680	&	3.034590	&	3.096950	&	3.111810	&	3.145340	&	3.404200	&	3.158360	&	39.0431	\\																																		
		0.250	&	0.000000	&	0.000000	&	3.149120	&	3.191330	&	3.674980	&	3.685030	&	3.956290	&	3.962050	&	4.053710	&	4.842450	\\																																		
		0.500	&	0.000000	&	0.000000	&	3.493000	&	3.496450	&	5.572690	&	5.584860	&	7.476750	&	7.477540	&	8.420690	&	8.449350	\\																																		
		1.000	&	0.000000	&	0.000000	&	3.374880	&	3.375370	&	8.339780	&	8.340770	&	21.0096	&	21.0128	&	45.1782	&	45.1852	\\																																		
		1.500	&	0.000000	&	0.000000	&	2.804230	&	2.804360	&	6.926430	&	6.926670	&	19.8409	&	19.8416	&	62.0938	&	62.0965	\\																																		
		2.500	&	0.000000	&	0.000000	&	2.016500	&	2.016530	&	3.925600	&	3.925650	&	8.745550	&	8.745630	&	21.1813	&	21.1815	\\	\midrule																																	
		&	\multicolumn{2}{c}{$\mu	=		0.25$}		&	\multicolumn{2}{c}{$\mu	=		1.5$}		&	\multicolumn{2}{c}{$\mu	=		2.75$}		&	\multicolumn{2}{c}{$\mu	=		4.0$}		&	\multicolumn{2}{c}{$\mu		=		5.25$}	\\																								
		0.001	&	-118555	&	204.640	&	-43547	&	6636.970000	&	-75346.0	&	568364	&	-4562.89	&	684956&	183213	&	624387	\\																																		
		0.010	&	-1246.01	&	3.020000	&	-122042	&	3.669950	&	-4.591260	&	59.6883	&	2.280830	&	7062.33	&	0.000000	&	990585	\\																																		
		0.100	&	2.076390	&	2.982990	&	-8.284760	&	3.051840	&	3.118250	&	3.125320	&	3.148970	&	3.837920	&	3.167470	&	100.084	\\																																		
		0.250	&	2.776590	&	2.899890	&	3.147170	&	3.296550	&	3.757980	&	3.759480	&	3.985380	&	4.000660	&	4.062870	&	6.191260	\\																																		
		0.500	&	2.660590	&	2.667180	&	3.848050	&	3.853160	&	6.015620	&	6.025370	&	7.738920	&	7.739670	&	8.520460	&	8.597360	\\																																		
		1.000	&	2.184700	&	2.185100	&	3.982070	&	3.982600	&	10.088500	&	10.08970	&	24.9067	&	24.9104	&	51.1256	&	51.1332	\\																																		
		1.500	&	1.858520	&	1.858620	&	3.290810	&	3.290950	&	8.477790	&	8.478090	&	24.7707	&	24.7796	&	78.6087	&	78.6125	\\																																		
		2.500	&	1.520380	&	1.520410	&	2.260920	&	2.260950	&	4.574260	&	4.574310	&	10.3673	&	10.3674	&	25.5622	&	25.5624	\\	\midrule																																	
		&	\multicolumn{2}{c}{$\mu	=		0.5$}		&	\multicolumn{2}{c}{$\mu	=		1.75$}		&	\multicolumn{2}{c}{$\mu	=		3.0$}		&	\multicolumn{2}{c}{$\mu	=		4.25$}		&	\multicolumn{2}{c}{$\mu		=		5.5$}	\\																								
		0.001	&	-117410	&	367.811	&	-162063	&	15294.4	&	-9062.18	&	146063	&	12013.9	&	179503	&	0.000000	&	12996	\\																																		
		0.010	&	-1180.64	&	3.036400	&	-49317.8	&	4.543980	&	2.057180	&	149.056	&	3.603180	&	18893.0	&	-164.434	&	267003	\\																																		
		0.100	&	2.966820	&	2.990770	&	-5.592240	&	3.068600	&	3.128020	&	3.142510	&	3.151890	&	4.994570	&	3.130600	&	265.422	\\																																		
		0.250	&	2.908710	&	2.941710	&	3.173400	&	3.402740	&	3.822740	&	3.823240	&	4.008710	&	4.049440	&	4.070870	&	9.821820	\\																																		
		0.500	&	2.768840	&	2.774210	&	4.245400	&	4.253280	&	6.436280	&	6.442750	&	7.960550	&	7.962110	&	8.600340	&	8.807310	\\																																		
		1.000	&	2.323680	&	2.324090	&	4.751680	&	4.752270	&	12.186100	&	12.1877	&	29.3036	&	29.3083	&	57.2219	&	57.2303	\\																																		
		1.500	&	1.968580	&	1.968680	&	3.913120	&	3.913280	&	10.423400	&	10.4238	&	31.0307	&	31.0267	&	99.6597	&	99.6602	\\																																		
		2.500	&	1.579730	&	1.579750	&	2.564490	&	2.564520	&	5.351500	&	5.351550	&	12.3297	&	12.3299	&	30.9772	&	30.9775	\\	\midrule																																	
		&	\multicolumn{2}{c}{$\mu	=		0.75$}		&	\multicolumn{2}{c}{$\mu	=		2.0$}		&	\multicolumn{2}{c}{$\mu	=		3.25$}		&	\multicolumn{2}{c}{$\mu	=		4.5$}		&	\multicolumn{2}{c}{$\mu		=		5.75$}	\\																								
		0.001	&	310570	&	702.869	&	-542399	&	36588.3	&	-1065.43	&	379098	&	-63706.5	&	460157	&	887784	&	243369	\\																																		
		0.010	&	310.1330	&	3.070030	&	-5101.33	&	6.666550	&	2.822720	&	383.583	&	3.184430	&	50685.0	&	443.768	&	719214	\\																																		
		0.100	&	3.017190	&	3.002780	&	2.076390	&	3.084240	&	3.135120	&	3.172380	&	3.153890	&	8.097200	&	3.189370	&	713.368	\\																																		
		0.250	&	3.003800	&	3.007740	&	3.346970	&	3.505030	&	3.876220	&	3.877080	&	4.027320	&	4.136330	&	4.076620	&	19.6437	\\																																		
		0.500	&	2.946620	&	2.950610	&	4.673810	&	4.684610	&	6.823970	&	6.827510	&	8.145180	&	8.149210	&	8.663970	&	9.222270	\\																																		
		1.000	&	2.561430	&	2.561860	&	5.711600	&	5.712290	&	14.6783	&	14.6803	&	34.1823	&	34.1878	&	63.3369	&	63.3474	\\																																		
		1.500	&	2.157020	&	2.157130	&	4.699750	&	4.699930	&	12.8685	&	12.8691	&	38.9961	&	38.9977	&	126.448	&	126.446	\\																																		
		2.500	&	1.680290	&	1.680320	&	2.935620	&	2.935660	&	6.283030	&	6.283090	&	14.7135	&	14.7136	&	37.7024	&	37.7028	\\	\midrule																																	
		&	\multicolumn{2}{c}{$\mu	=		1.0$}		&	\multicolumn{2}{c}{$\mu	=		2.25$}		&	\multicolumn{2}{c}{$\mu	=		3.5$}		&	\multicolumn{2}{c}{$\mu	=		4.75$}		&	\multicolumn{2}{c}{$\mu		=		6.0$}	\\																								
		0.001	&	-210310	&	1414.10	&	-558892	&	89841.7	&	0.000000	&	990527	&	0.000000	&	114993	&	-239975	&	404306	\\																																		
		0.010	&	-17.7853	&	3.141920	&	-556.894	&	11.9408	&	2.980570	&	1003.80	&	-8.489280	&	136335	&	-1199.53	&	193126	\\																																		
		0.100	&	3.018400	&	3.017830	&	2.986990	&	3.098540	&	3.140750	&	3.238600	&	3.155570	&	16.4591	&	3.145170	&	1928.11	\\																																		
		0.250	&	3.087680	&	3.092950	&	3.547050	&	3.599910	&	3.920310	&	3.922500	&	4.042080	&	4.334900	&	4.076220	&	46.2559	\\																																		
		0.500	&	3.189940	&	3.193100	&	5.120740	&	5.133300	&	7.171800	&	7.173440	&	8.297070	&	8.307780	&	8.714510	&	10.2229	\\																																		
		1.000	&	2.907330	&	2.907780	&	6.894940	&	6.895760	&	17.6078&	17.6102	&	39.497900	&	39.5041	&	69.3386	&	69.3548\\																																		
		1.500	&	2.431690	&	2.431800	&	5.688020	&	5.688230	&	15.9491	&	15.9497	&	49.146900	&	49.1490	&	160.446	&	160.543	\\																																		
		2.500	&	1.824640	&	1.824670	&	3.384910	&	3.384940	&	7.400970	&	7.401040	&	17.620600	&	17.6142	&	46.0955	&	46.0959	\\
		\bottomrule
	\end{tabular*}
\end{table}	
\end{center}

\section{Code needed in paper}

\subsection{R function for finding first to fourth moments and variance}

In this section, we present functions for finding $E(P)$ and $E(P), E(P^2), E(P^3), E(P^4), \text{Var}(P)$. Note these are in C++ and users will need the R package \texttt{Rcpp}\citep{Eddelbuettel2011} for compiling the functions, before these can be used in R. The input parameters are $\mu, \sigma$, but there is an additional C++ function included for the evaluation of the standard normal cdf $\Phi(\cdot)$.

\begin{footnotesize}
	\begin{verbatim}
		#include <Rcpp.h>
		#include <cmath>
		#include <cstddef>
		#include <limits>
		#include <vector>
		
		// normal_cdf is a C++ function for calculating the standard normal CDF.
		
		// [[Rcpp::export]]
		static inline double normal_cdf(const double x) {
			// Standard normal CDF using erfc for numerical stability
			return 0.5 * std::erfc(-x / std::sqrt(2.0));
		}
		
		// EP_logitnormalkpartcplus(double mu, double sigma) is a C++ function for 
		// calculating the expectation of a logit normal random variable with 
		// location parameter mu and scale parameter sigma.
		
		// [[Rcpp::export]]
		double EP_logitnormalkpartcplus(double mu, double sigma) {
			// In part one we calculate the component functions and fixed inputs, coef_nearzero, L and k.
			const double sigma2 = sigma * sigma;
			
			double sum_EP_terms = 0.0;
			double sum_EdiffP = 0.0;
			double sum_E2diffP_terms = 0.0;
			double sum_E3diffP = 0.0;
			double L = 1.507306;
			std::size_t k = 11;
			std::vector<double> coef_nearzero = {0,0.999991820, -0.499903021,  0.332646462, -0.246745120,  
				0.189036359, -0.139389342,  0.091239940, -0.048514467,  0.018953113, -0.004725329, 0.000556454};
			
			for (std::size_t i = 1; i <= k; ++i) {
				const double N = static_cast<double>(i);
				
				double p0a = std::exp(N * mu + 0.5 * N * N * sigma2);
				double p0b = std::exp(-N * mu + 0.5 * N * N * sigma2);
				
				// if element is infinity, the cdf to be multiplied
				// is zero. Adding this line stops NaN.
				if (!std::isfinite(p0a)) p0a = 0.0;
				if (!std::isfinite(p0b)) p0b = 0.0;
				
				const double p0c = coef_nearzero[i] * N;	
				const double p1 = ((i % 2 == 1) ? 1.0 : -1.0) - p0c; // (-1)^(N-1) - p0c	
				const double p2 = p0a * normal_cdf((-L - mu - N * sigma2) / sigma);
				const double p3 = p0b * normal_cdf((mu - L - N * sigma2) / sigma);
				const double p4 = p0c * (p0a * normal_cdf((-mu - N * sigma2) / sigma));
				const double p5 = p0c * (p0b * normal_cdf((mu - N * sigma2) / sigma));
				
				// In part two, we use the results of part one to find E(P) and the first three derivatives.
				sum_EP_terms += p1 * (p2 - p3) + p4 - p5;
				
			}
			
			const double EP = normal_cdf(mu / sigma) + sum_EP_terms;
			
			return EP;
		}
		
		// EP1to4_logitnormalkpartcplus(double mu, double sigma) is a C++ function 
		// for calculating the expectation of a logit normal random variable with 
		// location parameter mu and scale parameter sigma, along with the second, 
		// third, fourth moments and the variance. 
		
		// [[Rcpp::export]]
		std::vector<double> EP1to4_logitnormalkpartcplus(double mu, double sigma) {
			// In part one we calculate the component functions.
			const double sigma2 = sigma * sigma;
			
			double sum_EP_terms = 0.0;
			double sum_EdiffP = 0.0;
			double sum_E2diffP_terms = 0.0;
			double sum_E3diffP = 0.0;
			double L = 1.507306;
			std::size_t k = 11;
			std::vector<double> coef_nearzero = {0,0.999991820, -0.499903021,  0.332646462, -0.246745120, 
				0.189036359, -0.139389342,  0.091239940, -0.048514467,  0.018953113, -0.004725329, 0.000556454};
			
			for (std::size_t i = 1; i <= k; ++i) {
				const double N = static_cast<double>(i);
				
				double p0a = std::exp(N * mu + 0.5 * N * N * sigma2);
				double p0b = std::exp(-N * mu + 0.5 * N * N * sigma2);
				
				// if element is infinity, the cdf to be multiplied
				// is zero. Adding this line stops NaN.
				if (!std::isfinite(p0a)) p0a = 0.0;
				if (!std::isfinite(p0b)) p0b = 0.0;
				
				const double p0c = coef_nearzero[i] * N;	
				const double p1 = ((i % 2 == 1) ? 1.0 : -1.0) - p0c; // (-1)^(N-1) - p0c	
				const double p2 = p0a * normal_cdf((-L - mu - N * sigma2) / sigma);
				const double p3 = p0b * normal_cdf((mu - L - N * sigma2) / sigma);
				const double p4 = p0c * (p0a * normal_cdf((-mu - N * sigma2) / sigma));
				const double p5 = p0c * (p0b * normal_cdf((mu - N * sigma2) / sigma));
				
				// In part two, we use the results of part one to find E(P) and the first three derivatives.
				sum_EP_terms += p1 * (p2 - p3) + p4 - p5;
				sum_EdiffP += N * (p1 * (p2 + p3) + p4 + p5);
				sum_E2diffP_terms += (N * N) * (p1 * (p2 - p3) + p4 - p5);
				sum_E3diffP += (N * N * N) * (p1 * (p2 + p3) + p4 + p5);
			}
			
			const double EP = normal_cdf(mu / sigma) + sum_EP_terms;
			const double EdiffP = sum_EdiffP;
			const double E2diffP = sum_E2diffP_terms;
			const double E3diffP = sum_E3diffP;
			
			// In part three, we find higher order moments.
			const double EP2 = EP - EdiffP;
			const double EP3 = EP2 - 0.5 * (EdiffP - E2diffP);
			const double EP4 = EP3 - (1.0 / 3.0) * (EdiffP - E2diffP - 0.5 * (E2diffP - E3diffP));
			
			// In part four, we calculate central moments (Var, skewness and kurtosis)
			const double VarP = EP2 - EP * EP;
			
			std::vector<double> results;
			results.reserve(5);
			results.push_back(EP);
			results.push_back(EP2);
			results.push_back(EP3);
			results.push_back(EP4);
			results.push_back(VarP);
			
			return results;
		}
	\end{verbatim}
\end{footnotesize}	

\subsection{Other code used in this paper}

Additional code for implementing the EP examples to illustrate EP logistic regression is (1) requires the first-third moment of logit-normal random variables can be found at \url{https://github.com/johnbholmes/Logit-normal-moments-with-EP-applications}. This contains R code for implementing expectation propagation, and a Gibbs sampler for MCMC based comparison.

\end{document}